\newcommand{\ktzero}{KT-$0$\xspace}
\newcommand{\ktone}{KT-$1$\xspace}
\newcommand{\kttwo}{KT-$2$\xspace}
\newcommand{\ktrho}{KT-$\rho$\xspace}
\newcommand{\cen}{\text{\sffamily center}}
\newcommand{\id}{\text{\sffamily ID}}
\newcommand{\n}{\text{\sffamily Nbrs}}
\newcommand{\m}{\text{\sffamily Msg}}
\newcommand{\fa}{\textsc{FindAny}}
\newcommand{\gc}{\textsc{GrowCluster}}
\newcommand{\bdtbt}{\textsc{BuildD2BFSTree}}
\newcommand{\rk}{\textsc{Rank}}
\newcommand{\ot}{\tilde{O}}
\newcommand{\dn}{\textit{danner}}
\newcommand{\bc}{\textsc{BroadCast}\xspace}
\newcommand{\congest}{\textsc{Congest}\xspace}
\newcommand{\local}{\textsc{Local}\xspace}
\newtheorem{theorem}{Theorem}
\newtheorem{corollary}{Corollary}[theorem]
\newtheorem{lemma}[theorem]{Lemma}
\newcommand{\e}{\mathbb{E}}
\newcommand{\calc}{\mathcal{C}}
\newcommand{\multiline}[1]{%
  \begin{tabularx}{\dimexpr\linewidth-\ALG@thistlm}[t]{@{}X@{}}
    #1
  \end{tabularx}
}
\author{
  Hongyan Ji \\
  University of Iowa\\
  \texttt{hongyan-ji@uiowa.edu}
  \and
  Sriram V. Pemmaraju \\
  University of Iowa \\
  \texttt{sriram-pemmaraju@uiowa.edu}
}
\date{}
\begin{document}

\title{Towards singular optimality in the presence of local initial knowledge}

\maketitle              
\thispagestyle{empty}

\begin{abstract}
 The \textit{Knowledge Till $\rho$} (in short, \ktrho) \congest{} model is a variant of the classical \congest model of distributed computing in which each vertex $v$ has initial knowledge of the radius-$\rho$ ball centered at $v$. 
 The most commonly studied variants of the \congest model are \ktzero \congest in which nodes initially know nothing about their neighbors and \ktone \congest in which nodes initially know the \id s of all their neighbors.
 It has been shown that having access to neighbors \id s (as in the \ktone \congest model) can substantially reduce the message complexity of algorithms for fundamental problems such as \bc and MST.  
 For example, King, Kutten, and Thorup (PODC 2015) show how to construct an MST using just $\ot(n)$ messages in the \ktone \congest model for an $n$-node graph, whereas there is an $\Omega(m)$ message lower bound for MST in the \ktzero \congest model for $m$-edge graphs.
 Building on this result, Gmyr and Pandurangen (DISC 2018) present a family of distributed randomized algorithms for various global problems that exhibit a trade-off between message and round complexity. These algorithms are based on constructing a sparse, spanning subgraph called a \textit{danner}. Specifically, given a graph $G$ and any $\delta \in [0,1]$, their algorithm constructs (with high probability) a danner that has diameter $\ot(D + n^{1-\delta})$ and $\ot(\min\{m,n^{1+\delta}\})$ edges in $\ot(n^{1-\delta})$ rounds while using $\ot(\min\{m,n^{1+\delta}\})$ messages, where $n$, $m$, and $D$ are the number of nodes, edges, and the diameter of $G$, respectively.
 In the main result of this paper, we show that if we assume the \kttwo \congest model, it is possible to substantially improve the time-message trade-off in constructing a danner. Specifically, we show in the \kttwo \congest model, how to construct a danner that has diameter $\ot(D + n^{1-2\delta})$ and $\ot(\min\{m,n^{1+\delta}\})$ edges in $\ot(n^{1-2\delta})$ rounds while using $\ot(\min\{m,n^{1+\delta}\})$ messages for any $\delta\in[0,\frac{1}{2}]$.
 This result has immediate consequences for \bc, spanning tree construction, MST, Leader Election, and even local problems such as $(\Delta+1)$-coloring in the \kttwo \congest model. For example, we obtain a \kttwo \congest algorithm for MST that runs in $\ot(D + n^{1/2})$ rounds, while using only $\ot(\min\{m, n^{1 + 1/4}\})$ messages.
\end{abstract}

\newpage
\setcounter{page}{1} 
\section{Introduction}
The \congest model is a standard synchronous, message-passing model of distributed computation in which each node can send an $O(\log n)$-bit message along each incident edge, in each round \cite{peleg00}.
Algorithms in the \congest model are typically measured by their \textit{round complexity} and \textit{message complexity}. The round complexity of an algorithm is the number of rounds it requires to complete, while the message complexity is the total messages exchanged among all the nodes throughout the algorithm's execution. 
Usually, researchers have focused on studying either the round complexity or the message complexity exclusively.
But more recently, researchers have designed \textit{singularly optimal} algorithms in the \congest model, which are algorithms that \textit{simultaneously} achieve the best possible \textit{round} and \textit{message} complexity.  
An excellent example of a singularly optimal algorithm is Elkin's minimum spanning tree (MST) algorithm~\cite{elkin2020simple} that runs in $O((D + \sqrt{n})\log n)$ rounds, using $O(m\log n + n\log n \cdot \log^* n)$ messages. 
Here $n$, $m$, and $D$ are the number of vertices, the number of edges, and the diameter of the underlying graph.
Since MST has a $\tilde{\Omega}(D + \sqrt{n})$ round\footnote{We use $\ot(\cdot)$ to absorb $\text{polylog}(n)$ factors in $O(\cdot)$ and $\tilde{\Omega}(\cdot)$ to absorb $1/\text{polylog}(n)$ factors in $\Omega(\cdot)$.} complexity lower bound \cite{PelegRubinovichSICOMP2000,sarma2012distributed} and an $\Omega(m)$ message complexity lower bound \cite{awerbuch1990trade,KuttenPPRTJACM2015} in the \congest model, Elkin's algorithm is singularly optimal, up to logarithmic factors.

The story becomes more nuanced if we consider the \textit{initial knowledge} that nodes have access to.
The upper and lower bounds cited above are in the \textit{{\bf K}nowledge {\bf T}ill radius 0} (in short, \ktzero) variant of the \congest{} model (aka \textit{clean network} model), in which nodes only have initial knowledge about themselves and have no other knowledge, even about neighbors. 
In the \textit{{\bf K}nowledge {\bf T}ill radius 1} (in short, \ktone) variant of the \congest{} model, nodes initially possess the \id s of all their neighbors.
Round complexity is not sensitive to the distinction between \ktzero \congest and \ktone \congest because nodes can spend 1 round to share their \id s with all neighbors.
However, message complexity is known to be quite sensitive to this distinction. Specifically, the $\Omega(m)$ message complexity lower bound for MST mentioned above only holds in the \ktzero \congest model. In fact, in the \ktone \congest model, King, Kutten, and Thorup (henceforth, KKT)~\cite{king2015construction} presented an elegant algorithm capable of constructing an MST using just $\ot(n)$ messages, while running in $\ot(n)$ rounds.
The $\tilde{\Omega}(D + \sqrt{n})$ round complexity lower bound \cite{sarma2012distributed} for MST mentioned above holds even in \ktone \congest model.
So for an MST algorithm to be singularly optimal in the \ktone \congest model, it would need to run in $\ot(D + \sqrt{n})$ rounds while using $\ot(n)$ messages ($\Omega(n)$ is a trivial message lower bound for MST). The KKT algorithm matches the lower bound of message complexity but is far from reaching the lower bound of round complexity. In fact, a fundamental open question in distributed algorithms is whether it is possible to design  a singularly optimal algorithm for MST in the \ktone \congest model.

Important progress towards \textit{possible} singular optimality for MST in the \ktone \congest model was made by Ghaffari and Kuhn \cite{ghaffari2018distributed} who presented that MST could be solved in $\ot(D + \sqrt{n})$ rounds, and uses  $\ot(\min\{m, n^{3/2}\})$ messages. 
Gymr and Pandurangan \cite{GmyrP18} who also showed the same results at the same time, where MST could be solved in $\ot(D + n^{1-\delta})$ rounds using $\ot(\min\{m, n^{1+\delta}\})$ messages w.h.p.\footnote{We use ``w.h.p.'' as short for ``with high probability'', representing probability at least $1 - 1/n^c$ for constant $c \ge 1$.}
for any $\delta \in [0, \frac{1}{2}]$. Note that by setting $\delta = 1/2$ in this result, one can obtain an algorithm that is round-optimal, i.e., takes $\ot(D + \sqrt{n})$ rounds, and uses  $\ot(\min\{m , n^{3/2}\})$ messages. 
And by setting $\delta = 0$, we can recover the KKT result.
It is worth emphasizing that, in general and more specifically for MST, singular optimality may not be achievable and instead, we have to settle for a trade-off between messages and rounds. In fact, it is possible that the message-round trade-off shown by Gmyr and Pandurangan is optimal, though showing this is an important open question.
The Gmyr-Pandurangan result for MST is based on a randomized algorithm that computes, for any $\delta\in[0,1]$, a sparse, spanning subgraph that they call a \dn\ that has diameter $\ot(D+n^{1-\delta})$ and $\ot(\min\{m,n^{1+\delta}\})$ edges. 
This algorithm runs in $\ot(n^{1-\delta})$ rounds while using $\ot(\min\{m,n^{1+\delta}\})$ messages.
Once the danner is constructed, it essentially serves as the sparse ``backbone'' for efficient communication. As a result, Gymr and Pandurangan obtain round-message tradeoffs not just for MST but for a variety of other global problems such as leader election (LE), spanning tree construction (ST), and \bc. They solve all of these problems in the \ktone \congest model in $\ot(D + n^{1-\delta})$ rounds, using $\ot(\min\{m,n^{1+\delta}\})$ messages for any $\delta \in [0, 1]$. 

An orthogonal direction was previously studied in the seminal paper of Awerbuch, Goldreich, Peleg, and Vainish \cite{awerbuch1990trade} (henceforth, AGPV).
They generalized the notion of initial knowledge and established tradeoffs between the volume of initial knowledge and the message complexity of algorithms.
For any integer $\rho \ge 0$, in the \textit{Knowledge Till $\rho$} (in short, \ktrho) \congest{} model, each node $v$ is provided initial knowledge of (i) the $\id$s of all nodes at distance at most $\rho$ from $v$ and (ii) the neighborhood of every vertex at distance at most $\rho$-1 from $v$. 
The \ktzero and \ktone variants of the \congest model can be viewed as the most commonly considered special cases of the \ktrho \congest model, with $\rho = 0$ and $\rho = 1$ respectively.
AGPV showed a precise tradeoff between $\rho$, the radius of initial knowledge, and the message complexity of global problems such as \bc. 
Specifically, they showed that in the \ktrho \congest model, \bc can be solved using $O(\min\{m, n^{1+c/\rho}\})$ messages, for some constant $c$. 
The main drawback of the AGPV message upper bound is that the \bc algorithm that achieves the $O(\min\{m, n^{1+c/\rho}\})$ message-upper-bound requires $\Omega(n)$ rounds in the worst case. 
Hence, the algorithm exhibits a round complexity significantly exceeding the optimal $O(D)$ rounds required for \bc.

\noindent
\textbf{Main Results.} 
As illustrated by the above discussion, our understanding of singular optimality for global problems in the presence of initial knowledge is severely limited.
Motivated specifically by the results of AGPV \cite{awerbuch1990trade} and those of Gmyr and Pandurangan \cite{GmyrP18}, we consider the design of distributed algorithms for global problems in \kttwo \congest model.
Our main contribution is showing that the round-message tradeoff shown by Gmyr and Pandurangan in the \ktone \congest model can be substantially improved in the \kttwo \congest model. Specifically, we show the following result.

\medskip

\noindent\fbox{%
    \parbox{0.95\columnwidth}{%
\textbf{Main Theorem.} 
There is a danner algorithm in the \kttwo \congest model that runs in $\ot(n^{1-2\delta})$ rounds, using $\ot(\min\{m, n^{1+\delta}\})$ messages w.h.p.
The \dn\ constructed by this algorithm has diameter $\ot(D+n^{1-2\delta})$ and $\ot(\min\{m, n^{1+\delta}\})$ edges w.h.p.
}
}

\medskip

\noindent
Like Gmyr and Pandurangan, we obtain implications of this danner construction for various global problems.
\begin{itemize}
\item We show that \bc, LE, and ST can be solved in $\ot(D+n^{1-2\delta})$ rounds, while using $\ot(\min\{m,n^{1+\delta}\})$ messages 
for any $\delta\in[0,\frac{1}{2}]$.
\item MST can be solved in $\ot(D+n^{1-2\delta})$ rounds, while using $\ot(\min\{m,n^{1+\delta}\})$ messages w.h.p., for $\delta \in [0, \frac{1}{4}]$.
\end{itemize}
Somewhat surprisingly, using recent results of \cite{PaiPPRPODC2021}, we show that even a local problem such as \textit{$(\Delta+1)$-coloring} can benefit from a more efficient danner construction.
In \cite{PaiPPRPODC2021}, the authors present a $(\Delta+1)$-coloring algorithm in the \ktone \congest model that 
uses $\ot(\min\{m,n^{1.5}\})$ messages, while running in $\ot(D + \sqrt{n})$ round.
Using our danner construction algorithm in the \kttwo \congest model, we show how to solve $(\Delta+1)$-coloring in $\ot(D+n^{1-2\delta})$ rounds, while using $\tilde{O}(\min\{m,n^{1+\delta}\})$ messages for any $\delta\in[0,\frac{1}{2}]$.

\smallskip

\noindent
Some specific instantiations of our result are worth considering.
\begin{description}
    \item[$\delta =  1/4$:] \bc, LE, ST, and MST can all be solved in $\ot(D+ \sqrt{n})$ rounds while using  $\ot(\min\{m,n^{1+\frac{1}{4}}\})$ messages. For MST this is round-optimal, even in the \kttwo \congest model\footnote{This follows by observing that the communication-complexity-based lower bound argument \cite{sarma2012distributed} works in the \ktrho 
    \congest{} model for any $\rho \le 2\log \frac{\sqrt{2n}}{4} + 2$, where $n$ is the size of the network, and thus the $\tilde{\Omega}(D + \sqrt{n})$ round lower bound for MST holds even in the \ktrho \congest{} model, for $\rho \le 2\log \frac{\sqrt{2n}}{4} + 2$.} 
    while using significantly fewer messages than the best-known result in the \ktone \congest model \cite{ghaffari2018distributed,GmyrP18}.
    \item[$\delta =  1/3$:] \bc, LE, and ST can all be solved in $\ot(D+ n^{1/3})$ rounds while using  $\tilde{O}(\min\{m,n^{1+\frac{1}{3}}\})$ messages. 
    In the \ktone \congest model, if we use the Gmyr-Pandurangan result \cite{GmyrP18} to match the rounds in this result, we end up using 
    $\tilde{O}(\min\{m,n^{1+\frac{2}{3}}\})$ messages, and if we match the messages in this result
    we end up using $\tilde{O}(D + n^{2/3})$ rounds.
    \item[$\delta =  1/2$:]  \bc, LE, and ST can all be solved in near-optimal $\ot(D)$ rounds while using $\ot(\min\{m, n^{1+1/2}\})$ messages.
\end{description}

It is also important to place our result in the context of implications we can obtain using the results of Derbel, Gavoille, Peleg, and Viennot \cite{derbel2008locality}.
This paper presents a deterministic distributed algorithm that, given an integer $k \ge 1$, constructs in $k$ rounds a $(2k-1)$-spanner with $O(k \cdot n^{1+1/k})$ edges for every $n$-node unweighted graph. This algorithm works in the \local model, which is very similar to the \congest model, except that messages in the \local model can be arbitrarily large in size.
Now note that a $k$-round algorithm in the \local model can be executed using 0 rounds and 0 messages (i.e., completely through local computation) if nodes are provided radius-$k$ knowledge initially.
This implies that in the \kttwo \congest model, a 3-spanner with $O(n^{1+1/2})$ edges can be constructed without communication. One can then use this 3-spanner as a starting point for the various global tasks mentioned above and obtain results that roughly match what we obtain by setting $\delta = 1/2$. 
Specifically, for \bc, LE, and ST, this approach also yields $\ot(D)$ rounds while using $\ot(\min\{m, n^{1+1/2}\})$ messages.
However, this approach does not yield any of our results that use fewer messages, which we obtain by using values of $\delta < 1/2$.
Furthermore, this approach does not improve the message and round complexity results for MST, already known in the \ktone \congest{} model.

\subsection{\ktrho \congest Model}
\label{subsection:model}

We work in the fault-free, message-passing, synchronous distributed computing model, known as the \congest{} model \cite{peleg00}. In this model, the input graph $G = (V, E)$, $n=|V|$, $m=|E|$, also serves as the communication network. Nodes in the graph are processors, and each node has a unique \texttt{ID} drawn from a space whose size is polynomial in $n$. Edges serve as communication links. Each node can send a possibly distinct $O(\log n)$-bit message per edge per round.  
We further classify the \congest{} model based on the amount of initial knowledge nodes have. For any integer $\rho \ge 0$,
we define the \textit{Knowledge Till $\rho$} (in short, \ktrho) \congest{} model as the \congest{} model in which each node $v$ is provided initial knowledge of (i) the IDs of all nodes at distance at most $\rho$ from $v$ and (ii) the neighborhood of every vertex at a distance at most $\rho$ - 1 from $v$. 
Thus, in the \ktzero \congest{} model, nodes do not know the \texttt{ID}s of neighbors. It is assumed that if a node $v$ has degree $d$, then the $d$ incident
edges are connected to $v$ via ``ports'' numbered arbitrarily from 1 through $d$.
In the \ktone \congest{} model, nodes initially know the \texttt{ID}s of neighbors but don't know anything more about their neighbors.
In the rest of the paper, we assume that $\rho \le D$, where $D$ is the diameter of $G$. If $\rho > D$, then every vertex knows $G$ completely at the start, and all problems become trivial in the \ktrho \congest{} model.

\subsection{Challenges, Approach, and Techniques}
\label{subsection:approach}
Our approach combines ideas from the well-known spanner algorithm of Baswana and Sen~\cite{baswana2007simple} with some ideas proposed by Gmyr and Pandurangan~\cite{GmyrP18}, which in turn depend on novel techniques proposed by KKT \cite{king2015construction}. 
In the sequential (or centralized) setting, given an edge-weighted graph $G=(V,E)$ and any integer $k \ge 1$, the Baswana-Sen algorithm computes a $(2k-1)$-spanner with $O(k \cdot n^{1+\frac{1}{k}})$ edges in expected $O(k \cdot m)$ time, where $m$ is the number of edges.
The algorithm consists of two \textit{phases}. In Phase 1, over a course of $k-1$ \textit{iterations}, clusters are subsampled with probability $n^{-1/k}$ and then grown. This process establishes disjoint clusters, each resembling a rooted tree with a center. Initially, each vertex is by itself an individual cluster.
In Phase 2, clusters are merged; this involves each vertex selecting a minimum-weight edge to each adjacent cluster and incorporating it into the spanner.
The natural distributed implementation of the Baswana-Sen algorithm requires $O(k^2)$ rounds and uses $O(k \cdot m)$ messages in the \ktzero \congest model.
This is clearly too message-inefficient for our purposes.
The bottleneck in the  Baswana-Sen algorithm is that for each sampled cluster to grow, it needs to inform all its neighbors that it has been sampled.
More specifically this challenge appears in two forms.

\begin{description}
\item[Too many clusters:] In the early iterations (in Phase 1) of the Baswana-Sen algorithm, there are too many clusters. We will end up using too many messages if every cluster tries to inform every neighbor. This issue appears in the first iteration, in which each cluster is an individual node. For example, suppose that we want to produce a spanner with $O(n^{1 + \frac{1}{3}})$ edges. Producing such a spanner would imply that downstream tasks such as \bc can be completed using $O(n^{1 + \frac{1}{3}})$ messages. So we pick $k = 3$ and the Baswana-Sen algorithm samples clusters with probability $n^{-1/3}$ in the first iteration. This yields $\Theta(n^{2/3})$ clusters w.h.p.~and if each cluster sent messages to all neighbors, we could end up using $\Omega(n^{1+\frac{2}{3}})$ messages, well above our target of $O(n^{1 + \frac{1}{3}})$ messages.

\item[Redundant messages:] Even if we were able to circumvent the above issue, there is a second and even more challenging obstacle. Suppose we have reached a point where the clusters have grown to trees of some constant diameter and the number of clusters is small enough that each cluster is permitted to send $n$ messages informing neighbors.  
In this setting, if each node in a cluster sent messages to neighbors outside the cluster in an uncoordinated manner, we could end up sending up to $\Omega(n^2)$ messages because each neighbor of the cluster could receive the same message from multiple nodes in the cluster. Removing these redundant messages requires coordination within the cluster before sending messages, but the coordination itself can be quite message-costly.
\end{description}
We overcome these challenges in a variety of ways.
First, we design a randomized estimation procedure that clusters can use to estimate if a neighbor $w$ will hear from \textit{other} sampled clusters. 
There is of course no need to send $w$ a message if it is estimated that someone else will communicate with $w$.
This estimation procedure critically depends on 2-hop initial knowledge. It allows clusters to communicate selectively with neighbors while still guaranteeing that every node $w$ that is a neighbor of a sampled cluster joins one such cluster.
To circumvent the challenge of redundant messages, we introduce two new subroutines, 
for growing a ``star'' cluster $C$ that is both round and message efficient.
Both subroutines critically depend on using 2-hop initial knowledge for their (simultaneous) round and message efficiency.
For example, the $\gc(C)$ subroutine (see Section \ref{subsection:fastSubroutines}) takes as input a ``star'' cluster $C$ with $N$ neighbors and grows the ``star'' by adding one edge from $C$ to each neighbor. Our implementation requires $O(\sqrt{N})$ rounds while using a total of $O(N)$ messages, which is linear in the size of the constructed cluster. 
The estimation procedure and subroutines for cluster growing may be of independent interest to anyone designing efficient algorithms in the \kttwo \congest model.

Another technique we use is to allow surplus messages in early iterations, which even though not necessary in the early iterations, can improve message complexity in later iterations when combined with estimation procedures.

\subsection{Related Work}
\label{subsection:relatedWork}
While the current paper focuses only on synchronous models, we note that there is a growing body of related work in asynchronous models of distributed computation. 
In \cite{KuttenPPRTJACM2015}, a singularly near-optimal randomized leader election algorithm for general synchronous networks in the \ktzero \congest model is presented. 
This result was extended to the asynchronous \ktzero \congest model in \cite{KuttenMPPDISC2020,KuttenMPPDISC2021}.
Even for MST, there has been recent work on singularly optimal randomized MST algorithms in the asynchronous \ktzero \congest{} model \cite{DufoulonKMPPDISC2022}.
This paper also contains an asynchronous MST algorithm that is sublinear in both time and messages in the  \ktone \congest{} model. 

Since a danner is a relaxation of a spanner, it is worth mentioning a recent lower bound result for spanner construction due to Robinson \cite{RobinsonSODA2021}.
He considers the \ktone \congest model and shows that any algorithm running in $O(\text{poly}(n))$-time must send at least $\tilde{\Omega}(\frac{1}{t^2} n^{1+1/2t})$ bits to construct a $2t-1$-spanner. It would be interesting to determine if this type of spanner lower bound can be extended to danner construction.

Earlier, we mentioned the work of Derbel, Gavoille, Peleg, and Viennot \cite{derbel2008locality}. Another immediate implication of this work is that, for an integer $\rho \ge 1$, it is possible to construct a $(2\rho-1)$-spanner with $O(\rho \cdot n^{1+1/\rho})$ edges
using no communication in the \ktrho \congest model. 
One can then use this $(2\rho-1)$-spanner as a starting point for various global tasks mentioned earlier (\bc, LE, ST, MST).
For example, this implies that for \bc, LE, and ST there are algorithms in the \ktrho \congest model that run in $O(\rho \cdot D)$ rounds, using $O(\rho \cdot n^{1+1/\rho})$ messages. For MST, the corresponding algorithm in the \ktrho \congest model would run in $\ot(\rho (\cdot D + \sqrt{n}))$ rounds, using $O(\rho \cdot n^{1+1/\rho})$ messages. Setting $\rho = \Theta(\log n)$, this would yield an MST algorithm in the \congest model in which nodes have radius-$\Theta(\log n)$ initial knowledge, running in near-optimal $\ot(D + \sqrt{n})$ rounds, using near-optimal  $\ot(n)$ messages.

As mentioned earlier, AGPV showed an upper bound of $O(\min\{m, n^{1+c/\rho}\})$ on the message complexity of \bc on an $n$-vertex, $m$-edge graph in the \ktrho \congest model. But, this algorithm can take $\Omega(n)$ rounds in the worst case because the AGPV algorithm starts by performing a deterministic sparsification step that takes 0 rounds (i.e., only local computation is needed by this algorithm), reduces the number of edges in the graph to $O(\min\{m, n^{1+c/\rho}\})$, but can produce graphs with $\Omega(n)$ diameter.
We present one such worst-case example in detail in the appendix.

\subsection{Notation and Definitions}
Let $\n(w)$ denote the set of neighbors of node $w$ and
let $\n_2(w)$ denote the set of 2-hop neighbors of node $w$.
A \textit{cluster} $C=(V(C),E(C))$ is a connected subgraph of graph $G=(V,E)$. 
All clusters considered in this paper will be constant-diameter trees.
Furthermore, every cluster constructed by algorithms in this paper will start as a single node and then grow over the course of the algorithm.
For a cluster $C$, we will use $\cen(C)$ to denote the (unique) oldest node in a cluster and we use
the \id\ of $\cen(C)$ as the \id\ of cluster $C$; we will use the notation $\id_C$ to denote the \id\ of $C$. 
Let $\n(C)$ denote the set of neighboring vertices of cluster $C$, i.e., $\n(C) \cap C = \emptyset$ and every $w\in \n(C)$ has a neighbor in $C$.

\subsection{Organization}

We start the rest of the paper by describing efficient subroutines in the \kttwo \congest model for 3 key tasks (Section \ref{subsection:fastSubroutines}). These subroutines are both round and message efficient and use 2-hop initial knowledge crucially for their efficiency.
We then describe our main danner algorithm and its analysis (Section \ref{section:danner}), followed by various applications of our danner algorithm to problems such as \bc, LE, ST, MST, and $(\Delta+1)$-coloring.

\section{Fast subroutines in the \kttwo \congest model}
\label{subsection:fastSubroutines}

In this section, we identify 3 key tasks that can be implemented in a round- and message-efficient
manner due to access to initial 2-hop knowledge. It is unclear how to execute these tasks efficiently without initial 2-hop knowledge, e.g., in the \ktone \congest model.
For each of the 3 tasks, we present subroutines that are round- and message-efficient.

\begin{description}
\item
[Rank in neighbor's neighborhood:] For a given node $v$ and a given neighbor $w \in \n(v)$, we need to calculate the rank of its identifier ($\id_v$) within the neighborhood of $w$.
We use $\rk(v, w)$ to denote the subroutine that completes this task in the \kttwo \congest model.
It is immediate that $\rk(v, w)$ completes this task in 0 rounds, using 0 messages because $v$ has all the information it needs within its 2-hop initial knowledge. 
One might think that this task has been efficiently completed in the \ktone \congest model as well.
In a sense, this is true because in the  \ktone \congest model node $v$ can simply ask $w$ to compute the rank of $\id_v$ in $w$'s neighborhood; this would take 2 rounds and 2 messages.
Unfortunately, even this is too inefficient for our purposes because the $\rk(v, w)$ subroutine will be used by $v$ as a filter to determine whether $v$ even needs to communicate with $w$. 

\item[Depth-2 BFS tree:] Given a node $v$, our task is to efficiently construct a depth-2 BFS tree rooted at $v$.
We now define a subroutine $\bdtbt(v)$ in the \kttwo \congest model that can complete this task in 2 rounds, using $O(K)$ messages, where $K = |\n_2(v)|$.
In other words, our goal is to use constant rounds and bound the number of messages by the size of the depth-2 BFS tree that is constructed.

\smallskip
\begin{enumerate}
\item Node $v$ sends a message to each neighbor $w \in \n(v)$ and the edges $\{v, w\}$ are added to the output tree.

\item Using 2-hop initial knowledge, each node $w \in \n(v)$ can locally compute the set $\n(v)$.
Then node $w \in \n(v)$ can use 2-hop initial knowledge to select a subset of neighbors to send messages to. 
Specifically, node $w$ sends a message to a neighbor $x$ iff $\id_w$ is the lowest \id\ among the \id s of nodes in $\n(v) \cap \n(x)$. Node $w$ can check whether it satisfies this condition using local computation on its initial 2-hop knowledge.
\end{enumerate}

Note that it is possible to construct the second level of the depth-2 BFS tree rooted at $v$ by using a standard ``flooding'' algorithm in which each node $w \in \n(v)$ sends a message to each of its neighbors.
However, in the worst case, this could take $\Omega(K^2)$ messages. Using 2-hop knowledge allows for a much more message-efficient algorithm, while using constant number of rounds.

\begin{lemma}
For any $v \in V$, the subroutine $\bdtbt(v)$ runs in $O(1)$ rounds, using $O(|\n_2(v)|)$ messages. 
\end{lemma}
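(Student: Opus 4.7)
The plan is to bound the round and message complexity of $\bdtbt(v)$ separately, with the round bound being immediate from the description and the message bound following from a simple ``at most one sender per recipient'' argument enabled by the tie-breaking rule on \id s.

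For the round bound, I would simply observe that the subroutine consists of two explicit communication steps, one where $v$ sends to its neighbors in $\n(v)$, and one where each $w \in \n(v)$ sends to a locally computed subset of its own neighbors. Since the decision of whether $w$ sends to $x$ is based entirely on $w$'s 2-hop initial knowledge (specifically, on $\n(v) \cap \n(x)$, which $w$ can read off locally without any additional communication), no further rounds are needed. Hence the total round complexity is $2 = O(1)$.

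For the message bound, I would separate the two steps. Step 1 contributes exactly $|\n(v)|$ messages, which is clearly $O(|\n_2(v)|)$. The heart of the argument is step 2. The key observation is that for any fixed candidate recipient $x$, the only $w \in \n(v)$ that sends a message to $x$ is the one (if any) with minimum \id\ in $\n(v) \cap \n(x)$. Hence at most one message is sent to each $x$ in step 2. I would then argue that any $x$ that receives a step-2 message must lie in $\n(w)$ for some $w \in \n(v)$, so $x \in \n(v) \cup \n_2(v)$. Therefore the number of step-2 messages is at most $|\n(v) \cup \n_2(v)| = O(|\n_2(v)|)$. Adding the two steps gives the claimed $O(|\n_2(v)|)$ message bound.

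The main ``obstacle'' here is really just setting up the counting cleanly, since the tie-breaking rule is designed precisely to guarantee a unique sender per recipient. A minor subtlety worth flagging in the write-up is making sure that the definition of $\n_2(v)$ being used accommodates neighbors that are also two hops away through a triangle, so that the inclusion $\n(v) \cup \n_2(v)$ is indeed $O(|\n_2(v)|)$; if $\n_2(v)$ denotes nodes at distance at most $2$ from $v$ (excluding $v$ itself), this is immediate. Beyond this, no nontrivial computation is required.
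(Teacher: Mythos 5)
Your proposal is correct and follows essentially the same argument as the paper: Step 1 contributes one round and $O(|\n(v)|)$ messages, and Step 2 contributes one round with at most one message per recipient (enforced by the minimum-\id\ rule on $\n(v)\cap\n(x)$), giving $O(|\n_2(v)|)$ messages in total. Your explicit ``unique sender per recipient'' counting is just a slightly more careful write-up of the same observation the paper uses.
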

\begin{proof}
Step 1 takes 1 round, with $O(|\n(v)|)=O(|\n_2(v)|)$ messages, given that each neighbor $w\in\n(v)$ receives a messages from the node $v$. Step 2 takes 1 round, with $O(|\n_2(v)|-|\n(v)|)=O(|\n_2(v)|)$ messages because each node $w$ that is 2-hop away from node $v$ receives 1 message from a node $w\in\n(v)$.   
\end{proof}

\item[Growing a ``star'' cluster:] Consider a cluster $C$ that is ``star'' graph. In other words, $\cen(C)$ is some vertex $v \in V$, the rest of the vertices satisfy $V(C) \setminus \{\cen(C)\} \subseteq \n(v)$, and there are $|V(C)|-1$ edges, from $\cen(C)$ to each node in $V(C) \setminus \{\cen(C)\}$.  Note that $\n(C) \subseteq \n_2(\cen(C))$ and it is possible for $|\n(C)|$ to be much smaller than $|\n_2(\cen(C))|$.
Let $N = |\n(C)|$.
We need to complete this task efficiently: grow the cluster $C$ by adding an edge from $C$ to each of its $N$ neighbors. 

We now define an efficient subroutine $\gc(C)$ for this task that uses $O(\sqrt{N})$ rounds and $O(N)$ messages. 
As with the previous subroutines, 2-hop initial knowledge plays a critical role in achieving these round and message complexities.
Note that if $N \approx |\n_2(\cen(C))|$, we can simply call the $\bdtbt(\cen(C))$ subroutine defined above to complete this task in
$O(1)$ rounds and $O(N)$ messages. So the challenge is in designing an efficient algorithm (in terms of $N$) even when $N \ll |\n_2(\cen(C))|$.

\medskip
Given a rooted tree $T$ and a node $u$ in $T$, we use $ch_{T}(u)$ to denote the set of children of $u$ in $T$.
We now describe our algorithm for $\gc(C)$ (See Figure~\ref{fig:subroutineGS} for illustration.).   
\begin{enumerate}
\item (\textbf{Local computation.}) $\cen(C)$ uses 2-hop initial knowledge and knowledge of $V(C)$ to locally construct a tree $T$ obtained by adding edges to $C$, where each added edge is from a node $w \in \n(C)$ to a neighbor of $w$ in $C$ with minimum \id.
Viewing $T$ as a tree rooted at itself, $\cen(C)$ classifies each of its children $u$ as a \textit{low-degree} node if $|ch_T(u)| \le \sqrt{N}$; the rest of its children are classified as \textit{high-degree} nodes. \\
\textbf{Notation:} We use $LDC$ to denote the set of low-degree children of $\cen(C)$ and similarly $HDC$ as the set of high-degree children of $\cen(C)$.
\item  To each node $u \in LDC$, $\cen(C)$ sends the \id s of all nodes in $ch_T(u)$, one $\id$ at a time. 
To each node $u \in HDC$, $\cen(C)$ sends \id s of all nodes in $HDC$, again one \id\ at a time.
\item Each node $u \in LDC$ sends message $\m_1$ to each node $v \in ch_T(u)$.
\item Each node $u \in HDC$ sends message $\m_2$ to each node $v \in \n(u)$, if $u$ is the node with the smallest \id\ in $HDC \cap \n(v)$.
\item Each node $v \not\in V(C)$ adds edge $\{u, v\}$ to the output, where $u$ is the node with smallest \id\ from which it has received a message.
\end{enumerate}

\begin{lemma}
\label{lemma:GCComplexity}
Let $T$ be the tree locally constructed by $\cen(C)$ in Step 1 of subroutine $\gc(C)$.
The subroutine $\gc(C)$ runs in $O(\sqrt{N})$ rounds and uses $O(N)$ messages and at the end of the subroutine edges in $T$ that are not already in $C$ are added to the output.    
\end{lemma}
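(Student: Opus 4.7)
The plan is to decompose the lemma into three claims --- the round bound $O(\sqrt{N})$, the message bound $O(N)$, and the correctness claim that every $T$-edge not already in $C$ is added to the output --- and treat them in turn. For the round analysis, Steps~1 and~5 are purely local, and Steps~3 and~4 each complete in a single round because every sender uses pairwise-distinct edges; the bottleneck is Step~2, where $\cen(C)$ pumps a stream of \id s along each of its tree-edges. The central fact to establish first is that both stream types have length at most $\sqrt{N}$: for $u \in LDC$ this is immediate from $|ch_T(u)| \le \sqrt{N}$, and for $u \in HDC$ the stream length equals $|HDC|$, which I would bound via a counting argument --- each HDC node contributes strictly more than $\sqrt{N}$ distinct $T$-children drawn from $\n(C)$, and because these child sets are disjoint and their union has size at most $N$, this forces $|HDC| < \sqrt{N}$. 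Since distinct out-edges of $\cen(C)$ operate in parallel, Step~2 finishes in $O(\sqrt{N})$ rounds.

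The same ingredients handle the message tally. Step~2 sends $\sum_{u \in LDC} |ch_T(u)| \le N$ messages on its LDC edges and $|HDC|^2 \le N$ messages on its HDC edges; Step~3 again sends $\sum_{u \in LDC} |ch_T(u)| \le N$ messages; and in Step~4 every potential recipient $v$ receives at most one $\m_2$, because the qualifying sender is the unique minimum-\id{} member of $HDC \cap \n(v)$. Since the recipients all lie in $V(C) \cup \n(C)$, the Step~4 count is $O(N)$ as well, giving $O(N)$ messages in total.

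For correctness I would argue pointwise: fix $w \in \n(C)$ and let $p(w)$ be its minimum-\id{} neighbor in $V(C)$, which by Step~1 is the $T$-parent of $w$. If $p(w) \in LDC$, then Step~2 tells $p(w)$ the \id{} of $w$ and Step~3 delivers $\m_1$ from $p(w)$ to $w$; any competing $\m_2$ comes from some $u' \in HDC \cap \n(w)$ with $\id_{u'} > \id_{p(w)}$, so Step~5 selects $p(w)$. If $p(w) \in HDC$, then $p(w)$ is already the minimum-\id{} member of $HDC \cap \n(w)$, so Step~4 delivers $\m_2$ from $p(w)$ to $w$; moreover $w$'s unique $T$-parent is $p(w)$, so no LDC node sends $\m_1$ to $w$, and $p(w)$ still wins. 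Either way the edge $\{p(w), w\}$ --- precisely the $T$-edge incident to $w$ added in Step~1 --- is the one recorded in the output. The degenerate case $p(w) = \cen(C)$, where $w$ is a direct $T$-child of the center, is handled by a one-round, $O(N)$-message extension in which $\cen(C)$ notifies its $T$-children in $\n(C)$ directly; this is absorbed by the stated bounds.

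The main obstacle is the single counting observation $|HDC| \le \sqrt{N}$: it is exactly this inequality that simultaneously caps the HDC stream length (needed for the round bound) and the HDC broadcast volume $|HDC|^2$ (needed for the message bound). Once it is in place, the remaining accounting is routine bookkeeping across the five steps.
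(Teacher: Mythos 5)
Your proposal is correct and follows essentially the same route as the paper's proof: per-step accounting of rounds and messages hinging on the bound $|HDC| \le \sqrt{N}$, followed by a pointwise case analysis over $w \in \n(C)$ showing that the minimum-\id{} rule in Step~5 always selects the $T$-edge. You are in fact slightly more careful than the paper in two places --- you spell out the disjointness/counting argument behind $|HDC| \le \sqrt{N}$, which the paper only asserts, and you explicitly patch the corner case where $w$'s minimum-\id{} neighbor in $C$ is $\cen(C)$ itself, which the paper's case analysis (all-LDC, all-HDC, mixed) passes over silently.
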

\begin{proof}
We begin by bounding the round complexity of the subroutine. In the first step, local computation is performed using initial 2-hop knowledge. In Step 2 it incurs a round complexity of $O(\sqrt{N})$ because $\cen(C)$ transmits a maximum of $\sqrt{N}$ $\id$s individually to each node $u \in LDC$. Simultaneously, $\cen(C)$ consumes at most $\sqrt{N}$ rounds in sending, individually, up to $\sqrt{N}$ $\id$s to each node $u \in HDC$. This is because $\cen(C)$ has at most $\sqrt{N}$ high-degree children. Steps 3, 4, and 5 each can be executed in a single round. 

We now bound the message complexity in the following manner. Step 1 is local computation with 0 message. Step 2 uses $O(N)$ messages because $\cen(C)$ sends at most $N$ messages to nodes in $LDC$, and at most $N$ messages to nodes in $HDC$. This stems from the same reasoning as before, where there are at most $\sqrt{N}$ nodes in $HDC$. In Steps 3, 4, and 5, the message count is bounded by $O(N)$ messages given that $|\n(C)|=N$.

Consider a node $w \in \n(C)$. If the only neighbors of $w$ in $C$ belong to LDC, then $w$ is guaranteed to receive a message along the incident edge in $T$.
Similarly, if the only neighbors of $w$ in $C$ belong to HDC, then $w$ is guaranteed to receive a message along the incident edge in $T$ because the message to $w$ from the nodes in HDC is sent from the node in HDC with smallest \id.
If $w$ has some neighbors in LDC and some in HDC, it could receive 2 messages. But, because $w$ picks a neighbor with lower \id, again the edge in $T$ is added to the output.
\end{proof}

\begin{figure}[ht]
  \begin{subfigure}{0.5\textwidth}
    \centering
    \includegraphics[width=0.8\textwidth]{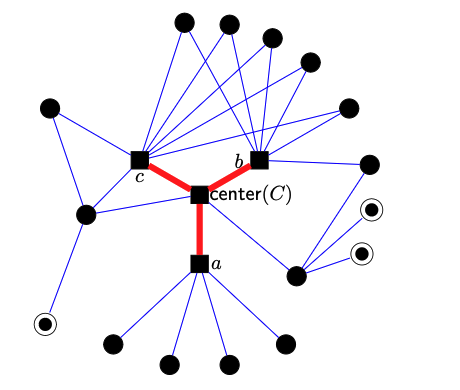}
  \end{subfigure}%
  \begin{subfigure}{0.5\textwidth}
    \captionsetup{justification=raggedright, singlelinecheck=false, font=footnotesize}
    \includegraphics[width=0.8\textwidth]{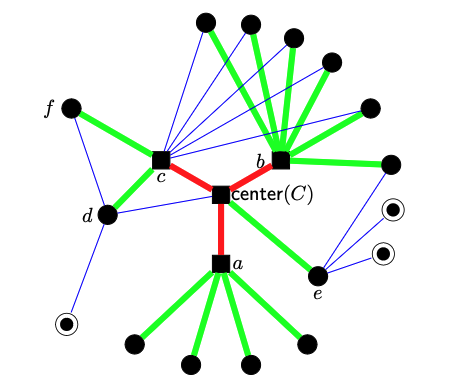}
  \end{subfigure}
  \caption{Cluster $C$ is represented as a star graph with four nodes (depicted as rectangles) and three edges (illustrated as thick, red line segments). The cluster $C$ is connected to $N=13$ neighbors, displayed as black disks. Note that the 3 nodes depicted with concentric circles are not neighboring to $C$. The tree $T$ locally constructed by $\cen(C)$ consists of thick (red) edges shown on the left plus the thick (green) edges shown on the right. In $T$, the connection between node $d$ and node $c$ is established, bypassing $\cen(C)$ due to the fact that $\id_c$ is smaller than the $\id$ of $\cen(C)$. The set of low-degree children, denoted as $LDC$, is $\{c, e\}$, while the set of high-degree children, denoted as $HDC$, is $\{a, b\}$. This classification is based on the cardinalities of the children sets in $T$: $|ch_T(c)|=2$, $|ch_T(e)|=0$, $|ch_T(a)|=4$, $|ch_T(b)|=6$ compared to $\sqrt{N}=\sqrt{13}$. Consequently, $\cen(C)$ transmits the $\id$s of nodes $d$ and $f$ to node $c$, while nodes $a$ and $b$ receive the $\id$s of nodes $a$ and $b$ from $\cen(C)$.}\label{fig:subroutineGS}
\end{figure}

\end{description}

\section{Distributed Danner Construction in the \kttwo \congest model}
\label{section:danner}
As mentioned earlier, our danner algorithm is inspired by the celebrated Baswana-Sen spanner algorithm \cite{baswana2007simple}. 
For any integer $k \ge 1$, this algorithm constructs a $(2k-1)$-spanner
by subsampling and growing clusters for $k-1$ iterations and then merging them.
We consider this algorithm for $k = 3$ and implement the 2 iterations of the Baswana-Sen algorithm in a round- and message-efficient manner by leveraging 2-hop initial knowledge in a fundamental way.
These correspond to the two \textit{cluster growing} phases (described in Algorithm~\ref{alg:part1} and Algorithm~\ref{alg:part2hd}, \ref{alg:part2ld}).
We use two versions of Phase 2 of the cluster growing algorithm, one for high $\delta$ ($\delta \in (1/3, 1/2]$) and one for low $\delta$
($\delta \in [0, 1/3]$).
It is unclear how to implement the merging step of the Baswana-Sen algorithm in a round- and message-efficient manner. 
So instead, we use 
ideas similar to those used by Gmyr and Pandurangan \cite{GmyrP18} for merging the clusters. This \textit{cluster merging} phase is described in Algorithm~\ref{alg:part3}. 
Our main technical contributions are in the cluster growing phases, so we explain these fully with all proofs provided. Since the cluster merging phase uses ideas similar to those used by Gmyr and Pandurangan \cite{GmyrP18}, we sketch this phase and refer the reader to \cite{GmyrP18} for more details.  

\subsection{Cluster Growing: Phase 1}
\label{subsection:CG1}

Algorithm~\ref{alg:part1} starts with a set $\calc_0$ of initial clusters created with each node by itself being a cluster. We then sample each cluster with probability $n^{-\delta}$ and create a set $\calc_1$ of sampled clusters. Let $U$ denote the set of nodes not in sampled clusters, i.e., $U := \{v \in V \mid \{v\} \not\in \calc_1\}$. The rest of the algorithm aims to ``grow'' these sampled clusters by having unsampled nodes join neighboring sampled clusters. Since there are $\Theta(n^{1-\delta})$ sampled clusters w.h.p., it is not message efficient for each cluster $C \in \calc_1$ (which is just a node at this point) to communicate with all neighbors $w\in\n(C)$. 
Instead, cluster $C \in \calc_1$ uses the rank computation subroutine (see Section \ref{subsection:fastSubroutines}) to reduce the message complexity of this step.
Specifically, cluster $C$ first checks (in Step 4) if its \id\ belongs to the smallest $\lceil {n^{2\delta}\rceil}$ \id s of \textit{neighbors} of $w$\footnote{Note that this step requires 2-hop knowledge; all steps in our algorithms which assume 2-hop knowledge are highlighted in gray.}.
In Step 5, the sampled cluster $C \in \calc_1$ sends a message $\m_1(\id_C)$ just to those neighbors $w$ who pass this check. For $w\in V$, let $M_1(w)$ denote a set of tuples, including the sender's ID and the edge which the sender uses to send a message to $w$, i.e., $M_1(w) := \{(\id_C, e) \mid w\text{ receives }\m_1(\id_C)\text{ along edge }e\}$. 
The choice of the $\lceil {n^{2\delta}\rceil}$-sized ``bucket'' of smallest \id\ neighbors of $w$ is critical in ensuring two properties we need: (i) every node $w$ receives $\ot(n^\delta)$ messages (Lemma \ref{lem:alg1NS}) and (ii) every node $w$ that does not receive a message has low, i.e., $\ot(n^\delta)$, degree (Lemma \ref{lem:part1lowdegree}).
Subsequently, every node $w$ that does not belong to a sampled cluster, can take one of two actions.
If $w$ receives a message from a neighboring sampled cluster, it joins the sampled cluster $S_w$ with minimum $\id$ among all sampled clusters from which it receives a message (Steps 9-11). When $w$ joins a cluster, the edge connecting $w$ to the cluster is added to the cluster and the danner $H$. 
For nodes $w$ that do not receive any message from a sampled cluster, all incident edges of $w$ are added to the danner (Step 13). 
The fact that such nodes are guaranteed to have a low degree is critical to ensuring this step is message-efficient.

\begin{algorithm}
\caption{Cluster Growing: Phase 1}\label{alg:part1}
\begin{algorithmic}[1]
\Require $G=(V,E)$, $\calc_0 = \{\{v\} \mid v \in V\}$, $H = (V, \emptyset)$
\Ensure a set $\calc_{1}$ of clusters, partially constructed danner $H$
\State Independently sample each cluster $C \in \calc_0$ with probability $n^{-\delta}$;
\Statex \textbf{Notation:} $\calc_1 \subseteq \calc_0$ denotes the set of sampled clusters; for each cluster $C = \{v\} \in \calc_1$, $v$ is the $\cen$ of $C$, denoted by $\cen(C)$; $U := \{v \in V \mid \{v\} \not\in \calc_1\}$ is the set of nodes not in sampled clusters
\Statex
\For{$C\in \calc_1$} \Comment{Actions by sampled clusters}
    \For{$w\in \n(C)$}
        \If{\colorbox{gray!20}{
        $\rk(\cen(C), w)\le \lceil {n^{2\delta}\rceil}$
        }}
         \State $\cen(C)$ sends a message $\m_1(\id_C)$ to $w$. 
        \EndIf
    \EndFor
\EndFor
\Statex
\For{$w \in V$} 
    \State $M_1(w) := \{(\id_S, e) \mid w\text{ receives }\m_1(\id_S)\text{ along edge }e\}$
    \If{$w \in U$} \Comment{Actions by nodes not in sampled clusters}
    \If{$M_1(w) \not= \emptyset$} \Comment{Actions by nodes that hear from a sampled cluster}
     \State $S_w := \text{cluster $S$ with minimum \id\ in $M_1(w)$}$. 
     \State $w$ joins the cluster $S_w$, the edge $\{w, \cen(S_w)\}$ is added to cluster $S_w$ and to the danner $H$. 
    \Else \Comment{Actions by nodes that don't hear from a sampled cluster}
     \State $w$ adds all incident edges to $H$ by sending messages along incident edges. 
     \Statex \ \qquad\qquad\textbf{Note:} $w$ becomes inactive and does not participate any further in the algorithm.
    \EndIf
    \EndIf
\EndFor
\end{algorithmic}
\end{algorithm}

We now prove bounds on the round and message complexity of Algorithm \ref{alg:part1} and then prove properties of the output produced by the algorithm.
\begin{lemma}
\label{lem:alg1NS}
    In Algorithm~\ref{alg:part1}, for any $w \in U$, $|M_1(w)| = O(n^{\delta})$ w.h.p.
\end{lemma}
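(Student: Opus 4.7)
The plan is to establish the bound for a fixed $w \in U$ by a direct Chernoff argument, and then promote it to ``for any $w \in U$'' via a union bound over the at most $n$ choices of $w$.

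First, I would separate the randomness from the combinatorics by identifying the set of \emph{potential} senders of $\m_1$-messages to $w$; crucially, this set depends only on the graph and the \id s, \emph{not} on the sampling in Step 1. By Steps 3--5 of Algorithm~\ref{alg:part1}, a singleton cluster $\{v\}$ can transmit $\m_1(\id_v)$ to $w$ only if $v \in \n(w)$ and $\rk(v,w) \le \lceil n^{2\delta}\rceil$. Let $A_w$ denote this collection of candidate neighbors; by the definition of rank, $|A_w| \le \lceil n^{2\delta}\rceil$ deterministically.

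Second, I would use independence of the sampling. A candidate $v \in A_w$ actually sends $\m_1(\id_v)$ to $w$ iff $\{v\} \in \calc_1$, which happens independently across $v$ with probability $n^{-\delta}$, so
\[
|M_1(w)| \;=\; \sum_{v \in A_w} \mathbf{1}\bigl[\{v\} \in \calc_1\bigr]
\]
is a sum of at most $\lceil n^{2\delta}\rceil$ independent $\mathrm{Bernoulli}(n^{-\delta})$ random variables with mean $\mu \le \lceil n^{2\delta}\rceil \cdot n^{-\delta} = n^\delta + O(1)$. A standard multiplicative Chernoff bound then yields $|M_1(w)| = O(n^\delta + \log n)$ with probability at least $1 - n^{-(c+1)}$ for any desired constant $c$. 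A union bound over the at most $n$ vertices $w \in U$ promotes this to the ``for any $w \in U$'' statement w.h.p.

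The only real wrinkle is the regime $n^\delta \ll \log n$, where Chernoff yields $O(\log n)$ rather than the cleaner $O(n^\delta)$; this polylog slack is absorbed into the $O(\cdot)$ and is in any case dominated by the $\ot(\cdot)$ factors that appear throughout the rest of the analysis. Conceptually, the design of the rank threshold $\lceil n^{2\delta}\rceil$ is precisely what makes the argument work: it is tuned so that $|A_w|\cdot n^{-\delta} \le n^\delta + O(1)$, matching the target bound exactly, and independence of the Bernoulli sampling then does the rest.
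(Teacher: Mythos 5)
Your proposal is correct and follows essentially the same route as the paper's own proof: bound the candidate set by $\lceil n^{2\delta}\rceil$ via the rank filter in Step 4, then apply a Chernoff bound to the independent $n^{-\delta}$-sampling of those candidates. You are in fact slightly more careful than the paper, making the union bound over $w$ and the $O(\log n)$ slack in the small-$\delta$ regime explicit, but the underlying argument is identical.
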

\begin{proof}
    For any $w \in U$, the number of neighbors who are candidates for sending $w$ a message (in Step 5) is at most $\lceil {n^{2\delta}\rceil}$. This is because (in Step 4) only neighbors of $w$ whose \id s are among the smallest $\lceil {n^{2\delta}\rceil}$ \id s communicate with $w$.
    Furthermore, among these $\lceil {n^{2\delta}\rceil}$ neighbors of $w$, only those nodes which are sampled (in Step 1) communicate with $w$. Since this sampling occurs independently with probability $n^{-\delta}$, by a simple application of Chernoff bounds, we see that w.h.p.~node $w$ will receive messages from $O(n^{\delta})$ neighbors.
\end{proof}

\begin{lemma}\label{lem:part1lowdegree}
    In Algorithm~\ref{alg:part1}, for any $w \in U$, if $|M_1(w)| = \emptyset$ then w.h.p.~$|\n(w)| = \ot(n^\delta)$.
\end{lemma}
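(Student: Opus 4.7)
The strategy is a contrapositive: I fix a threshold $d := c'\log n \cdot n^{\delta}$ for a sufficiently large constant $c'$ and show that every $w \in U$ with $|\n(w)| \ge d$ satisfies $M_1(w) \neq \emptyset$ with high probability. The lemma then follows since any $w$ with $|\n(w)| < d$ already meets the $\tilde{O}(n^\delta)$ bound outright.

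The key structural observation driving the argument is that the only neighbors of $w$ eligible to send it a message in Step 5 are those whose IDs rank among the $\lceil n^{2\delta}\rceil$ smallest in $\n(w)$, since all other neighbors are filtered out by the rank check in Step 4. Consequently, for a fixed $w$, the number of candidate senders is $k := \min(|\n(w)|, \lceil n^{2\delta}\rceil)$, and each candidate (being a singleton cluster in $\calc_0$) is independently included in $\calc_1$ with probability $n^{-\delta}$ in Step 1. Hence
\[
\Pr[M_1(w) = \emptyset] \le (1-n^{-\delta})^k \le \exp\bigl(-k \cdot n^{-\delta}\bigr).
\]
When $|\n(w)| \ge d$, and assuming $n$ is large enough that $n^{\delta} \ge c' \log n$ (so that $d \le n^{2\delta}$), we get $k \ge d$, which makes the probability at most $\exp(-c'\log n) = n^{-c'}$.

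A union bound over the at most $n$ vertices of $U$ then gives that, with probability at least $1 - n^{-(c'-1)}$, every $w \in U$ with $|\n(w)| \ge d$ has $M_1(w) \neq \emptyset$; choosing $c'$ sufficiently large upgrades this to a w.h.p. statement, and the contrapositive is exactly the lemma, with the $\tilde{O}$ absorbing the $c'\log n$ factor. The only mild subtlety I anticipate is the small-$\delta$ regime where $n^{\delta} < c'\log n$: there $d$ exceeds $\lceil n^{2\delta}\rceil$ and the exponent above is only $\exp(-n^\delta)$, but in this regime $\tilde{O}(n^\delta)$ already absorbs the whole bucket size $\lceil n^{2\delta}\rceil$, so a small adjustment of constants recovers the claim. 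Beyond this bookkeeping, the proof is a routine Chernoff-style estimate combined with a union bound, and I do not foresee any serious obstacle.
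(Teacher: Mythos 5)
Your proposal is correct and follows essentially the same route as the paper's proof: the contrapositive argument that a node $w$ of degree at least $c\,n^{\delta}\log n$ has $\min\{|\n(w)|,\lceil n^{2\delta}\rceil\}$ candidate senders among its smallest-ID neighbors, each independently sampled with probability $n^{-\delta}$, so $\Pr[M_1(w)=\emptyset]\le(1-n^{-\delta})^{k}\le n^{-c}$. The only differences are cosmetic: you add an explicit union bound over $w$ and flag the small-$\delta$ regime $n^{\delta}=o(\log n)$, which the paper's proof silently assumes away when it asserts $N\ge c\,n^{\delta}\ln n$.
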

\begin{proof} 
    Let $N = \min\{|\n(w)|, \lceil n^{2\delta}\rceil\}$. 
    Suppose $|\n(w)| \ge c \cdot n^{\delta} \ln n$ for some constant $c$. Then $N \ge c \cdot n^{\delta} \ln n$.
    Since each sampled neighbor of $w$ with \id\ among the smallest $\lceil {n^{2\delta}\rceil}$ \id s sends $w$ a message, the probability that no node sends $w$ a message is at most $(1-n^{-\delta})^N \le n^{-c}$.   
    Thus if $|\n(w)| \ge c \cdot n^{\delta} \ln n$, the probability that $M_1(w) = \emptyset$ is at most $n^{-c}$. The lemma follows.
\end{proof}
\begin{lemma}\label{lem:part1}
    Algorithm~\ref{alg:part1} runs in $O(1)$ rounds and uses $\ot(n^{1+\delta})$ messages. 
\end{lemma}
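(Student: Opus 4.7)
The plan is to verify the two complexity bounds by walking through the algorithm step by step, and using the two preceding lemmas to control the two potentially dangerous sources of messages (Step 5 and Step 13).

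For the round complexity, I would first note that Steps 1 and 4 are purely local computation requiring $0$ rounds, since sampling is done with private randomness and the rank check $\rk(\cen(C), w) \le \lceil n^{2\delta}\rceil$ is resolved entirely from 2-hop initial knowledge. Step 5 (sampled centers sending $\m_1$), Step 11 (each $w \in U$ sending an ``I join you'' acknowledgment along one edge to $\cen(S_w)$), and Step 13 (each isolated $w$ flooding its incident edges) each take exactly one round, and the loop structure over nodes/clusters is executed in parallel. Hence the total is $O(1)$ rounds.

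For the message complexity, I would partition the messages into three groups and bound each separately. (i) Messages sent in Step 5: instead of bounding this directly from the sender side, I would switch to the receiver side and apply Lemma~\ref{lem:alg1NS}, which gives $|M_1(w)| = O(n^\delta)$ w.h.p.\ for every $w \in U$. Taking a union bound over the $n$ nodes, the total contribution is $O(n^{1+\delta})$ w.h.p. (ii) Messages sent in Step 11: each $w \in U$ that receives at least one message sends at most one ``join'' message along the edge to $\cen(S_w)$, contributing at most $n$ messages total. (iii) Messages sent in Step 13: here I would invoke Lemma~\ref{lem:part1lowdegree}, which ensures that every $w$ with $M_1(w) = \emptyset$ satisfies $|\n(w)| = \tilde{O}(n^\delta)$ w.h.p.; thus each such $w$ contributes $\tilde{O}(n^\delta)$ messages, and summing over all (at most $n$) such nodes yields $\tilde{O}(n^{1+\delta})$ messages w.h.p.

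Summing the three contributions gives $\tilde{O}(n^{1+\delta})$ total messages w.h.p., completing the bound. The main (and essentially only) subtlety is remembering to bound Step 5's messages from the receiver side rather than the sender side — a naive per-cluster count gives $n^{-\delta} \cdot n \cdot n^{2\delta} = n^{1+\delta}$ in expectation, which already matches, but the receiver-side argument via Lemma~\ref{lem:alg1NS} is cleaner and yields the bound w.h.p.\ through a single union bound. No additional ideas beyond the two lemmas already established are needed.
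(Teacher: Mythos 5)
Your proof is correct and follows essentially the same route as the paper: both bound the rounds by observing that communication happens only in Steps 5, 11, and 13 (one round each), and both bound the messages by charging Step 5 to receivers via Lemma~\ref{lem:alg1NS}, Step 11 by one message per node, and Step 13 via the low-degree guarantee of Lemma~\ref{lem:part1lowdegree}. Nothing further is needed.
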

\begin{proof}
    We begin by bounding the round complexity of the algorithm. Communication occurs only in Steps 5, 11, and 13; each can be executed in a single round. The other steps only involve local computation, with Step 4 using initial 2-hop knowledge.
    
    We now bound the message complexity in the following manner. 
    In Step 5, the message count is bounded by $O(n^{1+\delta})$ w.h.p., given that each node $w \in U$ receives $O(n^\delta)$ messages w.h.p.~(as demonstrated in Lemma~\ref{lem:alg1NS}). 
    In Step 11 every node $w \in U$ sends a message along a single edge, incurring a total of $O(n)$ messages. In Step 13, the message count is $\ot(n^{1+\delta})$ because each node $w$ for which $M_1(w) = \emptyset$ has $\ot(n^\delta)$ neighbors (as per Lemma~\ref{lem:part1lowdegree}). 
\end{proof}

\begin{lemma} 
\label{lem:algo1Correctness}
After Algorithm~\ref{alg:part1} completes (a) $\calc_1$ contains $\Theta(n^{1-\delta})$ clusters w.h.p.~and every cluster is a star graph, (b) $H$ is a spanning subgraph of $G$ containing all cluster edges and all edges incident on nodes not in clusters, and (c) $H$ contains $\ot(n^{1+\delta})$ edges.
\end{lemma}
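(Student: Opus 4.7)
The plan is to prove the three conclusions (a)--(c) separately, where (a) follows from a Chernoff argument plus tracing the star structure through Step 11, (b) is a bookkeeping check on which edges enter $H$, and (c) combines Lemma~\ref{lem:part1lowdegree} with the edge-addition accounting.

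For part (a), I would start from the fact that each of the $n$ singleton clusters in $\calc_0$ is sampled independently with probability $n^{-\delta}$, so $\e[|\calc_1|] = n^{1-\delta}$. A standard Chernoff bound then yields $|\calc_1| = \Theta(n^{1-\delta})$ with high probability (this requires $n^{1-\delta} = \Omega(\log n)$, which holds in the regime $\delta \in [0, 1/2]$ of the main theorem). For the star-graph claim, I would observe that every $C \in \calc_1$ is initialized as the singleton $\{\cen(C)\}$ and that the only way $C$ grows in Algorithm~\ref{alg:part1} is in Step~11, where a node $w \in U$ with $S_w = C$ is attached via the single edge $\{w, \cen(C)\}$. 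Thus every non-center member of $C$ is adjacent only to $\cen(C)$ within $C$, and $C$ is a star centered at $\cen(C)$.

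For part (b), I would enumerate the two places edges enter $H$: Step~11 adds the join edges $\{w, \cen(S_w)\}$, which coincide exactly with the edges added to the clusters, so all cluster edges are in $H$; and Step~13 has each node $w$ with $M_1(w) = \emptyset$ add all its incident edges to $H$. The subtle point to verify is that the set of nodes ``not in clusters'' at the end of this phase is precisely $\{w \in U : M_1(w) = \emptyset\}$: every $v \notin U$ is by definition the center of some $C \in \calc_1$, and every $w \in U$ with $M_1(w) \ne \emptyset$ joins $S_w$ in Step~11. So the nodes left out of all clusters are exactly those adding all their incident edges in Step~13, giving the desired containment. Spanning-ness is immediate because $H$ is initialized with vertex set $V$.

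For part (c), I would bound each edge-producing step. Step~11 adds at most one edge per node in $U$, contributing $O(n)$ edges. Step~13 is the main contributor: by Lemma~\ref{lem:part1lowdegree}, every node $w$ with $M_1(w) = \emptyset$ satisfies $|\n(w)| = \ot(n^\delta)$ w.h.p., and there are at most $n$ such nodes, so the total contribution is $\ot(n^{1+\delta})$ w.h.p. Summing gives $|E(H)| = \ot(n^{1+\delta})$ w.h.p., as claimed; a union bound over the $n$ nodes absorbs the polynomially many low-probability failure events. The main obstacle in the whole argument is the careful identification in part (b) that the ``unclustered'' nodes after Phase~1 coincide exactly with the set whose incident edges are flushed into $H$ in Step~13 — once this is pinned down, both the containment claim and the edge count fall out cleanly.
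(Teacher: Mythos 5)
Your proposal is correct and follows essentially the same argument as the paper's proof: a Chernoff bound for the cluster count, tracing the star structure through the Step~11 join edges, identifying the unclustered nodes with those flushing their incident edges in Step~13, and combining the $O(n)$ join-edge count with Lemma~\ref{lem:part1lowdegree} for the $\ot(n^{1+\delta})$ bound. The only difference is that you spell out details (the Chernoff condition, the exact characterization of unclustered nodes, the union bound) that the paper leaves implicit.
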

\begin{proof}
(a) Since clusters are sampled independently with probability $n^{-\delta}$,  $|\calc_1| = \Theta(n^{1-\delta})$ w.h.p. Each cluster $C \in \calc_1$ is a star graph because it starts off as a single node $\cen(C)$ and then some neighbors $w \in \n(\cen(C))$ and the connecting edges $\{w, \cen(C)\}$ join cluster $C$. 
(b) All cluster edges are added to $H$ during Step 11, and all edges incident on nodes outside clusters are included in $H$ during Step 13.
(c) Each $w$ contributes at most one edge to $H$ in Step 11 for a total of $O(n)$ edges.
Additionally, node $w$ that does not receive a message from a neighboring cluster, adds $\ot(n^\delta)$ edges to $H$ (Lemma~\ref{lem:part1lowdegree}, Step 13). This contributes $\ot(n^{1+\delta})$ edges to $H$.
\end{proof}

\subsection{Cluster Growing: Phase 2}
\label{subsection:CG2}

In Algorithm Cluster Growing: Phase 2 (refer to pseudocode in Algorithm~\ref{alg:part2hd} and \ref{alg:part2ld}), the clusters constructed in Algorithm~\ref{alg:part1} are further subsampled and grown. The details of this algorithm are more complicated than Algorithm~\ref{alg:part1}, so we first describe it at a high level.
At the start of the algorithm, each cluster constructed in Algorithm~\ref{alg:part1} is sampled with probability $n^{-\delta}$. This produces a collection $\calc_2$ of $\Theta(n^{1-2\delta})$ clusters. We show that the clusters that are not sampled can be partitioned into two groups: (i) \textit{high-degree} clusters, which are guaranteed to have a sampled cluster in their neighborhood, and (ii) \textit{low-degree} clusters, which (as the name suggests) are guaranteed to have a small neighborhood, i.e., $\ot(n^{2\delta})$ nodes in their neighborhood.
Each high-degree cluster $C$ connects to a sampled cluster $C'$ in its neighborhood, thus leading to the growth of cluster $C'$.
For each low-degree cluster $C$ and each neighbor $w \in \n(C)$, we add an edge from $C$ to $w$ to the danner. See Figure~\ref{fig:phase2}. We can afford to do this because such clusters have low degrees.  
We now explain the algorithm in more detail. In fact, we have two separate algorithms, one for high $\delta$, i.e., $\delta \in (\frac{1}{3},\frac{1}{2}]$ (Algorithm~\ref{alg:part2hd}), and one for low $\delta$,
i.e., $\delta \in [0, \frac{1}{3}]$ (Algorithm~\ref{alg:part2ld}).
The high $\delta$ algorithm is easier and we explain it first.

\medskip

\noindent
\textbf{High $\delta$ case:}
In this case, each sampled cluster $C \in \calc_2$ can (at least in theory) communicate the fact that it has been sampled to all its neighboring nodes. This is because there are $\Theta(n^{1-2\delta})$ sampled clusters in $\calc_2$ w.h.p.~and for $\delta > 1/3$, $\Theta(n^{1-2\delta}) \times n = \Theta(n^{2-2\delta}) = O(n^{1+\delta})$.
The actual communication is implemented by $\cen(C)$ using the depth-2 BFS tree subroutine described in Section \ref{subsection:fastSubroutines} to build a depth-2 BFS tree and broadcast via this tree to its 2-hop neighborhood (see Step 3).
As established in the description of the depth-2 BFS tree subroutine, all of this takes $O(1)$ rounds and $O(n)$ messages.
If a cluster $C$ is not sampled, and some node $w \in C$ receives a message from a sampled cluster, then $C$ is identified as a high-degree cluster (Step 8). Every high-degree cluster identified in this manner connects to the sampled cluster $C'$ with the lowest \id\ that it hears from (Step 10). As a result, the (non-sampled) cluster $C$ joins sampled cluster $C'$ and an edge via which $C$ heard about $C'$ is added to cluster $C'$ as well as the danner $H$.
If a cluster $C$ is not sampled and it does not hear from a sampled cluster, it is identified as a \textit{low-degree cluster}.
As the name suggests, we show in Lemma \ref{lem:algo2hdldclusters} that w.h.p.~the center of every such low-degree cluster $C$ has only 
$O(n^{2\delta})$ nodes in its 2-hop neighborhood.
Since the total number of clusters is $O(n^{1-\delta})$ w.h.p., each low-degree cluster can afford to communicate with all nodes in its 2-hop neighborhood and add one edge connecting $w$ to $C$, for each $w \in \n(C)$, to the danner $H$ (Step 12).
Again, the actual implementation of this step uses the depth-2 BFS tree subroutine.

\setcounter{algorithm}{1}
\renewcommand{\thealgorithm}{\arabic{algorithm}(a)}
\begin{algorithm}
\caption{Cluster Growing: Phase 2 [High Delta]}\label{alg:part2hd}
\begin{algorithmic}[1]
\Require $\delta\in (\frac{1}{3},\frac{1}{2}]$; $G=(V,E)$, $\calc_1$ and $H$ are the set of clusters and partial danner output by Algorithm \ref{alg:part1}.
\Ensure a set $\calc_{2}$ of clusters, partially constructed danner $H$
\State Independently sample each cluster $C \in \calc_1$ with probability $n^{-\delta}$.
\Statex \textbf{Notation:} $\calc_2 \subseteq \calc_1$ denotes the set of sampled clusters; for each cluster $C \in \calc_2$, each node $w \in V(C) \setminus \{\cen(C)\}$ is a child of $\cen(C)$.
\Statex
    \For{$C\in \calc_{2}$} \Comment{$C$ is sampled.}
        \State \colorbox{gray!20}{$\cen(C)$ uses $\bdtbt(\cen(C))$ to broadcast $\m_2(\id_C)$ to $w\in\n_2(\cen(C))$
        }
    \EndFor
    \Statex
    \For{$C\in \calc_{1} \setminus \calc_2$} \Comment{$C$ is not sampled.}
        \State Each node $w \in V(C)$ computes $M_2(w) := \{(\id_S, e) \mid w\text{ receives }\m_2(\id_S)\text{ along edge }e\}$
        \State Each child $w$ in $C$ with $M_2(w) \not= \emptyset$ sends $\m_3(\min M_2(w))$ to $\cen(C)$
        \State $\cen(C)$ computes $M_3 := \{(\id_S, e) \mid \cen(C)\text{ receives }\m_3(\id_S, e)\}$
        \If{$M_2(\cen(C)) \cup M_3 \not= \emptyset$} \Comment{$C$ is a \textbf{high-degree cluster}}
            \State $\cen(C)$ computes $(\id_{C'}, e) = \min (M_2(\cen(C)) \cup M_3)$
            \State $\cen(C)$ connects to cluster $C'$ via edge $e$; edge $e$ is added to cluster $C'$ and to $H$.    
        \Else \Comment{$C$ is a \textbf{low-degree cluster}}
            \State \colorbox{gray!20}{ $\cen(C)$ uses $\bdtbt(\cen(C))$ and adds edges of the tree to $H$
            } 
        \EndIf
    \EndFor
\end{algorithmic}
\end{algorithm}

\begin{figure}[ht]
  \begin{subfigure}{0.5\textwidth}
    \centering
    \includegraphics[width=0.8\textwidth]{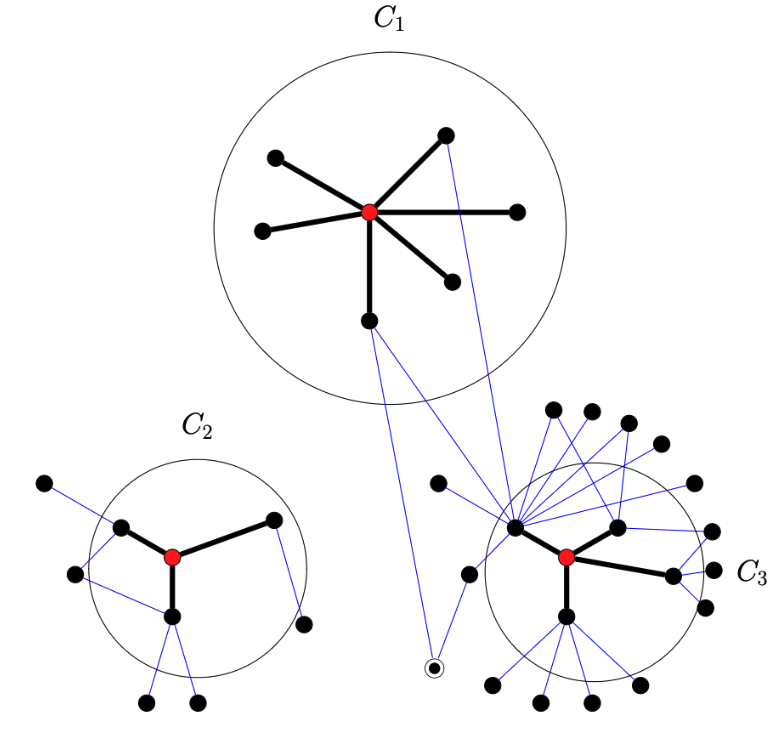}
  \end{subfigure}
  \begin{subfigure}{0.5\textwidth}
    \captionsetup{justification=raggedright, singlelinecheck=false, font=footnotesize}
    \includegraphics[width=0.8\textwidth]{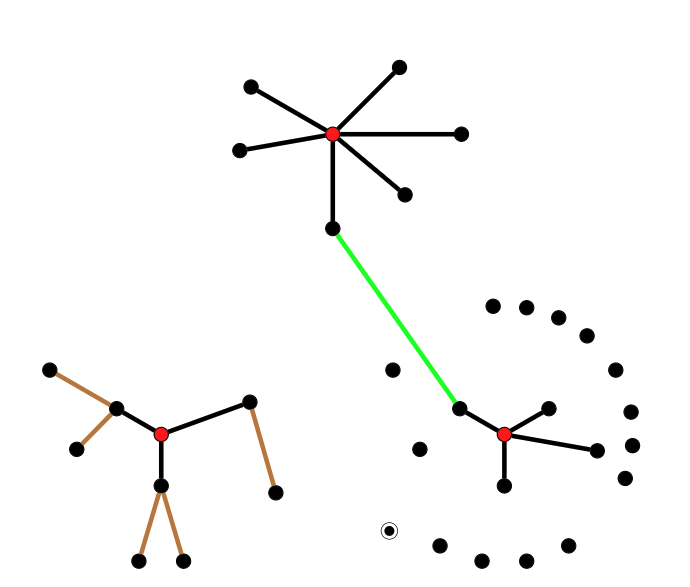}
  \end{subfigure}
  \caption{This figure depicts the Phase 2 of the Cluster Growing algorithm. The left figure shows the situation before growth and the right figure shows the situation after growth. There are three clusters denoted by $C_1$ (top), $C_2$ (left), $C_3$ (right). A red disk marks the center within each cluster. Suppose that $C_1$ is sampled during Phase 2, whereas the remaining two clusters remain non-sampled. Further suppose that $C_2$ is a low-degree cluster and $C_3$ is a high-degree cluster. So we add edges from $C_2$ to  all its neighbors (one edge per neighbor) to the danner; these are shown as brown thick edges. 
  The cluster $C_3$ is adjacent to $C_1$ and hears from it via the green edge; this green edge is added both to the cluster $C_1$ and to the danner $H$. Furthermore, cluster $C_3$ joins cluster $C_1$. 
  }\label{fig:phase2}
\end{figure}

\begin{lemma}
\label{lem:algo2hdldclusters}
    In Algorithm~\ref{alg:part2hd}, for every low-degree cluster $C \in \calc_1 \setminus \calc_2$, $|\n_2(\cen(C))|= O(n^{2\delta})$, w.h.p. 
\end{lemma}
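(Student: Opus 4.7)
The plan is to prove the statement by the contrapositive, leveraging the independence of the Phase-1 and Phase-2 samplings. First, I would observe that a node $u \in V$ is the center of some cluster in $\calc_2$ if and only if $u$ was sampled as a singleton cluster in Phase 1 (Algorithm~\ref{alg:part1}) and the resulting grown cluster was then sampled in Phase 2 (Step 1 of Algorithm~\ref{alg:part2hd}). Because these two Bernoulli trials are mutually independent, both across nodes and across phases, each $u$ becomes a Phase-2 center with probability $n^{-\delta}\cdot n^{-\delta}=n^{-2\delta}$, and these indicator events are mutually independent over distinct $u$.

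Next, fix any cluster $C\in\calc_1\setminus\calc_2$ and let $v=\cen(C)$. If some $u\in\n_2(v)\setminus\{v\}$ happens to be a Phase-2 center for some cluster $C''\in\calc_2$, then by Step 3 of Algorithm~\ref{alg:part2hd} the node $u$ invokes $\bdtbt(u)$, delivering the message $\m_2(\id_{C''})$ to every node of $\n_2(u)$. Since $v\in\n_2(u)$ and $v\in V(C)$, this forces $M_2(v)\neq\emptyset$, contradicting the hypothesis that $C$ is low-degree. Hence, under the event ``$C$ is low-degree'', no node of $\n_2(v)\setminus\{v\}$ may be a Phase-2 center.

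To close the argument I would suppose, for contradiction, that $|\n_2(v)|>c\cdot n^{2\delta}\ln n$ for a sufficiently large constant $c$. Because conditioning on $v$'s own Phase-1 and Phase-2 outcomes (which determine membership of $C$ in $\calc_1\setminus\calc_2$) does not disturb the samplings at other nodes, the probability that none of the $\ge c\cdot n^{2\delta}\ln n-1$ candidates in $\n_2(v)\setminus\{v\}$ is a Phase-2 center is at most $(1-n^{-2\delta})^{c\cdot n^{2\delta}\ln n-1}\le n^{-c+o(1)}$. A union bound over the at most $n$ possible choices of $v\in V$ then yields that, with high probability, every low-degree cluster $C$ satisfies $|\n_2(\cen(C))|=\ot(n^{2\delta})$.

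The main obstacle will be carefully justifying the independence of the Phase-2-center indicators at nodes other than $v$, once we condition on $C\in\calc_1\setminus\calc_2$. Since every Phase-1 sampling at a node and every Phase-2 sampling at a cluster are drawn independently, conditioning on the outcomes at $v$ alone leaves the joint distribution of these indicators at other nodes unchanged, so the product-form upper bound on the failure probability is valid and the Chernoff-style bound goes through.
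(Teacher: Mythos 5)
Your proposal is correct and follows essentially the same route as the paper's proof: assume $|\n_2(\cen(C))| \ge c\, n^{2\delta}\ln n$, show that w.h.p.\ some node of the 2-hop neighborhood is the center of a cluster in $\calc_2$, and observe that its Step-3 broadcast via $\bdtbt$ would reach $\cen(C)$ and force $C$ to be classified high-degree, a contradiction. The only difference is bookkeeping: you collapse the two samplings into a single per-node probability $n^{-2\delta}$ with an explicit independence/conditioning justification and union bound, whereas the paper argues in two stages (many $\calc_1$-centers in $\n_2(\cen(C))$ by Chernoff, then at least one of them sampled into $\calc_2$).
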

\begin{proof}
    Suppose $|\n_2(w)| \ge c_1 n^{2\delta} \ln n$ for some constant $c_1$. Then w.h.p., for some constant $c_2$ (that depends on $c_1$) there are at least $c_2 n^{\delta} \ln n$ nodes in $\n_2(w)$ that are centers of clusters in $\calc_1$. 
    This implies that w.h.p.~there is at least one node $v \in \n_2(w)$ that is the center of some cluster in $\calc_2$.
    Such a node $v$ informs all 2-hop neighbors (in Step 3), which means this $\cen(C)$ will receive this message, and cluster $C$ will be classified as a high-degree cluster -- a contradiction.  
\end{proof}

\begin{lemma}\label{lem:part2hd}
    Algorithm~\ref{alg:part2hd} takes $O(1)$ rounds and uses $\ot(n^{1+\delta})$ messages.
\end{lemma}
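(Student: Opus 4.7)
The plan is to work through Algorithm~\ref{alg:part2hd} step by step. For the round complexity, Step 1 is local sampling, and all communication happens in Steps 3, 5--7, 10, and 12. By Lemma~\ref{lem:algo1Correctness}(a), every cluster in $\calc_1$ is a star, so the cluster-internal messages in Steps 5--7 and the single connect-edge exchange in Step 10 each take $O(1)$ rounds. The two batches of parallel $\bdtbt$ calls in Step 3 and Step 12 each run in $O(1)$ rounds per invocation by the lemma from Section~\ref{subsection:fastSubroutines}. The one subtlety when running these calls simultaneously is edge congestion; since each recipient only needs the minimum-\id\ message it receives, a standard aggregation (intermediate nodes propagate only the running minimum) keeps per-edge load at $O(1)$, so the total round complexity is $O(1)$.

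For the message complexity, I would first dispatch the cheap steps: the child-to-center messages in Step 6 and the single connect-edge messages in Step 10 collectively cost $O(\sum_{C\in\calc_1}|V(C)|) = O(n)$, since the clusters in $\calc_1$ partition the $n$ vertices. What remains is to bound Steps 3 and 12 separately.

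For Step 3, each call $\bdtbt(\cen(C))$ with $C\in\calc_2$ uses $O(|\n_2(\cen(C))|) = O(n)$ messages in the worst case. The sampling in Step 1 is over the $O(n^{1-\delta})$ clusters of $\calc_1$ (Lemma~\ref{lem:algo1Correctness}(a)) each with probability $n^{-\delta}$, so a Chernoff bound (with the usual $O(\log n)$ fallback in the small-mean regime) gives $|\calc_2| = \ot(n^{1-2\delta})$ w.h.p. Thus Step 3 uses at most $\ot(n\cdot n^{1-2\delta}) = \ot(n^{2-2\delta})$ messages. The key arithmetic fact is $2 - 2\delta \le 1 + \delta$ exactly when $\delta \ge 1/3$, which is precisely the regime of this algorithm, so this contribution is $\ot(n^{1+\delta})$. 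For Step 12, Lemma~\ref{lem:algo2hdldclusters} bounds $|\n_2(\cen(C))| = O(n^{2\delta})$ for every low-degree cluster, and there are at most $|\calc_1| = O(n^{1-\delta})$ such clusters, giving $\ot(n^{(1-\delta)+2\delta}) = \ot(n^{1+\delta})$ total messages.

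The main obstacle, and the reason Algorithm~\ref{alg:part2hd} is restricted to $\delta > 1/3$, is exactly the Step 3 bound: a brute-force broadcast by every sampled cluster to its entire 2-hop neighborhood is affordable only when $2 - 2\delta \le 1 + \delta$. Outside this regime one must resort to the more delicate estimation/filtering ideas used in the low-$\delta$ algorithm (Algorithm~\ref{alg:part2ld}).
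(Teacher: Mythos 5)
Your proof is correct and follows essentially the same route as the paper: the same step-by-step decomposition, the same $|\calc_2| = \ot(n^{1-2\delta})$ bound combined with the $\delta \ge 1/3$ arithmetic for Step 3, and the same use of Lemma~\ref{lem:algo2hdldclusters} with $O(n^{1-\delta})$ clusters for Step 12. Your extra remark on per-edge congestion among simultaneous $\bdtbt$ broadcasts (resolved by forwarding only the running minimum \id) is a detail the paper does not spell out, and it is a valid and harmless addition since clusters only use $\min M_2(\cdot)$ anyway.
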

\begin{proof}
    We first bound the round complexity of the algorithm. Note that communication occurs only in Steps 3, 6, 10, and 12. Steps 6 and 10 simply involve sending messages to neighbors and can be executed in 1 round each. Steps 3 and 12 involve executing the depth-2 BFS tree subroutine, and as shown in Section \ref{subsection:fastSubroutines}, this takes $O(1)$ rounds. The remaining steps involve only local computation.

    We now bound the message complexity as follows.
    In Step 3, for each $C \in \calc_2$, we could send as many as $O(n)$ messages (which is a trivial bound). Since $|\calc_2| = O(n^{1-2\delta})$ w.h.p., this is a total of $O(n^{2-2\delta}) = O(n^{1+\delta})$ messages, the latter bound holds because $\delta \ge 1/3$.
    Step 6 incurs at most $n$ messages because each child of a cluster sends at most one message.
    Similarly, $O(n)$ is an upper bound on the number of messages sent in Step 10. 
    Step 12 requires $\ot(n^{1+\delta})$ messages because $\cen(C)$ has as at most $\ot(n^{2\delta})$ 2-hop neighbors by Lemma~\ref{lem:algo2hdldclusters}, and there are $\Theta(n^{1-\delta})$ such clusters w.h.p. Here we also use the fact that the number of messages used by the depth-2 BFS tree subroutine is linear in the number of nodes in the tree.
\end{proof}

\noindent
\textbf{Low $\delta$ case:}
When $\delta\in[0,\frac{1}{3}]$, it is message-inefficient for the center of a sampled cluster $C$ to inform its 2-hop neighbors that $C$ has been sampled. Specifically, Step 3 in Algorithm \ref{alg:part2hd} uses $O(n^{2-2\delta})$ messages, which is bounded above by $O(n^{1+\delta})$ \textit{only for large} $\delta$, i.e., $\delta \ge 1/3$.
To obtain the $O(n^{1+\delta})$ message complexity even for small $\delta$,
unsampled clusters have to learn if there is a neighboring sampled cluster in a more message-frugal manner. 
This challenge is overcome in Algorithm \ref{alg:part2ld}.
Towards this goal, each unsampled cluster $C \in \calc_1 \setminus \calc_2$ considers all messages received in Algorithm \ref{alg:part1}, from sampled clusters in $\calc_1$.
More specifically, recall that we use $M_1(w)$ to denote the set of \id s of clusters in $\calc_1$ that sent $w$ a message in Algorithm \ref{alg:part1} (see Steps 5 and 7 in Algorithm \ref{alg:part1}). 
Each center of a cluster $C \in \calc_1 \setminus \calc_2$ (i.e., an unsampled cluster) then gathers the \id s of all clusters in $\calc_1$ that sent a message to some node $w \in C$ in Algorithm \ref{alg:part1}.
This is done in Step 5 of Algorithm \ref{alg:part2ld} by each node $w \in C$, $w \not= \cen(C)$, simply sending the \id s in $M_1(w)$ one-by-one to $\cen(C)$. This is still round-efficient because w.h.p.~$|M_1(w)| = O(n^{\delta}) = O(n^{1-2\delta})$, with the latter equality being true for $\delta \le 1/3$. This is also message-efficient, again because $|M_1(w)| = O(n^{\delta})$, and so $O(n^{1+\delta})$ messages are sent in this step.
$\cen(C)$ computes the set $M_1(C)$ of \id s of clusters sampled in Algorithm \ref{alg:part1} that sent the cluster $C$ a message. 
If this set is large, i.e., at least $\frac{1}{2} n^{\delta} \ln n$ in size, then the cluster $C$ is classified as a \textit{high-degree cluster}. Such a cluster can be confident that w.h.p.~at least one of the clusters with \id\ in $M_1(C)$ is sampled in Algorithm \ref{alg:part2ld} and belongs to $\calc_2$ (Lemma \ref{lem:algo2hdcluster}). Cluster $C$ can then find and connect to one such cluster $C'$ (Steps 8-10). On the other hand, if $|M_1(C)|$ is small, then we show that it must be the case that the total number of neighbors of cluster $C$ is small (Lemma \ref{lem:algo3ldclusters}). 
In this case (as in Algorithm \ref{alg:part2hd}), we want to grow cluster $C$, i.e., add edges connecting cluster $C$ to each of its neighbors $w$, to the danner $H$.
For this purpose we use the subroutine $\gc(C)$ defined earlier. This subroutine uses $O(\sqrt{D})$ rounds and $O(|V(C)| + D)$ messages, where $D$ is the size of the neighborhood of $C$.
In Lemma \ref{lem:algo3ldclusters} we show that $|\n(C)| = \ot(n^{2\delta})$, w.h.p.
Combining this with the round and message complexity of $\gc(C)$, we see that each low-degree cluster $C$ can connect to all its neighbors in $\ot(n^{\delta})$ rounds and $\ot(|V(C)| + n^{2\delta})$ messages.
Since $\delta \in [0, 1/3]$, this yields a round complexity of $\ot(n^{1-2\delta})$ rounds.
To get a bound on the overall message complexity, we sum over all clusters $C \in \calc_1$ and get an $\ot(n^{1 + \delta})$ bound on the message complexity using the fact that the $|\calc_1| = O(n^{1-\delta})$ w.h.p.

\setcounter{algorithm}{1}
\renewcommand{\thealgorithm}{\arabic{algorithm}(b)}
\begin{algorithm}
\caption{Cluster Growing: Phase 2 [Low Delta]}\label{alg:part2ld}
\begin{algorithmic}[1]
\Require $\delta\in [0,\frac{1}{3}]$; $G=(V,E)$, $\calc_1$ and $H$ are the set of clusters and partial danner output by Algorithm \ref{alg:part1}.
\Ensure a set $\calc_{2}$ of clusters, partially constructed danner $H$
\State Independently sample each cluster $C \in \calc_1$ with probability $n^{-\delta}$.
\Statex \textbf{Notation:} $\calc_2 \subseteq \calc_1$ denotes the set of sampled clusters; for each cluster $C \in \calc_2$, each node $w \in V(C) \setminus \{\cen(C)\}$ is a child of $\cen(C)$.
\Statex
    \For{$C\in\calc_1$} 
    \State $\cen(C)$ broadcasts information on whether $C$ is sampled to all its children. 
    \EndFor
    \Statex
    \For{$C\in \calc_{1} \setminus \calc_{2}$} \Comment{$C$ is not sampled}
        \State Each node $w \in V(C)\setminus \{\cen(C)\}$ transmits \textit{all} the elements in $M_1(w)$ to $\cen(C)$ along edge $\{w, \cen(C)\}$.
        \State $\cen(C)$ computes $M_1(C)$, a maximal subset of $\cup_{w} M_1(w)$ with unique \id s.
        \If{$|M_1(C)| \geq \frac{1}{2}n^{\delta}\ln n$ }\Comment{$C$ is a \textbf{high-degree cluster}}
            \State $\cen(C)$ chooses $X \subseteq M_1(C)$, consisting of the smallest $\lfloor \frac{1}{2}n^{\delta}\ln n\rfloor$ \id s from $M_1(C)$. 
            \State for each $(\id_{C'}, e) \in X$, $\cen(C)$ sends a message along edge $e$ to check if $C' \in \calc_2$.
            \State On finding $C' \in \calc_2$, $\cen(C)$ connects to cluster $C'$ along edge $e$ and adds edge $e$ to $H$.
        \Else \Comment{$C$ is a \textbf{low-degree cluster}}
            \State \colorbox{gray!20}{$\cen(C)$ calls the subroutine $\gc(C)$}  \Comment{Refer to~\ref{subsection:fastSubroutines}}
            \State Edges returned by this subroutine are added to $H$.
        \EndIf
    \EndFor
\end{algorithmic}
\end{algorithm}

\begin{lemma}
\label{lem:algo3ldclusters}
    In Algorithm \ref{alg:part2ld}, if $C$ is a low-degree cluster then $|\n(C)|= O(n^{2\delta} \ln n)$, w.h.p. 
\end{lemma}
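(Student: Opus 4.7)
The plan is to argue the contrapositive by a Chernoff bound, following the template of Lemma~\ref{lem:part1lowdegree}. I would show that if $|\n(C)| \ge c \cdot n^{2\delta} \ln n$ for a sufficiently large constant $c$, then with high probability $|M_1(C)| \ge \frac{1}{2} n^{\delta} \ln n$, contradicting the low-degree hypothesis.

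The key object is the \emph{pool of potential contributors}. For each child $u \in V(C) \setminus \{\cen(C)\}$, let $B_u := \{v' \in \n(u) : \rk(v', u) \le \lceil n^{2\delta} \rceil\}$ be the rank bucket used in Step 4 of Algorithm~\ref{alg:part1}; by construction, any node $v' \in B_u$ that is sampled in Algorithm~\ref{alg:part1} transmits $\m_1(\id_{v'})$ to $u$, so $\id_{v'} \in M_1(u) \subseteq M_1(C)$. Writing $P := \bigcup_{u \in V(C) \setminus \{\cen(C)\}} B_u$, sampled elements of $P$ yield distinct IDs in $M_1(C)$, so it suffices to lower-bound the number of sampled nodes in $P$.

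I would then proceed in two steps. \emph{Step 1 (structural):} translate $|\n(C)| \ge c \cdot n^{2\delta} \ln n$ into $|P| = \Omega(n^{2\delta} \ln n)$. The idea is to decompose $\n(C) \subseteq (\n(\cen(C)) \setminus V(C)) \cup \bigcup_{u \in V'} (\n(u) \setminus V(C))$, where $V' := V(C) \setminus \{\cen(C)\}$, and split each child into ``low-degree'' ($|\n(u)| \le \lceil n^{2\delta} \rceil$, in which case $B_u = \n(u)$ feeds all of $u$'s neighborhood into $P$) or ``high-degree'' (in which case $B_u$ is already full with $\lceil n^{2\delta} \rceil$ elements). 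The residual piece $\n(\cen(C)) \setminus V(C)$ is absorbed by a Lemma~\ref{lem:part1lowdegree}-style argument applied at $\cen(C)$. \emph{Step 2 (probabilistic):} since samples in Algorithm~\ref{alg:part1} are mutually independent with probability $n^{-\delta}$ and conditioning on $C \in \calc_1 \setminus \calc_2$ only fixes the sample status of $\cen(C)$, a standard Chernoff bound yields $|\{v' \in P : v' \text{ sampled}\}| \ge \frac{1}{2} n^{\delta} \ln n$ with probability $1 - 1/n^{\Omega(1)}$. A union bound over all clusters $C \in \calc_1 \setminus \calc_2$ then delivers the \emph{w.h.p.}\ claim.

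The main obstacle is Step 1, converting a bound on $|\n(C)|$ into a comparable bound on $|P|$. A single high-degree child caps its bucket at $\lceil n^{2\delta} \rceil$, so the naive inequality $|P| \ge |\n(C)|$ can fail when $\n(C)$ is concentrated on the neighbors of one huge-degree child. The argument must therefore split $\n(C)$ according to which member of $V(C)$ each neighbor touches and exploit the $\lceil n^{2\delta} \rceil$ rank threshold carefully --- either by combining enough buckets from many low-degree children, or by using the rank condition on high-degree children to guarantee that each one contributes a full bucket of $\lceil n^{2\delta} \rceil$ distinct low-rank neighbors into $P$, accumulated across the $\Omega(\ln n)$ children implied by the hypothesis.
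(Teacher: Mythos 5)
Your reduction to the pool $P=\bigcup_{u} B_u$ is fine as far as it goes (each sampled member of $P$ does contribute a distinct \id\ to $M_1(C)$), but Step 1 is a genuine gap, and not merely an unproven technicality: the implication ``$|\n(C)| \ge c\, n^{2\delta}\ln n \Rightarrow |P| = \Omega(n^{2\delta}\ln n)$'' is false in general. Each bucket $B_u$ is capped at $\lceil n^{2\delta}\rceil$ elements, and the hypothesis on $|\n(C)|$ can be realized by a \emph{single} child $u$ of $\cen(C)$ whose degree is $c\, n^{2\delta}\ln n$ while every other member of $C$ has small degree; then $|P| = O(n^{2\delta})$, a full $\ln n$ factor short of what Step 2 needs (with sampling rate $n^{-\delta}$ you need $|P| \gtrsim n^{2\delta}\ln n$ to push the sampled count up to the high-degree threshold $\tfrac12 n^{\delta}\ln n$ w.h.p.). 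The repairs you sketch do not close this: the hypothesis does \emph{not} imply $\Omega(\ln n)$ children (one child suffices), and even when there are many high-degree children their buckets can coincide --- the same few low-\id\ nodes can be the lowest-ranked neighbors of every child --- so ``full buckets accumulated across children'' does not lower-bound the union $|P|$. A further, secondary issue: your claim that conditioning on $C \in \calc_1\setminus\calc_2$ only fixes the sample status of $\cen(C)$ is too quick, since conditioning on a child $u$ having joined $\cen(C)$'s cluster in Phase 1 forces every bucket member of $u$ with \id\ smaller than $\id_{\cen(C)}$ to be unsampled (otherwise $u$ would have joined that cluster), so the sampling inside $P$ is not unconditioned.

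The paper's proof avoids the union bound you are trying to establish. It applies Chernoff directly to $\n(C)$ to obtain, w.h.p., a set $S \subseteq \n(C)$ of at least $\tfrac12 n^{\delta}\ln n$ neighbors sampled in Phase 1, and then argues a dichotomy for each $w \in S$ with neighbor $w' \in V(C)$: either $w$ passed the rank filter, in which case $\id_w$ lands in $M_1(C)$, or it did not, in which case $w'$ has at least $\lceil n^{2\delta}\rceil$ neighbors and the paper argues that $w'$ alone receives enough Phase-1 messages to make $C$ high-degree. This per-sampled-neighbor case analysis is what substitutes for your Step 1, and its second branch is exactly the ``one huge-degree child'' configuration that blocks your structural claim; your instinct in flagging it as the main obstacle is right, but the resolution you propose does not go through, whereas the paper switches in that case from counting distinct sampled senders to counting messages arriving at the single full-bucket node (note that the same $\lceil n^{2\delta}\rceil$-cap tension you identified resurfaces there, so the threshold constants in that branch have to be read carefully).
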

\begin{proof}
Consider an arbitrary low-degree cluster $C$.
Suppose $|\n(C)| \ge c n^{2\delta} \ln n$ for a large enough constant $c$. Then, by an application of Chernoff bounds, w.h.p.~at least $\frac{1}{2} n^{\delta} \ln n$ nodes are sampled in Algorithm \ref{alg:part1} and become part of the cluster set $\calc_1$. Let $S \subseteq \n(C)$ denote this subset of neighbors of $C$ who have been sampled in Algorithm \ref{alg:part1}. Consider a node $w \in S$ and let $w' \in V(C)$ be a neighbor of $w$ in $C$. 

If $w$ does not send $w'$ a message $\m_1(\id_{w'})$ in Step 5 of Algorithm \ref{alg:part1}, it must mean that $w'$ has at least $\lceil n^{2\delta} \rceil$ neighbors. In that case, w.h.p.~$w'$ will receive at least $\frac{1}{2} n^{\delta} \ln n$ messages from its neighbors in Step 5 of Algorithm \ref{alg:part1}. 
This contradicts the fact that $C$ is a low-degree cluster.
Hence, every node $w \in S$ must send its neighbor in $C$ a message in Step 5 of Algorithm \ref{alg:part1}. Since $|S| \ge \frac{1}{2} n^{\delta} \ln n$, it must mean that $|M_1(C)| \ge \frac{1}{2} n^{\delta} \ln n$, again contradicting the fact that $C$ is a low-degree cluster.
This implies that $|\n(C)| < c n^{2\delta} \ln n$ and the lemma follows.  
\end{proof}

\begin{lemma}
\label{lem:algo2hdcluster}
    At the end of Algorithm~\ref{alg:part2hd}, each cluster $C \in \calc_1 \setminus \calc_2$ (i.e., a non-sampled cluster) that is designated a high-degree cluster will connect to a cluster $C' \in \calc_2$ (i.e., a sampled cluster), w.h.p.
\end{lemma}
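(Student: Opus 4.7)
The plan is as follows. By the definition of a high-degree cluster (Step 7 of Algorithm~\ref{alg:part2ld}), we have $|M_1(C)| \ge \tfrac{1}{2}\, n^{\delta}\ln n$, and hence the set $X$ of $\lfloor \tfrac{1}{2}\, n^{\delta}\ln n \rfloor$ smallest-\id\ tuples chosen by $\cen(C)$ in Step 8 has the same cardinality. Every cluster whose \id\ appears in $X$ is a member of $\calc_1$, since it sent $\m_1(\cdot)$ to some node of $C$ during Step 5 of Algorithm~\ref{alg:part1}. The crucial observation is that $\calc_2$ is formed, in Step 1 of Algorithm~\ref{alg:part2ld}, by independently re-sampling each cluster of $\calc_1$ with probability $n^{-\delta}$, using fresh random coins that are independent of everything in Algorithm~\ref{alg:part1}. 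Therefore, conditioned on $M_1(C)$ (i.e., conditioned on all random choices of Algorithm~\ref{alg:part1}), the membership-in-$\calc_2$ indicators for the clusters whose \id s lie in $X$ remain mutually independent Bernoulli random variables of parameter $n^{-\delta}$.

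Given this independence, the probability that \emph{none} of the $|X| \ge \tfrac{1}{2}\, n^{\delta}\ln n$ probed clusters is sampled into $\calc_2$ is at most $(1 - n^{-\delta})^{|X|} \le \exp\bigl(-\tfrac{1}{2}\ln n\bigr) = n^{-1/2}$. By increasing the threshold constant in the high-degree classification from $\tfrac{1}{2}$ to a suitably large constant (an adjustment that is absorbed by the $\ot(\cdot)$ in all of our complexity bounds), this per-cluster failure probability can be driven below $n^{-(c+1)}$ for any desired constant $c$. When the event succeeds, the short probes issued by $\cen(C)$ in Step 9 along the stored edges discover some $C' \in \calc_2$, and $\cen(C)$ then connects $C$ to $C'$ by adding the corresponding edge $e$ to $H$ in Step 10.

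Finally, I would close the argument with a union bound over the $O(n^{1-\delta})$ unsampled high-degree clusters in $\calc_1 \setminus \calc_2$. The main obstacle---and really the only delicate point---is keeping the independence clean: the event ``$C$ is designated high-degree'' is a function of the Algorithm~\ref{alg:part1} randomness together with the indicator ``$C \notin \calc_2$,'' and we must make sure that conditioning on it does not bias the sampling of the \emph{other} clusters in $M_1(C)$ into $\calc_2$. Since the Step 1 re-sampling in Algorithm~\ref{alg:part2ld} is performed independently per cluster and independently of all prior randomness, the sampling of clusters in $X \setminus \{C\}$ is unaffected by this conditioning, which is what makes the product-of-independent-events estimate above legitimate.
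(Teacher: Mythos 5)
Your proof is correct and follows essentially the same route as the paper's: both argue that the $\lfloor \tfrac{1}{2} n^{\delta}\ln n\rfloor$ clusters probed in Steps 8--9 of the low-$\delta$ algorithm are re-sampled into $\calc_2$ independently with probability $n^{-\delta}$, so w.h.p.\ at least one probe finds a sampled cluster, to which Step 10 then connects $C$. The only difference is that you make explicit what the paper leaves implicit---the independence of the Phase-2 coins from the event that $C$ is designated high-degree, the union bound over all such clusters, and the observation that the constant $\tfrac{1}{2}$ by itself only gives failure probability about $n^{-1/2}$, so the threshold constant must be enlarged to obtain a genuine w.h.p.\ guarantee---which is a fair sharpening of the paper's argument rather than a different one.
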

\begin{proof}
Consider a cluster $C \in \calc_1 \setminus \calc_2$ that is designated a high-degree cluster.
In Steps 8-9 in Algorithm \ref{alg:part2ld}, $C$ contacts a set $X$ of neighboring nodes that were sampled in Algorithm \ref{alg:part1}, where $|X| = \lfloor \frac{1}{2}n^{\delta}\ln n \rfloor$. 
Note that each node in $X$ is the center of a cluster in $\calc_1$.
By applying Chernoff bounds we get that w.h.p.~at least one of the clusters in $\calc_1$ whose center belongs to $X$ is sampled in Algorithm \ref{alg:part2ld}.
In Step 10 of Algorithm \ref{alg:part2ld}, cluster $C$ connects to one such cluster $C'$.
\end{proof}

\begin{lemma}\label{lem:part2ld}
    Algorithm~\ref{alg:part2ld} takes $\ot(n^{1-2\delta})$ rounds and uses $\ot(n^{1+\delta})$ messages.
\end{lemma}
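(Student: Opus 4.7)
The proof plan is to analyze Algorithm~\ref{alg:part2ld} step-by-step, bounding the round and message contribution of each communicating step (Steps 3, 5, 9, 10, 12) and then summing. Three earlier results drive the analysis: (i) $|M_1(w)| = O(n^\delta)$ w.h.p.\ for every $w \in U$ (Lemma~\ref{lem:alg1NS}), (ii) every low-degree cluster $C$ satisfies $|\n(C)| = O(n^{2\delta}\ln n)$ w.h.p.\ (Lemma~\ref{lem:algo3ldclusters}), and (iii) $|\calc_1| = \Theta(n^{1-\delta})$ w.h.p.\ (Lemma~\ref{lem:algo1Correctness}). Since $\delta \le 1/3$, we have $n^\delta \le n^{1-2\delta}$ (up to polylog factors), which is what ultimately lets per-cluster round costs fit the target budget.

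For round complexity, first observe that Step 3 costs $O(1)$ rounds because each Phase-1 cluster is a star of diameter $1$ and the clusters are vertex-disjoint. Step 5 is essentially serial on each edge $\{w,\cen(C)\}$: node $w$ forwards $|M_1(w)| = \ot(n^\delta)$ items, but since distinct edges act in parallel this step finishes in $\ot(n^\delta) = \ot(n^{1-2\delta})$ rounds. Step 9 sends at most $|X| = O(n^\delta \ln n)$ probes, each routed through some child $w \in V(C)$; even in the adversarial case where all probes funnel through a single child, the bottleneck edge carries $\ot(n^\delta)$ messages and the step finishes in $\ot(n^{1-2\delta})$ rounds. Finally, Step 12 calls $\gc(C)$ with $N = |\n(C)| = \ot(n^{2\delta})$ by Lemma~\ref{lem:algo3ldclusters}, which by Lemma~\ref{lemma:GCComplexity} runs in $O(\sqrt{N}) = \ot(n^\delta) = \ot(n^{1-2\delta})$ rounds. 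Local steps contribute nothing, so the overall round complexity is $\ot(n^{1-2\delta})$.

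For message complexity, Step 3 uses $O(|V(C)|)$ messages per cluster, summing to $O(n)$ over all disjoint clusters. Step 5 transmits $\sum_{w} |M_1(w)|$ messages; by Lemma~\ref{lem:alg1NS} this is $\ot(n \cdot n^\delta) = \ot(n^{1+\delta})$. Steps 8--10 send at most $|X|+O(1) = \ot(n^\delta)$ messages per cluster, for a total of $\ot(n^{1-\delta}\cdot n^\delta) = \ot(n)$ across the $\ot(n^{1-\delta})$ clusters in $\calc_1\setminus\calc_2$. For Step 12, each low-degree cluster incurs $O(|V(C)| + |\n(C)|) = \ot(n^{2\delta})$ messages by Lemma~\ref{lemma:GCComplexity}, summing to $\ot(n^{1-\delta}\cdot n^{2\delta}) = \ot(n^{1+\delta})$ over all low-degree clusters. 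The dominant contribution is therefore $\ot(n^{1+\delta})$.

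The main obstacle, and the most delicate point, is the round bound for Steps 5 and 9: neither admits further parallelization beyond per-edge capacity, so the bounds rely entirely on the $\ot(n^\delta)$ size bound from Lemma~\ref{lem:alg1NS} combined with the hypothesis $\delta \le 1/3$, precisely where $n^\delta$ meets $n^{1-2\delta}$. Any relaxation of either hypothesis would break the round bound, which is exactly why a separate high-$\delta$ algorithm (Algorithm~\ref{alg:part2hd}) is needed for $\delta > 1/3$.
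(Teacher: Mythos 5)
Your proposal is correct and follows essentially the same route as the paper's proof: a step-by-step accounting of the communicating steps (3, 5, 9--10, 12--13), using Lemma~\ref{lem:alg1NS} for the $\ot(n^\delta)$ bound on $|M_1(w)|$ (which governs both Step 5 and the worst-case single-child bottleneck in Step 9), Lemma~\ref{lem:algo3ldclusters} together with Lemma~\ref{lemma:GCComplexity} for the $\gc(C)$ calls, and the $O(n^{1-\delta})$ bound on the number of clusters to sum the message costs, with $\delta \le 1/3$ converting $\ot(n^\delta)$ into $\ot(n^{1-2\delta})$ exactly as in the paper. No substantive differences.
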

\begin{proof}
    We first bound the running time of the algorithm. 
    We note that Steps 6, 7, and 8 involve local computations; we now focus on the remaining steps.
    Step 3 requires $O(1)$ rounds.
    Step 5, however, requires $\ot(n^{\delta})$ rounds (which is $\ot(n^{1-2\delta})$ for $\delta \in [0, \frac{1}{3}]$), because $|M_1(w)| = O(n^\delta)$ for each node $w$ (refer to Lemma~\ref{lem:alg1NS}).
    Step 9 takes $\ot(n^{\delta})$ rounds because in the worst case all nodes in $X$ may have to be contacted via the same child of $\cen(C)$.
    Step 10 takes $O(1)$ rounds because $\cen(C)$ will only contact a single neighboring cluster $C'$ in this step.
    Finally, Steps 12-13 take $\ot(n^{\delta}) = \ot(n^{1-2\delta})$ rounds. This follows from the fact that every low-degree cluster has $\ot(n^{2\delta})$ neighbors (Lemma \ref{lem:algo3ldclusters}) and from the round complexity of the $\gc(C)$ subroutine (Lemma \ref{lemma:GCComplexity}).
    
    We now bound the message complexity of the algorithm.
    Step 3 uses $n$ messages because each node in the graph can be a child in at most one cluster and thus can receive at most one message.
    Step 5 incurs $\ot(n^{1+\delta})$ messages because each node $w$ sends $\ot(n^{\delta})$ messages (see Lemma~\ref{lem:alg1NS}). 
    Step 9 requires $\ot(n)$ messages, taking into account $|X| = \ot(n^{\delta})$ and the existence of up to $O(n^{1-\delta})$ clusters in $\calc_1$.
    Step 10 requires $O(n^{1-\delta})$ messages because each of at most $O(n^{1-\delta})$ clusters send $O(1)$ messages.
    Step 12-13 costs $\ot(n^{1+\delta})$ messages because each cluster contributes $\ot(n^{2\delta})$ edges (refer to Lemma~\ref{lem:algo3ldclusters}), and there are $O(n^{1-\delta})$ such clusters. This analysis step also depends on the linear message complexity of the $\gc(C)$ subroutine (Lemma \ref{lemma:GCComplexity}).
\end{proof}

\begin{lemma}
\label{lem:algo2Correctness}
    After Algorithm Cluster Growing: Phase 2 (refer to Algorithm~\ref{alg:part2hd}, \ref{alg:part2ld}) completes (a) there are $\Theta(n^{1-2\delta})$ clusters w.h.p., (b) every cluster is a tree with $O(1)$ diameter and all cluster edges belong to $H$, and (c) for every cluster $C \in \calc_1 \setminus \calc_2$ that is designated low-degree, there is one edge in $H$ connecting cluster $C$ to each of its neighbors, and (d) $H$ contains $\ot(n^{1+\delta})$ edges w.h.p. 
\end{lemma}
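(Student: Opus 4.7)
The plan is to prove the four parts of the lemma by drawing on the invariants already established for Algorithm~\ref{alg:part1} (in particular, $|\calc_1| = \Theta(n^{1-\delta})$ w.h.p., every cluster in $\calc_1$ is a star) together with the structural properties of Algorithms~\ref{alg:part2hd} and \ref{alg:part2ld}. I would handle each of the four parts separately, but in an order that uses earlier parts to support later ones: (a), then (b), then (c), then (d).

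For part (a), I would start from the fact that each cluster in $\calc_1$ is independently sampled with probability $n^{-\delta}$ in Step 1 of both Algorithm~\ref{alg:part2hd} and Algorithm~\ref{alg:part2ld}, so $\mathbb{E}[|\calc_2|] = \Theta(n^{1-2\delta})$. A straightforward application of Chernoff bounds (using $|\calc_1| = \Theta(n^{1-\delta})$ w.h.p., from Lemma~\ref{lem:algo1Correctness}(a)) gives $|\calc_2| = \Theta(n^{1-2\delta})$ w.h.p. After Phase 2 the unsampled clusters are partitioned into (i) high-degree clusters, each of which is absorbed into a cluster of $\calc_2$ (Lemma~\ref{lem:algo2hdcluster} for the low-$\delta$ case, and Lemma~\ref{lem:algo2hdldclusters} for the high-$\delta$ case), and (ii) low-degree clusters, which become inactive. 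Thus the only clusters that remain active at the end of Phase 2 form exactly the (enlarged) family $\calc_2$ whose size is $\Theta(n^{1-2\delta})$ w.h.p.

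For part (b), I would argue that every cluster in $\calc_2$ is a tree of constant diameter by noting that each $C \in \calc_2$ begins Phase 2 as a star (from Lemma~\ref{lem:algo1Correctness}(a)), and that every high-degree cluster $C' \in \calc_1 \setminus \calc_2$ that is attached to $C$ is also a star and is attached via a single edge from $\cen(C')$ (or from a child of $\cen(C')$) to some node in $C$. Hence every node in the enlarged cluster is within a constant number of hops from $\cen(C)$, giving constant diameter. The fact that all edges of the cluster belong to $H$ follows by inspection: every edge added to a cluster in Step 11 of Algorithm~\ref{alg:part1}, in Step 10 of Algorithm~\ref{alg:part2hd}, or in Step 10 of Algorithm~\ref{alg:part2ld}, is simultaneously added to $H$. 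Part (c) then follows directly from Step 12 of Algorithm~\ref{alg:part2hd} and Steps 12--13 of Algorithm~\ref{alg:part2ld}, in which $\bdtbt(\cen(C))$ respectively $\gc(C)$ is invoked on each low-degree cluster $C$; by the definitions of these subroutines (and the correctness claim in Lemma~\ref{lemma:GCComplexity}) exactly one edge from $C$ to each neighbor $w \in \n(C)$ is added to $H$.

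The main obstacle, and the one I would spend the most care on, is part (d), because $H$ is incremented in several distinct places and we need the total bound. I would decompose the count of edges in $H$ at the end of Phase 2 into three disjoint contributions and bound each w.h.p. The first contribution is the $\ot(n^{1+\delta})$ edges already in $H$ at the end of Phase 1 (Lemma~\ref{lem:algo1Correctness}(c)). The second is the edges added when high-degree clusters attach themselves to sampled clusters in Phase 2; since each high-degree cluster contributes a single edge and there are at most $|\calc_1| = O(n^{1-\delta})$ such clusters, this contributes $O(n^{1-\delta}) = O(n^{1+\delta})$ edges. The third and most delicate contribution is the edges added by low-degree clusters. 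In the high-$\delta$ case, a low-degree cluster has at most $\ot(n^{2\delta})$ nodes in its 2-hop neighborhood (Lemma~\ref{lem:algo2hdldclusters}), and in the low-$\delta$ case, at most $\ot(n^{2\delta})$ neighbors (Lemma~\ref{lem:algo3ldclusters}); multiplied by the $O(n^{1-\delta})$ low-degree clusters we get $\ot(n^{1+\delta})$ edges in both cases. Summing the three contributions gives the claimed $\ot(n^{1+\delta})$ bound; the care is in ensuring that the high-probability events across the $O(n^{1-\delta})$ clusters all hold simultaneously, which follows from a union bound together with a sufficiently large polynomial success probability in each of Lemmas~\ref{lem:algo2hdldclusters} and \ref{lem:algo3ldclusters}.
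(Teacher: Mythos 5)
Your proposal is correct and follows essentially the same route as the paper's proof: the same part-by-part decomposition, the same use of Lemma~\ref{lem:algo1Correctness} plus Chernoff bounds for (a) and (b), the same appeal to Step 10 (merging) and Step 12 / Steps 12--13 (low-degree clusters, via $\bdtbt$ and $\gc$) for (b) and (c), and the same edge count for (d) using Lemmas~\ref{lem:algo2hdldclusters} and \ref{lem:algo3ldclusters} together with the $O(n^{1-\delta})$ bound on the number of clusters. The only nitpick is a citation slip in your part (a): in the high-$\delta$ case the absorption of high-degree clusters follows directly from Steps 8--10 of Algorithm~\ref{alg:part2hd} (Lemma~\ref{lem:algo2hdldclusters} concerns low-degree clusters), but this does not affect the argument.
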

\begin{proof}
    (a) Lemma~\ref{lem:algo1Correctness} shows that after completing Algorithm \ref{alg:part1} there are $O(n^{1-\delta})$ sampled clusters w.h.p. Since clusters are further sampled with probability $n^{-\delta}$ in Phase 2, by applying Chernoff bounds we see that the number of sampled clusters at the end of Phase 2 is $O(n^{1-2\delta})$ w.h.p. 
    (b) As shown in Lemma~\ref{lem:algo1Correctness}, the clusters that are provided as input to Phase 2 are star graphs. In Phase 2, non-sampled clusters merge into neighboring sampled clusters via edges (Step 10 in Algorithm \ref{alg:part2hd} and Step 10 in Algorithm \ref{alg:part2ld}), resulting in clusters with $O(1)$ diameter. All new cluster edges are added to $H$ during these steps.   
    (c) For each low-degree cluster $C$ and each neighbor $w\in\n(C)$, we add an edge from $C$ to $w$ to the danner (see Step 12 in Algorithm \ref{alg:part2hd} and Steps 12-13 in Algorithm \ref{alg:part2ld}). 
    (d) The process of merging a high-degree unsampled cluster with a neighboring sampled cluster
    contributes a total of $O(n^{1-\delta})$ edges to $H$ due to the presence of $O(n^{1-\delta})$ such clusters, each contributing one edge to $H$. (See Step 10 in Algorithm \ref{alg:part2hd} and Step 10 in Algorithm \ref{alg:part2ld}.)
    The process of each low-degree unsampled cluster adding an edge to each neighbor contributes a total of $\ot(n^{1+\delta})$ edges into $H$.
    This is because there are $O(n^{1-\delta})$ clusters, and each cluster adds $\ot(n^{2\delta})$ edges, as specified by Lemmas~\ref{lem:algo2hdldclusters} and  \ref{lem:algo3ldclusters}.
    (See Step 12 in Algorithm \ref{alg:part2hd} and Steps 12-13 in Algorithm \ref{alg:part2ld}.)
\end{proof}

\subsection{Cluster Merging}
\label{subsection:CM}

The cluster merging algorithm in this subsection is similar to the corresponding steps in the Gmyr-Pandurangan \ktone \congest danner algorithm \cite{GmyrP18}, with some key differences in the analysis, which we point out. 

The subsequent paragraphs present a high-level overview of the \textit{danner} construction. Beginning with a graph $G = (V, E)$ where $V$ denotes the set of nodes and $E$ the set of edges, the \textit{danner} construction generates a corresponding \textit{danner} $H$. The algorithm incorporates a parameter $\delta$ that governs the trade-off between the algorithm's time and message complexity, as well as the trade-off between the diameter and the number of edges in $H$. A pivotal step involves categorizing nodes into two groups based on their degrees: low-degree and high-degree nodes. This categorization is essential for optimizing message efficiency in the algorithm.

The initialization involves setting $V(H)$ to $V(G)$, $E(H)$ to an empty set, and defining $V_{high}$ as the set of high-degree nodes (with degree $> n^{\delta}$). The fundamental concept behind the algorithm is as follows: low-degree nodes and their incident edges can be directly incorporated into \textit{danner} $H$. To address high-degree nodes, the algorithm establishes a dominating set that covers these nodes by randomly sampling approximately $n^{(1-\delta)}$ nodes. Each node serves as a center with a probability of $p=c\log n/n^{\delta}$, where $c\geq1$ and $p<1$. The set of centers, denoted as $C$, functions as the dominating set, with a bounded size of $\tilde{O}(n^{1-\delta})$.

Each node $v$ contributes edges to \textit{danner} $H$ connecting it to its min$\{\deg(v), n^\delta\}$ neighbors with the lowest identifiers. High-degree nodes are linked to a center in $H$, and the diameter of each fragment in $H$ is limited to $\tilde{O}(n^{1-\delta})$. The algorithm employs the FindAny procedure from KKT~\cite{king2015construction} to facilitate a distributed Bor\r{u}vka-style merging of fragments in the subgraph $\hat{H}$ induced by high-degree nodes $V_{high}$ and centers $C$. In each merging phase, fragments use FindAny to efficiently identify an outgoing edge, which is then added to \textit{danner} $H$. The entire merging process requires only $O(\log n)$ phases to amalgamate all fragments into a connected graph, with a total of $\tilde{O}(\min\{m, n^{1+\delta}\})$ messages. The algorithm accomplishes the construction of such a \textit{danner} in $\tilde{O}(n^{1-\delta})$ rounds and ~{O}$(\min\{m, n^{1+\delta}\})$ messages.

Recall that $\calc_2$ is the set of clusters returned by Phase 2 of the Cluster Growing algorithm (Algorithms \ref{alg:part2hd} and \ref{alg:part2ld}).
In Algorithm~\ref{alg:part3}, let $V(\calc_2)$ denote the set of vertices belonging to clusters in $\calc_2$, i.e., $V(\calc_2) = \cup_{C \in \calc_2} V(C)$. The steps of Algorithm \ref{alg:part3} are performed on two induced subgraphs, $G[V(\calc_2)]$, which we denote by $\hat{G}$ and $H[V(\calc_2)]$, which we denote by $\hat{H}$. 
The algorithm executes a distributed Bor\r{u}vka-style merging of the connected components of $\hat{H}$ using edges from the underlying graph $\hat{G}$. To do this in a manner that is both round and message efficient, we employ the \fa\ algorithm of KKT~\cite{king2015construction}. During each merging phase, \fa\ is employed by each connected component to efficiently locate an outgoing edge, which is then added to the danner $H$. 
Specific properties of the \fa\ algorithm are described in Theorem \ref{thm:kkt} below.
The process of finding an outgoing edge is coordinated by a leader, elected within each component. For this purpose, we use the leader election algorithm from \cite{KuttenPPRTJACM2015} that is both round and message efficient.
Theorem \ref{thm:leader} below specifies the properties of this leader election algorithm.
The entire process requires only $\log n$ iterations to merge all fragments into a set of maximally connected components, i.e., reach a stage where no further merging is possible.  This takes only $\log n$ iterations because in each iteration, each connected component with an outgoing edge merges with at least one other connected component.
See \cite{king2015construction} for further details of how \fa\ works and how it is used to merge connected components. 
The analysis below shows that the diameter of every connected component before every iteration is bounded above by $\ot(n^{1-2\delta})$ w.h.p. Thus, this is an upper bound on the number of rounds it takes for a leader to coordinate the process of finding an outgoing edge. So waiting for $\ot(n^{1-2\delta})$ rounds in each iteration ensures that all the iterations proceed in lock-step. 

\setcounter{algorithm}{2}
\renewcommand{\thealgorithm}{\arabic{algorithm}}
\begin{algorithm}[H]
\caption{Cluster Merging}\label{alg:part3}
\begin{algorithmic}[1]
\Require $G=(V,E)$, partially constructed danner $H$, set $\calc_2$ of clusters returned by Algorithms \ref{alg:part2hd} or \ref{alg:part2ld}
\Ensure fully constructed danner $H$
\For{$i = 1$ to $\log n$}\Comment{Do the following steps in parallel in each connected component $K$ of $\hat{H}$.}
    \State Elect a leader using the algorithm from Theorem~\ref{thm:leader}. 
    \State \multiline{Using the algorithm \fa\ from Theorem~\ref{thm:kkt} to find an edge in $\hat{G}$ leaving $K$. The leader elected in Step 2 coordinates this process. If such an edge exists, add it to $H$ and $\hat{H}$.}
    \State \multiline{Wait until $\ot(n^{1-2\delta})$ rounds have passed in this iteration before starting the next iteration.\Comment{To synchronize the execution between the connected components.}}  
\EndFor
\end{algorithmic}
\end{algorithm}

\subsubsection{Analysis.}
The Cluster Merging algorithm (Algorithm \ref{alg:part3}) relies on two well-known previously designed algorithms.
The first algorithm, \fa\, is the core of the KKT MST algorithm \cite{king2015construction}. As shown in the theorem below, given a connected component $H$ within a graph $G$, it efficiently identifies an outgoing edge from $H$, if such an edge exists. The natural algorithm for this task would be for each node $v$ in $H$ to scan its neighborhood and identify a neighbor outside $H$. Then, all nodes $v$ in $H$ can upcast one identified edge each to the leader of $H$. Finally, the leader can pick one edge from among these. The problem with this algorithm is that the step that requires $v$ to scan its neighbors is extremely message-inefficient and could require $\Omega(m)$ edges. KKT overcame this issue by cleverly using random hash functions with certain specific properties.

\begin{theorem}[KKT~\cite{king2015construction}]
\label{thm:kkt}
    Consider a connected subgraph $H$ of a graph $G$. An algorithm \fa\ in the \ktone \congest model exists that w.h.p.~outputs an arbitrary edge $\id$ in $G$ leaving $H$ if such an edge exists and $\emptyset$ if no such edge exists. This algorithm takes $\ot(D(H))$ rounds and $\ot(E(H))$ messages.
\end{theorem}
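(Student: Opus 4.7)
The plan is to have an elected leader in $H$ orchestrate an $\ell_0$-sampling-style sketch aggregated along a BFS tree of $H$. First I would construct a BFS tree $T$ of $H$ rooted at a leader $r$; this takes $O(D(H))$ rounds and $O(|E(H)|)$ messages in the \ktone \congest model, and as a byproduct each $v \in H$ can be made aware of which of its (already ID-known) neighbors lie in $H$, and hence of its set of \emph{outgoing} edges (those going to some $u \notin H$). Subsequent calls to \fa\ can recycle the tree structure, rebuilding membership information only along newly added merge edges.

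The core idea is the following sketch. The leader $r$ broadcasts down $T$ the seed of a pairwise-independent hash family. For each level $k = 0, 1, \ldots, \lceil 2 \log n \rceil$, each node $v$ independently subsamples each of its outgoing edges with probability $2^{-k}$ and emits an $O(\log n)$-bit digest: the bitwise XOR of the IDs of its subsampled edges together with a short verification tag computed from the shared hash. The per-node digests at each level are aggregated up $T$ by XOR in a single convergecast, so $r$ obtains one composite digest per level. By the standard $\ell_0$-sampling analysis, with constant probability there is some level $k^*$ at which exactly one outgoing edge of $H$ survives the subsampling across the entire component; the verification tag lets $r$ recognize this level and read off the surviving ID. Repeating the whole procedure with $O(\log n)$ independent seeds boosts the success probability to $1 - n^{-\Omega(1)}$, and if every level in every repetition fails verification, $r$ safely outputs $\emptyset$.

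For the round bound, each convergecast along $T$ takes $O(D(H))$ rounds, and only $\text{polylog}(n)$ convergecasts are required in total (one per level per repetition), yielding $\ot(D(H))$. For the message bound, each convergecast uses $O(|V(T)|) = O(|V(H)|) = O(|E(H)|)$ messages on tree edges (using that $H$ is connected, so $|V(H)| \le |E(H)| + 1$), again multiplied by $\text{polylog}(n)$, giving $\ot(|E(H)|)$.

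The main obstacle will be packing the sketch into a single $O(\log n)$-bit message per aggregation step so that the convergecast does not inflate the digest size: this forces the pairwise-independent hash and verification tag to be chosen carefully so that (a) the XOR of two digests is itself a well-formed digest of the same fixed width, and (b) the probability that several sampled edges XOR to a string that spuriously passes the verification tag is $n^{-\Omega(1)}$. Once such a sketch is in hand, the remainder is a routine $O(\log n)$-round boosted $\ell_0$-sampling recovery over the BFS tree, and correctness, round complexity, and message complexity all follow directly from the convergecast analysis.
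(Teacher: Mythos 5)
The paper does not prove this statement; it imports it from King--Kutten--Thorup. Judged on its own terms, your attempt has one genuine gap, and it sits exactly at the heart of the matter: you assume that after building a BFS tree of $H$, ``each $v \in H$ can be made aware of which of its (already ID-known) neighbors lie in $H$, and hence of its set of outgoing edges.'' In the \ktone \congest model this is not a free byproduct. Knowing a neighbor's \id\ tells $v$ nothing about whether that neighbor belongs to $V(H)$; the BFS over $H$ only exchanges messages along edges of $H$, so for a neighbor $u$ reachable from $v$ only via a non-$H$ edge, $v$ cannot learn $u$'s membership without sending a message over that edge. Doing this for all such edges costs messages proportional to the number of $G$-edges incident on $V(H)$, which can be $\Omega(m) \gg \ot(|E(H)|)$. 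If nodes did know their outgoing edges, \fa\ would be trivial (convergecast the minimum outgoing edge \id); the entire difficulty the KKT result addresses is finding an outgoing edge \emph{without} any node ever learning which of its incident edges leave the component --- indeed the paper itself remarks that the naive ``scan your neighborhood'' step is the $\Omega(m)$ bottleneck that KKT circumvents with random hash functions.

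The rest of your machinery (shared random seed broadcast down the tree, geometric subsampling levels, $O(\log n)$-bit XOR digests aggregated by convergecast, verification tags, $O(\log n)$ repetitions) is essentially the right toolkit and close in spirit to the actual construction, and it can be repaired: instead of sketching each node's \emph{outgoing} edges, have each node fold \emph{all} of its incident $G$-edges into its digest, with the subsampling decision for an edge made from a shared hash of the edge's \id\ (computable consistently at both endpoints from their \ktone knowledge). Then every edge with both endpoints in the component contributes to exactly two local digests and cancels in the XOR aggregated at the root, so the component-wide digest is a sketch of the outgoing edges alone --- no node ever needs to know which of its edges are outgoing, and the message cost of forming local digests is purely local computation. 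With that substitution your round and message accounting ($\ot(D(H))$ convergecasts of $\mathrm{polylog}(n)$ digests over a tree with $O(|V(H)|)$ edges) goes through; as written, however, the proposal assumes away the step that the theorem is really about.
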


\noindent
We also need an efficient leader election algorithm because we need each connected component $H$ to have a leader that can coordinate the process of finding an outgoing edge. We use the following theorem stated in Gmyr and Pandurangan \cite{GmyrP18}, which in turn is a reformulation of Corollary 4.2 in the paper by Kutten, Pandurangan, Peleg, Robinson, and Trehan \cite{KuttenPPRTJACM2015}.

\begin{theorem}[\cite{KuttenPPRTJACM2015}]
\label{thm:leader}
    There exists an algorithm in the \ktzero \congest model that, for any graph $G$, elects a leader in $O(D(G))$ rounds and utilizes $\ot(E(G))$ messages, w.h.p.
\end{theorem}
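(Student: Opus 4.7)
The plan is to combine a randomized, phased tournament with BFS-style exploration. Initially every node $v$ declares itself a candidate for leader and draws a random rank from a polynomially large range, so that all ranks are distinct w.h.p. The algorithm then proceeds in $O(\log D)$ phases indexed by $i = 0, 1, \ldots, \lceil \log D \rceil$. In phase $i$, each surviving candidate $c$ extends its BFS tree outward by $2^i$ additional hops, claiming every newly reached node as its own. Whenever $c$'s growing tree reaches a node already owned by some other candidate $c'$, the candidate with the smaller rank is eliminated and its tree is absorbed by the survivor. After $O(\log D)$ phases, exactly one candidate's tree spans $G$, and that candidate is declared the leader; a final broadcast along this tree announces the outcome.

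For the round complexity, phase $i$ takes $O(2^i)$ rounds to extend tree growth by $2^i$ hops, plus an additional $O(2^i)$ rounds for elimination signals to propagate back through absorbed fragments. Summing gives $\sum_{i=0}^{\lceil \log D \rceil} O(2^i) = O(D)$. For the message complexity, the central invariant is that the BFS trees of surviving candidates are vertex-disjoint at any moment, so each edge of $G$ is traversed by tree-growth messages at most a constant number of times per phase. This yields $O(m)$ messages per phase, and over the $O(\log n)$ phases this totals $\ot(m)$. The cost of propagating ``kill'' messages through a losing candidate's tree is charged against the growth work already accounted for in that tree.

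The main obstacle will be the message accounting at collision points, particularly when multiple candidate trees simultaneously reach the same node in the same round and when a candidate that has been eliminated must be stopped from doing further work in the current phase. A careful tie-breaking rule (by rank, with ID as a backup) together with a back-propagation of elimination signals whose length is at most a constant multiple of $2^i$ resolves both issues without violating the $O(D)$ round budget. Finally, to move from ``disjoint territories'' to the $\ot(E(G))$ message bound one must argue, using the randomness of the ranks, that each edge is rediscovered by at most $O(\log n)$ distinct eventual winners, so that the per-phase $O(m)$ bound telescopes across phases to $\ot(m)$ w.h.p., matching the claim.
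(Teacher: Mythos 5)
First, a point of reference: the paper does not prove this statement at all. Theorem~\ref{thm:leader} is imported verbatim from Kutten, Pandurangan, Peleg, Robinson, and Trehan \cite{KuttenPPRTJACM2015} (it is a reformulation of their Corollary~4.2), so there is no in-paper argument to compare against; the cited algorithm works, roughly, by randomly pruning the candidate set to $O(\log n)$ contenders w.h.p.\ and then having these few candidates flood their random ranks in parallel, with the maximum surviving, which gives $O(D)$ rounds and $\ot(m)$ messages almost immediately. Your proposal instead tries to re-derive the result from scratch via a growing-ball tournament: every node is a candidate, BFS territories double in radius each phase, and collisions kill the lower-ranked candidate. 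This is a genuinely different (and much more delicate) route, and note that randomness plays no essential role in it beyond what distinct \id s already provide, which is a warning sign that the ``w.h.p.'' in your analysis is not actually doing any work.

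The genuine gap is exactly the step you defer to ``careful tie-breaking plus back-propagation'': the claim that phase $i$ costs $O(2^i)$ rounds despite eliminations. With eager absorption, a single phase can see a chain of collisions ($A$ absorbs $B$ while $C$ absorbs $A$, etc.), and relabeling or even just notifying a chained region takes time proportional to the chain length times $2^i$, not $O(2^i)$ --- this is precisely the merge-chain obstacle that forces GHS-style algorithms into level/rank machinery. If you instead make absorption lazy, you must still stop an eliminated candidate's in-flight frontier promptly (otherwise dead territories keep killing live candidates and spending messages), and you must re-prove that ``exactly one tree spans $G$ after $O(\log D)$ phases,'' which now rests on the unproven claim that the maximum-rank candidate's growth is never delayed by the cleanup traffic sharing its $\Theta(2^i)$-round window. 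None of this is impossible, but it is the heart of the proof and it is missing. Separately, the closing message argument is confused: there is only one eventual winner, and no appeal to the randomness of ranks is needed --- territory-disjointness already gives $O(m)$ growth messages per phase and $O(m\log D)=\ot(m)$ overall, provided the per-phase round/cleanup accounting above is actually established.
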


\noindent
The Gmyr-Pandurangan analysis requires two key properties to hold \textit{before} the Cluster Merging algorithm: (i) there is a set $\calc$ of clusters, each with constant diameter and (ii) the partially constructed danner $H$ contains all the edges belonging to the clusters along with \textit{all} edges incident on nodes not in clusters. 
If these two properties hold, then they can show that the following crucial property holds \textit{after} the Cluster Merging algorithm:
\begin{quote}
    Before each iteration of the algorithm and after the algorithm ends, the sum of the diameters of all the connected components in $\hat{H}$ is $O(|\calc|)$.
\end{quote}

\noindent
After Phase 2 of our Cluster Growing algorithm ends, we do have a set $\calc_2$ of clusters, with each cluster $C \in \calc_2$ having constant diameter.
However, we do not have the second property.
This is because some nodes not in clusters in $\calc_2$ belong to low-degree clusters not sampled in Phase 2 of the Cluster Growing algorithm. Specifically, consider a cluster $C \in \calc_1 \setminus \calc_2$ such that cluster $C$ is designated as a low-degree cluster in  Phase 2 of the Cluster Growing algorithm. Here, we refer to both versions of our algorithm, i.e., Algorithm \ref{alg:part2hd} and \ref{alg:part2ld}.
Nodes in $C$ may have only a small number of incident edges belonging to $C$ and therefore to $H$. So Property (ii) above, which is required by the analysis of the Gmyr-Pandurangan algorithm, may not hold.
However, we know that for every such cluster $C$, we add edges connecting $C$ to each of its neighbors $w \in \n(C)$ to the danner. This means we can treat each low-degree cluster $C \in \calc_1 \setminus \calc_2$ as a \textit{super node} and contract it. Each super node now has the property that all incident edges are in $H$. Given that after Phase 1 of our Cluster Growing algorithm, we have set aside a set of nodes (see Step 11 in Algorithm \ref{alg:part1}) and added all incident edges to $H$, we now have both properties needed by the Gmyr-Pandurangan analysis. As a result, we obtain the following lemma. 
It is worth highlighting that since $|\calc_2| = \ot(n^{1-2\delta})$, the sum of the diameters of the connected components in $\hat{H}$ is also $\ot(n^{1-2\delta})$.
This is in contrast with the corresponding Gmyr-Pandurangan lemma that obtains a weaker 
$\ot(n^{1-\delta})$ because that is the number of clusters they have before starting the Cluster Merging algorithm.

\begin{lemma}
Let $K_1, \ldots, K_r$ be the connected components of $\hat{H}$ before any iteration of the
loop in Algorithm \ref{alg:part3} or after the final iteration. It holds $\sum_{i=1}^r diam(K_i) = \ot(n^{1-2\delta})$, w.h.p.
\end{lemma}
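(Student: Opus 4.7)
Plan: First, I would establish the base case: before the first iteration of the loop in Algorithm~\ref{alg:part3} begins, the connected components of $\hat H$ are exactly the members of $\calc_2$. The only edges of $H$ whose both endpoints lie in $V(\calc_2)$ are intra-cluster edges---those added in Step 11 of Algorithm~\ref{alg:part1}, and those added in Step 10 of Algorithms~\ref{alg:part2hd} and~\ref{alg:part2ld} when an unsampled high-degree cluster is absorbed into a sampled one. No edges crossing two distinct clusters of $\calc_2$ have been placed into $H$ by that point. Combined with Lemma~\ref{lem:algo2Correctness}(a)--(b), which gives $|\calc_2|=\Theta(n^{1-2\delta})$ w.h.p.\ and $diam(C)=O(1)$ for every $C\in\calc_2$, this shows that before iteration~1 the quantity $\sum_i diam(K_i)$ equals $O(|\calc_2|)=\ot(n^{1-2\delta})$ w.h.p.

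Next, I would carry out a standard Bor\r{u}vka-style counting argument. In each iteration, every current connected component of $\hat H$ uses \fa\ to try to identify an outgoing edge in $\hat G$, and any such edge is added to $\hat H$. Any component that still possesses an outgoing edge merges with at least one other component during that iteration, so the number of ``live'' (still-mergeable) components at least halves across each iteration; summing a geometric series, the total number of distinct edges ever added to $\hat H$ during the entire execution of Algorithm~\ref{alg:part3} is $O(|\calc_2|)$. Joining two components of diameters $d_A$ and $d_B$ through a single edge produces a component of diameter at most $d_A+d_B+1$, and iterating this observation over the full sequence of merges yields the invariant
$$\sum_i diam(K_i)\;\le\;\bigl(\text{initial sum of diameters}\bigr)\;+\;\bigl(\text{total edges ever added to }\hat H\bigr),$$
both of whose right-hand summands are $\ot(n^{1-2\delta})$ w.h.p., which gives the claim.

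The analysis just sketched is structurally identical to the corresponding argument of Gmyr and Pandurangan~\cite{GmyrP18}: the entire quantitative gain from $\ot(n^{1-\delta})$ down to $\ot(n^{1-2\delta})$ is inherited from the stronger starting bound $|\calc_2|=\ot(n^{1-2\delta})$ furnished by Phase~2 of our Cluster Growing algorithm. The main (minor) obstacle is really the first paragraph---verifying that no spurious cross-cluster edges were placed into $H$ prior to Algorithm~\ref{alg:part3}; once this is confirmed by tracing each edge-insertion step of Algorithms~\ref{alg:part1}, \ref{alg:part2hd}, and~\ref{alg:part2ld}, the Bor\r{u}vka calculation is routine.
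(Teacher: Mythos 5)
Your route is genuinely different from the paper's. The paper does not re-derive the Bor\r{u}vka counting bound at all: it verifies (and repairs) the two preconditions of the Gmyr--Pandurangan analysis and then invokes their conclusion with $|\calc_2|=\ot(n^{1-2\delta})$ in place of $\ot(n^{1-\delta})$. The repair is the observation that unsampled low-degree clusters $C\in\calc_1\setminus\calc_2$ violate property (ii), but can be contracted into super nodes because an edge from $C$ to each neighbor in $\n(C)$ was added to $H$; together with the nodes set aside in Phase 1 (which added all incident edges), both GP preconditions then hold. Your proof instead redoes the merging analysis directly: initial sum of diameters $O(|\calc_2|)$, each edge added by the merging increases the sum by at most $1$, and only $O(|\calc_2|)$ edges are ever added (even the cruder bound of at most $|\calc_2|$ edges per iteration over $\log n$ iterations suffices under $\ot(\cdot)$). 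That self-contained argument is sound and arguably makes the source of the improvement ($|\calc_2|$ versus $|\calc_1|$) more transparent.

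There is, however, one false intermediate claim in your base case: it is not true that the only edges of $H$ with both endpoints in $V(\calc_2)$ are intra-cluster edges. In the high-$\delta$ variant, Step 12 of Algorithm~\ref{alg:part2hd} has every unsampled low-degree cluster $C$ add \emph{all} edges of the depth-2 BFS tree built by $\bdtbt(\cen(C))$ to $H$; the second-level edges of that tree join a neighbor of $\cen(C)$ to a 2-hop neighbor of $\cen(C)$, and both endpoints may lie in $V(\calc_2)$, possibly in two distinct clusters of $\calc_2$. (Your claim is fine for the low-$\delta$ variant, where the edges added by $\gc(C)$ all have one endpoint inside $C\notin\calc_2$, and for Step 13 of Algorithm~\ref{alg:part1}, whose node lies outside all clusters.) Hence the components of $\hat H$ before iteration 1 need not coincide with the clusters of $\calc_2$. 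This does not sink the proof: such extra edges only merge constant-diameter clusters, so a component containing $t$ clusters of $\calc_2$ still has diameter $O(t)$ (route through at most $t-1$ inter-cluster edges of a spanning structure, paying $O(1)$ inside each cluster by Lemma~\ref{lem:algo2Correctness}(b)), and summing over components again yields $O(|\calc_2|)=\ot(n^{1-2\delta})$ w.h.p. You should replace the assertion that the components are ``exactly the members of $\calc_2$'' by this weaker but sufficient statement.
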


\noindent
The rest of the analysis is identical to that of Gmyr and Pandurangan~\cite{GmyrP18} and we obtain the following lemmas.
\begin{lemma}\label{lem:part3}
    Algorithm~\ref{alg:part3} computes a danner in $\ot(n^{1-2\delta})$ rounds and using $\ot(min\{m,n^{1+\delta}\})$ messages w.h.p.  
\end{lemma}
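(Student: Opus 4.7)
The plan is to verify the round and message bounds of Algorithm~\ref{alg:part3} by combining the preceding lemma (sum of diameters $\ot(n^{1-2\delta})$) with the complexities of the two black-box subroutines \fa\ (Theorem~\ref{thm:kkt}) and the leader election algorithm (Theorem~\ref{thm:leader}), and then observing that the outer loop runs only $O(\log n)$ times.

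For the round complexity, I would argue iteration-by-iteration. Inside a single iteration, each connected component $K_i$ of $\hat{H}$ elects a leader in $O(D(K_i))$ rounds and then executes \fa{} in $\ot(D(K_i))$ rounds, giving a per-component cost of $\ot(D(K_i))$. By the preceding lemma, $\sum_i D(K_i) = \ot(n^{1-2\delta})$ w.h.p., so in particular $\max_i D(K_i) = \ot(n^{1-2\delta})$, which is exactly the length of the explicit wait in Step~4. Hence every iteration completes within $\ot(n^{1-2\delta})$ rounds and the components stay synchronized. Since in each iteration any component that has an outgoing edge merges with at least one other, the number of components at least halves per iteration, so $O(\log n)$ iterations suffice; the total round count is therefore $\ot(n^{1-2\delta}) \cdot O(\log n) = \ot(n^{1-2\delta})$.

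For the message complexity, I would sum across components and iterations. Within a single iteration, Theorems~\ref{thm:leader} and~\ref{thm:kkt} together charge $\ot(|E(K_i)|)$ messages per component $K_i$. Summing over disjoint $K_i$ bounds the iteration's messages by $\ot(|E(\hat{H})|) \le \ot(|E(H)|)$. By Lemmas~\ref{lem:algo1Correctness} and~\ref{lem:algo2Correctness}, the partially constructed danner already satisfies $|E(H)| = \ot(n^{1+\delta})$ w.h.p., and of course $|E(H)| \le m$ trivially, so each iteration costs $\ot(\min\{m, n^{1+\delta}\})$ messages. Multiplying by the $O(\log n)$ iterations leaves the bound at $\ot(\min\{m, n^{1+\delta}\})$. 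A small subtlety is that edges are added to $\hat{H}$ as components merge, but each newly added edge was already counted in $|E(H)|$ since we add it to $H$ in Step~3, so the bound is preserved across iterations.

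The main obstacle I anticipate is the synchronization argument: one must confirm that \emph{every} component, not just the one under analysis, finishes its leader election plus \fa\ within the $\ot(n^{1-2\delta})$ wait window, which requires the stronger \emph{per-component} diameter bound $\max_i D(K_i) = \ot(n^{1-2\delta})$ rather than just the sum. Fortunately this follows immediately from the preceding lemma (each term bounds the max). The rest is a routine accounting of the two subroutines, and the high-probability qualifier propagates via a union bound over the $O(\log n)$ iterations and the invocations of \fa{} and leader election, each of which succeeds w.h.p.
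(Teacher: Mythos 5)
Your proposal is correct and follows essentially the same route the paper takes: the paper defers this accounting to the Gmyr--Pandurangan analysis after establishing the sum-of-diameters lemma, and your argument fills in exactly that accounting (per-iteration cost via Theorems~\ref{thm:leader} and~\ref{thm:kkt} bounded by the component diameters and $|E(\hat{H})| \le |E(H)| = \ot(\min\{m, n^{1+\delta}\})$, times $O(\log n)$ iterations, with the wait in Step~4 justified by $\max_i D(K_i) \le \sum_i D(K_i) = \ot(n^{1-2\delta})$). The only nitpick is your remark that newly merged edges were ``already counted'' in $|E(H)|$ — more precisely, at most one edge per component per iteration is added, i.e.\ $\ot(n)$ edges overall, so the bound is preserved; this does not affect the conclusion.
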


\begin{lemma}
\label{lem:algo3correctness}
    Algorithm~\ref{alg:part3} computes a danner with $D(H)\le D(G)+\ot(n^{1-2\delta})$ and with $\ot(min\{m,n^{1+\delta}\})$ edges, w.h.p.
\end{lemma}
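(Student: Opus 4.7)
My plan is to prove the two claims of Lemma~\ref{lem:algo3correctness} separately, leaning heavily on the structural guarantees already established for the three phases.

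For the edge count, I would simply sum contributions from each phase. Phase 1 of Cluster Growing contributes $\ot(n^{1+\delta})$ edges by Lemma~\ref{lem:algo1Correctness}(c), and Phase 2 contributes another $\ot(n^{1+\delta})$ by Lemma~\ref{lem:algo2Correctness}(d). In each of the $O(\log n)$ iterations of Algorithm~\ref{alg:part3}, every connected component of $\hat{H}$ adds at most one new edge to $H$; since the number of such components never exceeds the initial count $|\calc_2| = \ot(n^{1-2\delta})$, the total contribution from merging is at most $\ot(n^{1-2\delta})$, which is absorbed by the earlier bounds. This gives $\ot(n^{1+\delta})$ edges overall, and the $\min$ with $m$ is immediate since $H$ is a spanning subgraph of $G$.

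For the diameter bound, I would replicate the Gmyr-Pandurangan diameter argument, with the key improvement coming from the sharper sum-of-diameters bound $\ot(n^{1-2\delta})$ established in the lemma immediately preceding this statement. Given any $s, t \in V$, I would fix a shortest $s$-$t$ path $P$ in $G$ of length at most $D(G)$, and construct a corresponding $s$-$t$ walk in $H$ by simulating each edge of $P$. An edge of $P$ already in $H$ contributes no detour. For an edge $\{u, v\} \in P \setminus E(H)$, I would argue that neither endpoint can be a Phase 1 set-aside node (Step 13 of Algorithm~\ref{alg:part1}) or a member of a Phase 2 low-degree cluster, because in either case all incident edges were added to $H$; hence both endpoints lie in clusters of $\calc_2$. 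Moreover, those two clusters must belong to the same connected component $K$ of $\hat{H}$, since otherwise \fa\ would have surfaced $\{u, v\}$ during merging, contradicting termination of the merging loop. Therefore $\{u, v\}$ can be replaced by a detour of length at most $\text{diam}(K) + O(1)$ through the spanning tree of $K$.

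The main obstacle is controlling the \emph{total} detour cost, since a naive bound over the $D(G)$ edges of $P$ would yield $D(G) \cdot \max_i \text{diam}(K_i)$, which is far too loose. To avoid this, I would follow the Gmyr-Pandurangan amortization: treat each component of $\hat{H}$, each Phase 2 low-degree cluster, and each Phase 1 set-aside node as a single super-node, and observe that a shortest $s$-$t$ path in $G$ projects to a walk in the contracted graph that enters each super-node only $O(1)$ times. Consequently, the total extra length incurred by all detours is bounded by $O\!\bigl(\sum_i \text{diam}(K_i)\bigr) = \ot(n^{1-2\delta})$ w.h.p. Combined with the underlying length $D(G)$ of $P$, this yields $D(H) \le D(G) + \ot(n^{1-2\delta})$, completing the proof.
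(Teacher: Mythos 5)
Your edge-count half is fine and is exactly the intended Gmyr--Pandurangan-style accounting that the paper defers to: $\ot(n^{1+\delta})$ edges from Lemmas~\ref{lem:algo1Correctness}(c) and~\ref{lem:algo2Correctness}(d), at most one new edge per connected component of $\hat{H}$ in each of the $O(\log n)$ iterations of Algorithm~\ref{alg:part3} (hence $\ot(n^{1-2\delta})$ more), and $|E(H)|\le m$ since $H\subseteq G$. The paper itself proves this lemma only by invoking the GP analysis after establishing the super-node reduction and the improved sum-of-diameters bound, which is also the route you take for the diameter. However, two specific claims in your diameter argument are false as written.

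First, a member of a low-degree Phase~2 cluster does \emph{not} have all incident edges in $H$: Lemma~\ref{lem:algo2Correctness}(c) only guarantees one edge in $H$ from the \emph{cluster} to each neighbor of the cluster, and that edge may leave from a different cluster vertex. This is precisely the failure of Property~(ii) that the paper flags and repairs by contracting low-degree clusters into super nodes. So an edge $\{u,v\}\in P\setminus E(H)$ can indeed have an endpoint in a low-degree cluster; the correct handling is an $O(1)$ detour through that constant-diameter cluster to the vertex carrying the $H$-edge to the other endpoint, not the exclusion of this case. Second, and more seriously, your amortization rests on the claim that a shortest $s$-$t$ path in $G$ enters each super-node only $O(1)$ times; this is false. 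A component $K_i$ of $\hat{H}$ need not be convex in $G$ (e.g., $K_i$ may be a long path in $H$ while $G$ has chords across it), so a shortest path in $G$ can enter and leave $K_i$ arbitrarily many times, and paying $\mathrm{diam}(K_i)$ per entry destroys the bound. The standard fix modifies the walk you construct in $H$: when $P$ first reaches a vertex of some component $K_i$, jump within $H$ (cost at most $\mathrm{diam}(K_i)$) to the \emph{last} vertex of $P$ lying in $K_i$ and resume $P$ from there; the skipped portion of $P$ only shortens the walk, so each component is charged at most once and the total detour is $\sum_i \mathrm{diam}(K_i)=\ot(n^{1-2\delta})$ w.h.p. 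With these two repairs your argument matches the intended proof.
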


\noindent
With Lemmas~\ref{lem:part1}, \ref{lem:part2hd}, \ref{lem:part2ld}, and \ref{lem:part3} in place, it is easy to see that the Danner Algorithm (including Algorithm~\ref{alg:part1}, \ref{alg:part2hd}, \ref{alg:part2ld}, and \ref{alg:part3}) takes $\ot(n^{1-2\delta})$ rounds and sends $\ot(n^{1+\delta})$ messages w.h.p. With Lemmas ~\ref{lem:algo1Correctness}, \ref{lem:algo2Correctness}, and \ref{lem:algo3correctness} in place, after the algorithm terminates it holds that $H$ has $\ot(n^{1+\delta})$ edges and $\ot(D+n^{1-2\delta})$ diameter, where $\delta\in[0,\frac{1}{2}].$ 
As a result, we obtain the following theorem directly. 
\begin{theorem}
\label{thm:danner}
    The Danner Algorithm (including Algorithm~\ref{alg:part1}, \ref{alg:part2hd}, \ref{alg:part2ld}, and \ref{alg:part3}) takes $\ot(n^{1-2\delta})$ rounds and sends $\ot(n^{1+\delta})$ messages w.h.p. After the algorithm terminates, it holds that $H$ has $\ot(min\{m,n^{1+\delta}\})$ edges and $\ot(D+n^{1-2\delta})$ diameter w.h.p., where $\delta\in[0,\frac{1}{2}].$
\end{theorem}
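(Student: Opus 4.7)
The plan is to derive Theorem~\ref{thm:danner} as a direct summation of the stage-wise bounds that have already been proved, with no new technical content required. The Danner Algorithm is the sequential composition of Algorithm~\ref{alg:part1}, followed by exactly one of Algorithm~\ref{alg:part2hd} or Algorithm~\ref{alg:part2ld} (depending on whether $\delta \in (1/3, 1/2]$ or $\delta \in [0, 1/3]$), followed by Algorithm~\ref{alg:part3}. So the proof will proceed in two passes: add up the four complexity lemmas to get the round and message bounds, then invoke the four correctness lemmas to get the edge and diameter bounds on $H$.

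First I would sum the round complexities. By Lemma~\ref{lem:part1}, Phase 1 uses $O(1)$ rounds; by Lemmas~\ref{lem:part2hd} and \ref{lem:part2ld}, Phase 2 uses $\ot(n^{1-2\delta})$ rounds in either regime (the high-$\delta$ case only needs $O(1)$ rounds, which is subsumed); and by Lemma~\ref{lem:part3}, Cluster Merging uses $\ot(n^{1-2\delta})$ rounds. The three bounds sum to $\ot(n^{1-2\delta})$. For messages, the same three lemmas give $\ot(n^{1+\delta})$, $\ot(n^{1+\delta})$, and $\ot(\min\{m, n^{1+\delta}\})$ respectively, summing to $\ot(n^{1+\delta})$ w.h.p.

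Next I would assemble the structural guarantees on $H$. Lemmas~\ref{lem:algo1Correctness} and \ref{lem:algo2Correctness} together show that after Phase 2 the set $\calc_2$ consists of $\Theta(n^{1-2\delta})$ constant-diameter tree clusters w.h.p., that every cluster edge lies in $H$, that every edge incident on a node set aside in Step 13 of Algorithm~\ref{alg:part1} lies in $H$, and that for every low-degree cluster $C \in \calc_1 \setminus \calc_2$ a single edge from $C$ to each $w \in \n(C)$ lies in $H$. Lemma~\ref{lem:algo3correctness} then supplies the final bounds $D(H) \le D(G) + \ot(n^{1-2\delta})$ and $|E(H)| = \ot(\min\{m, n^{1+\delta}\})$ w.h.p., where the $m$ cap is trivial because $H \subseteq G$.

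The one point that warrants attention, and is really the only place where anything non-mechanical happens, is verifying that the Cluster Merging analysis inherited from Gmyr--Pandurangan~\cite{GmyrP18} actually delivers the sharper $\ot(n^{1-2\delta})$ sum-of-diameters bound on the components of $\hat{H}$ rather than their $\ot(n^{1-\delta})$ bound. This is precisely what the discussion preceding Lemma~\ref{lem:part3} establishes: the smaller count $|\calc_2| = \ot(n^{1-2\delta})$ replaces $|\calc_1|$, and the low-degree clusters of $\calc_1 \setminus \calc_2$ are absorbed by treating each as a super node, which is legitimate because every incident edge of such a cluster was already added to $H$ during Phase 2. Once this reduction is in place, the theorem follows immediately by combining Lemmas~\ref{lem:part1}, \ref{lem:part2hd}, \ref{lem:part2ld}, \ref{lem:part3}, \ref{lem:algo1Correctness}, \ref{lem:algo2Correctness}, and \ref{lem:algo3correctness}.
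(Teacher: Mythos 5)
Your proposal is correct and follows essentially the same route as the paper, which also obtains the theorem directly by combining the complexity bounds of Lemmas~\ref{lem:part1}, \ref{lem:part2hd}, \ref{lem:part2ld}, and \ref{lem:part3} with the structural guarantees of Lemmas~\ref{lem:algo1Correctness}, \ref{lem:algo2Correctness}, and \ref{lem:algo3correctness}. Your added remark about the super-node reduction yielding the sharper $\ot(n^{1-2\delta})$ sum-of-diameters bound mirrors the paper's discussion preceding Lemma~\ref{lem:part3}, so nothing is missing.
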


\section{Applications}
In this section, we illustrate how the danner construction, introduced in Section~\ref{section:danner}, can effectively establish trade-off results for many fundamental problems in the field of distributed computing.

\subsection{\bc, Leader Election, and Spanning Tree}
Some immediate implications of Theorem \ref{thm:danner} are that \bc, leader election, and spanning tree construction can be solved fast and using a few messages. To address \bc, a danner can be initially constructed, followed by implementing a straightforward, flooding-type algorithm that leverages all the edges within the danner. In the context of leader election, Theorem~\ref{thm:leader} from Kutten et al.~\cite{KuttenPPRTJACM2015} can be applied to the computed danner, offering a rapid and effective resolution. Lastly, for the task of spanning tree construction, a leader can be elected to execute a distributed breadth-first search on the danner, culminating in the efficient creation of the spanning tree. The ensuing theorem encapsulates these advancements.

\begin{theorem}
\label{thm:broadcast}
Given any connected graph $G$ and any $\delta\in[0,\frac{1}{2}]$, there are algorithms for solving \bc, leader election, and spanning tree construction in the \kttwo \congest{} model in $O(D+n^{1-2\delta})$ rounds using $\tilde{O}(min\{m, n^{1+\delta}\})$ messages w.h.p.
\end{theorem}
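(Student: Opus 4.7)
The plan is to use Theorem~\ref{thm:danner} as a black box: first invoke the Danner Algorithm to construct a spanning subgraph $H$ of $G$ with $\ot(\min\{m,n^{1+\delta}\})$ edges and diameter $D(H)=\ot(D+n^{1-2\delta})$ in $\ot(n^{1-2\delta})$ rounds and $\ot(\min\{m,n^{1+\delta}\})$ messages, w.h.p. After this preprocessing step, we solve each of the three tasks by running a standard algorithm restricted to the edges of $H$, and then simply add the two contributions to the round and message complexity. Since $H$ is spanning and connected (it is a \dn\ of the connected graph $G$), correctness for each task reduces to correctness of the standard algorithm on the graph $H$.

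For \textbf{broadcast}, I will run a trivial flooding algorithm from the source on $H$: each node forwards the source's message once along every incident edge of $H$. Flooding on $H$ takes $D(H)=\ot(D+n^{1-2\delta})$ rounds and uses at most $2|E(H)|=\ot(\min\{m,n^{1+\delta}\})$ messages. For \textbf{leader election}, I will invoke Theorem~\ref{thm:leader} on $H$, treating $H$ as the communication graph; note that this theorem is stated for the \ktzero \congest{} model, and since the \kttwo{} model strictly strengthens \ktzero, applying it on the subgraph $H$ is fine after each node locally restricts its incidence information to the edges of $H$. This costs $O(D(H))=\ot(D+n^{1-2\delta})$ rounds and $\ot(|E(H)|)=\ot(\min\{m,n^{1+\delta}\})$ messages. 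For \textbf{spanning tree}, I first elect a leader $\ell$ as above, then $\ell$ initiates a distributed BFS on $H$ by flooding; each node records its parent edge. The resulting BFS tree of $H$ is a spanning tree of $H$ and hence, since $V(H)=V(G)$, a spanning tree of $G$. The BFS costs $O(D(H))$ rounds and $O(|E(H)|)$ messages, matching the desired bounds.

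Adding the complexities of the danner construction and the post-processing step yields, in each case, $\ot(D+n^{1-2\delta})$ rounds and $\ot(\min\{m,n^{1+\delta}\})$ messages w.h.p., which is the claimed bound in Theorem~\ref{thm:broadcast}. The $O(D+n^{1-2\delta})$ form in the statement (without the $\ot$) absorbs the polylog factors from the danner's diameter and from the constituent subroutines into the $\ot$ convention used throughout the paper.

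The only mildly non-routine step is noting that each node knows, at the end of the danner construction, exactly which of its incident edges belong to $H$, so it can consistently restrict subsequent communication to $H$; this is immediate from inspecting Algorithms~\ref{alg:part1}, \ref{alg:part2hd}, \ref{alg:part2ld}, and \ref{alg:part3}, where every edge added to $H$ is added via a message sent across that edge, so both endpoints learn of its membership. Beyond this bookkeeping observation, there is no real obstacle: the argument is an additive composition of Theorem~\ref{thm:danner} with a standard flooding/BFS/leader-election subroutine on the sparse, low-diameter subgraph $H$.
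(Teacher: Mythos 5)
Your proposal is correct and follows essentially the same route as the paper: construct the \dn\ via Theorem~\ref{thm:danner}, then flood on it for \bc, apply Theorem~\ref{thm:leader} on it for leader election, and have the elected leader run a distributed BFS on it for spanning tree, with complexities composed additively. Your extra remark that both endpoints of every edge added to $H$ learn of its membership is a sensible bookkeeping observation the paper leaves implicit.
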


Since the danner and \bc problems are such important algorithmic primitives in distributed computing, Theorems \ref{thm:danner} and Theorem \ref{thm:broadcast} have important implications for other problems.
Using these results, we design fast, low-message algorithms for MST and $(\Delta+1)$-coloring in the \kttwo model.
MST has a well-known round complexity lower bound: it cannot be solved in fewer than $\tilde{\Omega}(D + \sqrt{n})$ rounds in the \congest model \cite{PelegRubinovichSICOMP2000,sarma2012distributed}.
The lower bound argument in \cite{sarma2012distributed} uses a novel reduction from the 2-party communication complexity problem \textsc{SetDisjointness}. 
We observe that the proof in \cite{sarma2012distributed} works even in the \ktrho \congest model, for any $\rho$, $0 \le \rho \le 2\log \frac{\sqrt{2n}}{4} + 2$, where $n$ is the size of the network.
Thus, we cannot beat this round complexity lower bound even as we increase the radius $\rho$ of initial knowledge. So our goal is to design an algorithm that matches this round complexity lower bound while substantially reducing the number of messages as $\rho$ increases. 

\subsection{Fast low-message MST}
The danner result in \ktone \congest model of Gmyr and Pandurangan~\cite{GmyrP18} is applied to build an efficient MST algorithm. Roughly speaking, given a connected graph $G$ and any $\delta\in[0,\frac{1}{2}]$, the MST of $G$ construction includes three steps. 
In the first step, a spanning tree of $G$ of depth $\ot(D+n^{1-\delta})$ is built on the danner, where the danner has diameter $\ot(D+n^{1-\delta})$. 
When $m\le n^{1+\delta}$, the MST is computed by calling the singularly optimal MST of Pandurangan et al.~\cite{pandurangan2017time} on $G$, where $m$ is the number of edges in $G$. This algorithm takes $\ot(D+\sqrt{n})$ rounds and requires $\ot(m)$ messages. 

Otherwise, the MST is computed with the following two steps. 

In the second step, the MST algorithm executes a Controlled-GHS procedure as described in~\cite{pandurangan2017time}. This procedure requires $\ot(m)$ messages and $\ot(n^{1-\delta})$ rounds to return at most $n^{\delta}$ MST-fragments with each has diameter $O(n^{1-\delta})$. Using KKT~\cite{king2015construction} can reduce the message complexity to $\ot(n)$ without worsening the running time. 

Step 3 merges the remaining $n^\delta$ MST fragments efficiently by the same procedure using $\log n$ iterations. 

It is clear that if we substitute the danner part in the first step, with our \dn\ result~\ref{thm:danner}, then keep steps 2 and 3 unchanged, we will get the following result.

\begin{theorem}\label{thm:MST}
    Given a connected graph $G$ and any $\delta\in[0,\frac{1}{4}]$, an MST of $G$ and be computed in $\ot(D+n^{1-2\delta})$ rounds, while using $\ot(n^{1+\delta})$ messages, w.h.p. 
\end{theorem}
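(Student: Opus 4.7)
The plan is to port the MST algorithm of Gmyr and Pandurangan~\cite{GmyrP18} into the \kttwo \congest model, substituting each invocation of their \ktone danner construction with the improved \kttwo danner of Theorem~\ref{thm:danner}. First I would call Theorem~\ref{thm:danner} with parameter $\delta$ to obtain a spanning subgraph $H$ of diameter $\ot(D + n^{1-2\delta})$ with $\ot(\min\{m,n^{1+\delta}\})$ edges in $\ot(n^{1-2\delta})$ rounds and $\ot(n^{1+\delta})$ messages. Then I would elect a leader in $H$ via Theorem~\ref{thm:leader} and grow a BFS tree rooted there; since $H$ has $\ot(D + n^{1-2\delta})$ diameter and $\ot(\min\{m,n^{1+\delta}\})$ edges, this costs $\ot(D + n^{1-2\delta})$ rounds and $\ot(\min\{m,n^{1+\delta}\})$ messages.

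The MST itself would be built by splitting on the density of $G$. If $m \le n^{1+\delta}$, I would directly invoke the singularly optimal MST algorithm of Pandurangan, Robinson, and Scquizzato~\cite{pandurangan2017time} on $G$; this uses $\ot(D + \sqrt{n})$ rounds and $\ot(m) = \ot(n^{1+\delta})$ messages. The restriction $\delta \le 1/4$ in the theorem statement is exactly the condition $\sqrt{n} \le n^{1-2\delta}$, which ensures the round bound collapses to the desired $\ot(D + n^{1-2\delta})$. When $m > n^{1+\delta}$, I would instead run a Controlled-GHS phase, tuned to halt once MST fragments reach diameter $\ot(n^{1-2\delta})$, leaving $\ot(n^{2\delta})$ fragments; by layering the KKT \fa\ primitive~\cite{king2015construction} on top of Controlled-GHS as in Gmyr-Pandurangan, this step takes $\ot(n^{1-2\delta})$ rounds and only $\ot(n)$ messages. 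The surviving $\ot(n^{2\delta})$ fragments would then be merged via an $O(\log n)$-iteration Bor\r{u}vka-style loop coordinated through the BFS tree on $H$, with each iteration costing $\ot(D + n^{1-2\delta})$ rounds (the depth of $H$) and $\ot(n^{1+\delta})$ messages (dominated by traffic along $H$ plus the \fa\ invocations).

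Summing the three parts yields the claimed $\ot(D + n^{1-2\delta})$ rounds and $\ot(\min\{m,n^{1+\delta}\})$ messages w.h.p. The part of the argument requiring most care is re-tuning Controlled-GHS so that it produces fragments of diameter $\ot(n^{1-2\delta})$ rather than the $\ot(n^{1-\delta})$ used by Gmyr-Pandurangan, while still preserving the $\ot(n)$ message bound obtained via \fa\ in the dense case; this entails re-examining the amortized message accounting of \cite{pandurangan2017time} at the tighter budget and verifying that the parameter choice is consistent with the constraint $\delta \le 1/4$. Once that is checked, the rest is a plug-in substitution of the improved danner together with routine bookkeeping across the two density cases.
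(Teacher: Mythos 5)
Your proposal is essentially the paper's proof: plug the \kttwo danner of Theorem~\ref{thm:danner} into the Gmyr--Pandurangan MST framework, handle the sparse case $m \le n^{1+\delta}$ by directly invoking the singularly optimal algorithm of Pandurangan et al.\ (which is where the constraint $\delta \le 1/4$, i.e.\ $\sqrt{n} \le n^{1-2\delta}$, comes from), and in the dense case run Controlled-GHS with KKT's \fa{}-style primitives followed by $O(\log n)$ Bor\r{u}vka-style merging phases coordinated over the danner. The only difference is that you explicitly re-tune Controlled-GHS to fragment diameter $\ot(n^{1-2\delta})$, a detail the paper glosses over by saying steps 2 and 3 are ``unchanged,'' so your write-up is, if anything, slightly more careful on the point needed to make the round bound come out to $\ot(D+n^{1-2\delta})$.
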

For $\delta=\frac{1}{4}$, we get the following corollary with optimal running time with fewer messages, compared to the danner algorithm in \ktone of Gmyr and Pandurangan~\cite{GmyrP18}.

\begin{corollary}
    There is an algorithm that can compute an MST in $\ot(D + \sqrt{n})$ rounds using $\ot(n^{1+\frac{1}{4}})$ messages in the \kttwo \congest model.
\end{corollary}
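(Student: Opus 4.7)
The plan is to obtain this corollary as a direct instantiation of Theorem~\ref{thm:MST} by setting the trade-off parameter $\delta = 1/4$. Since Theorem~\ref{thm:MST} is stated for all $\delta \in [0, 1/4]$, and $1/4$ lies at the boundary of this range, the theorem applies without modification. Substituting $\delta = 1/4$ into the round complexity bound gives $\ot(D + n^{1 - 2(1/4)}) = \ot(D + n^{1/2}) = \ot(D + \sqrt{n})$, and substituting into the message bound yields $\ot(n^{1 + 1/4})$, matching the corollary's claim.

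There is essentially no technical obstacle here, since the corollary is just a specialization of the main MST theorem. The only thing worth remarking on is \emph{why} this particular choice of $\delta$ is interesting: it is the largest $\delta$ for which Theorem~\ref{thm:MST} applies, and it is precisely the value that makes the round complexity match the known $\tilde{\Omega}(D + \sqrt{n})$ lower bound for MST (which, as noted in the introduction, continues to hold in the \kttwo \congest model). Thus the resulting algorithm is round-optimal up to polylogarithmic factors, while simultaneously achieving a strict improvement in message complexity over the $\ot(\min\{m, n^{3/2}\})$ bound obtained by Gmyr and Pandurangan in the \ktone setting. The proof itself is one sentence: invoke Theorem~\ref{thm:MST} with $\delta = 1/4$ and simplify the exponents.
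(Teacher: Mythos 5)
Your proposal is correct and matches the paper's own derivation: the corollary is obtained exactly by instantiating Theorem~\ref{thm:MST} with $\delta = 1/4$ and simplifying the exponents to get $\ot(D+\sqrt{n})$ rounds and $\ot(n^{1+1/4})$ messages. Your added remark on round-optimality is consistent with the discussion already given in the paper's introduction.
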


\subsection{Fast low-message {$(\Delta+1)$}-coloring}
\label{section:coloring}

In \cite{PaiPPRPODC2021}, the authors present a $(\Delta+1)$-coloring algorithm in the \ktone \congest model that 
uses $\ot(n^{1.5})$ messages, while running in $\ot(D + \sqrt{n})$ rounds, where $D$ is the graph diameter.
In this section, we present a $(\Delta+1)$-coloring algorithm in the \kttwo \congest model using our danner and \bc results. The specific result we prove is presented in Theorem~\ref{thm:coloring}.

We start with an overview of the randomized non-comparison-based $(\Delta+1)$-coloring algorithm from Pai et al. \cite{PaiPPRPODC2021}. This algorithm applied a simple graph partitioning technique that appeared in the $(\Delta+1)$-coloring algorithm in \cite{ChangFGUZPODC2019}. The Change et al. \cite{ChangFGUZPODC2019} graph partitioning algorithm is as follows. Consider a subgraph $G=(V,E)$ with maximum degree $\Delta$, where $n\ge |V|$. Each vertex $v\in V$ has a palette $\Psi(v)$ and let $k=\sqrt{\Delta}$. 
\begin{description}
\item[Partiton vertex set:] The partition $V=B_1\bigcup \cdots \bigcup B_k \bigcup L$ is defined as follows. Include each vertex $v\in V$ to the set set $L$ with probability $q=\Theta\left(\sqrt{\frac{\log n}{\Delta^{1/4}}}\right)$. Each remaining vertex joins one of $B_1,\hdots,B_k$ uniformly at random. Note that $P[v\in B_i]=p(1-q)$, where $p=\frac{1}{k}=\frac{1}{\sqrt{\Delta}}$.
\item[Partition palette:] Let $C = \bigcup_{v\in V} \Psi(v)$ denote the set of all colors. The partition $C = C_1\bigcup\cdots\bigcup C_k$ is defined by having each color $c\in C$ joins one of the $k$ sets uniformly at random. Note that $P[c\in C_i]=p=\frac{1}{k}=\frac{1}{\sqrt{\Delta}}$.  
\end{description}
Change et al. \cite{ChangFGUZPODC2019} then show that the output of the partitioning algorithm satisfies the following properties, w.h.p., assuming that $\Delta=\omega(\log^2 n)$.
\begin{description}
 \item[(i) Size of Each Part:]$|E(G[B_i])|=O(|V|)$, for each $i\in[k]$. Also, $|L|=O(q|V|)=O(\frac{\sqrt{\log n}}{\Delta^{1/4}})\cdot |V|.$
 \item[(ii) Available Colors in $B_i$: ]For each $i\in[k]$ and $v\in B_i$, let the number of available colors in $v$ in the subgraph $B_i$ is $g_i(v):= |\Psi(v)\cap C_i|$. Then $g_i(v)\ge \Delta_i + 1$, where $\Delta_i :=\max_{v\in B_i}\deg_{B_i}(v)$.
 \item[(iii) Available Colors in L:] For each $v\in L$, define $g_L(v):=|\Psi(v)|-(\deg_G(v)-\deg_L(v))$. Then $g_L(v)\ge \max\{\deg_L(v),\Delta_L-\Delta_L^{3/4}\}+1$ for each $v\in L$, where $\Delta_L:=\max_{v\in V}\deg_L(v)$. Note that $g_L(v)$ represents a lower bound on the number of available colors in the palette of $v$ after all of $B_1,\hdots,B_k$ have been colored.
 \item[(iv) Remaining Degrees:] The maximum degrees of $B_i$ and $L$ are $\deg_{B_i}(v)\leq \Delta_i=O(\sqrt{\Delta})$ and $\deg_L(v)\leq \Delta_L=O(q\Delta)=O(\frac{\sqrt{\log n}}{\Delta^{1/4}})\cdot \Delta$. For each vertex, we have $\deg_{B_i}(v)\leq \max\{O(\log n),O(1/\sqrt{\Delta})\cdot \deg(v)\}$ and $\deg_L(v)\leq \max\{O(\log n), O(q)\cdot \deg(v)\}$.
\end{description}

By utilizing the graph partitioning technique, a danner structure from \cite{GmyrP18}, and a randomized list coloring algorithm by Johansson \cite{johansson1999simple}, the Pai et al. \cite{PaiPPRPODC2021} $\ktone$ $(\Delta+1)$-coloring algorithm takes an $n$-vertex graph $G$ as input, with maximum degree $\Delta$ and diameter $D$ and produces a $(\Delta+1)$-list-coloring of $G$ with the following steps.
\begin{enumerate}
    \item Let $\delta=1/2$, construct a danner $H$, elect a leader broadcasting $O(\log^2 n)$ random bits. 
    \item Each node samples 3 $O(\log n)$-wise independent hash functions based on the $O(\log^2 n)$ random bits: (i) $h_L$ decides join $L$ or not, (ii) $h_{B_i}$ decides joins which $B_i$, and (iii) $h_c$ decides which color joins which $C_i$.
    \item In each $B_i$ in parallel, nodes run a randomized list coloring algorithm by Johansson. 
    \item The induced graph $G[L]$ can be checked if it includes $\ot(n)$ edges by the danner $H$. 
    \item If $G[L]$ includes $\ot(n)$ edges,  the list coloring algorithm by Johansson is executed on $G[L]$.
    \item Otherwise, we recursively run this algorithm on $G[L]$ with the same parameter $n$.
\end{enumerate}

We make the following changes to generalize this result to the \kttwo \congest{} model.

\begin{enumerate}
    \item Let $k=\Delta^{2/3}, q=\frac{\sqrt{\log n}}{\Delta^{1/6}}, \Delta=\omega(\log^3 n),$ in the Change et al. \cite{ChangFGUZPODC2019} graph partitioning algorithm.
    \item Replaces Step (1) from the above algorithm by our danner construction and \bc result.
    \item Using our danner to check Step (4). 
\end{enumerate}

With $k=\Delta^{2/3}, q=\frac{\sqrt{\log n}}{\Delta^{1/6}}$, we update properties (i) and (iv) as follows:
\begin{description}
    \item[updated(i) Size of Each Part:] $|E(G[B_i])|=O(|V|)$, for each $i\in[k]$. Also, $|L|=O(q|V|)=O(\frac{\sqrt{\log n}}{\Delta^{1/6}})\cdot |V|.$
    \item[updated(iv) Remaining Degrees:] The maximum degrees of $B_i$ and $L$ are $\deg_{B_i}(v)\leq \Delta_i=O(\Delta^{1/3})$ and $\deg_L(v)\leq \Delta_L=O(q\Delta)=O(\frac{\sqrt{\log n}}{\Delta^{1/6}})\cdot \Delta$. For each vertex, we have $\deg_{B_i}(v)\leq \max\{O(\log n),O(1/\Delta^{2/3})\cdot \deg(v)\}$ and $\deg_L(v)\leq \max\{O(\log n), O(q)\cdot \deg(v)\}$.
\end{description}
 \begin{lemma}
\label{lem:coloring}
    Updating $k=\Delta^{2/3}, q=\frac{\sqrt{\log n}}{\Delta^{1/6}}, \Delta=\omega(\log^3 n), p = \frac{1}{k}=\frac{1}{\Delta^{2/3}}$, the same proof of Lemma 3.1 in \cite{ChangFGUZPODC2019} goes through to show that properties updated(i), (ii), (iii), and updated(iv) hold w.h.p. 
\end{lemma}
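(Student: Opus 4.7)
The plan is to verify each of the four properties by applying the same Chernoff-based concentration arguments used in the proof of Lemma~3.1 of \cite{ChangFGUZPODC2019}, but with the new parameters $k = \Delta^{2/3}$, $p = 1/\Delta^{2/3}$, and $q = \sqrt{\log n}/\Delta^{1/6}$ substituted in. The assumption $\Delta = \omega(\log^3 n)$ is chosen precisely so that every relevant expectation in the argument is $\omega(\log n)$, which is the standard threshold at which Chernoff bounds yield w.h.p.~concentration after a union bound over the $n$ vertices (and over $k$ parts, where needed).

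First I would handle updated(i). The size $|L|$ is a sum of $|V|$ independent Bernoulli($q$) variables with expectation $q|V|$, so a direct Chernoff bound gives $|L| = O(q|V|)$ w.h.p.~For the edge count, $\e[|E(G[B_i])|] \le |E| \cdot p^2 (1-q)^2 \le (n\Delta/2)/\Delta^{4/3} = O(|V|/\Delta^{1/3}) = O(|V|)$, and the bounded-dependence concentration argument used in \cite{ChangFGUZPODC2019} (Chernoff on the number of same-bucket edges, or McDiarmid on the partition function) carries over unchanged. Next, for updated(iv), $\deg_{B_i}(v)$ is a sum of $\deg(v) \le \Delta$ independent indicators of probability at most $p = 1/\Delta^{2/3}$, so $\e[\deg_{B_i}(v)] \le \Delta^{1/3}$ and concentration requires $\Delta^{1/3} = \omega(\log n)$, which holds. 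Similarly $\e[\deg_L(v)] \le q\Delta = \sqrt{\log n}\,\Delta^{5/6}$, and the concentration requirement $q\Delta = \omega(\log n)$ reduces to $\Delta^{5/6} = \omega(\sqrt{\log n})$, which also holds.

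For (ii), $g_i(v) = |\Psi(v) \cap C_i|$ is a sum of $|\Psi(v)| \ge \Delta+1$ indicators (under the $O(\log n)$-wise independence sampling of colors into $C_i$), so $\e[g_i(v)] \ge (\Delta+1)/k \ge \Delta^{1/3}$; a Chernoff bound concentrates this to $(1-o(1))\Delta^{1/3}$, which exceeds $\Delta_i + 1 = O(\Delta^{1/3})$ since both sides scale as $\Delta^{1/3}$ but the leading constant on $g_i(v)$ is strictly larger (as in the original, one chooses the sampling constants to guarantee this gap). Property (iii) is analogous but more delicate: $g_L(v) = |\Psi(v)| - (\deg_G(v) - \deg_L(v))$ must dominate both $\deg_L(v)$ and $\Delta_L - \Delta_L^{3/4}$, which requires a two-sided concentration of $\deg_L(v)$ around $\e[\deg_L(v)] = q\deg(v)$, together with the observation that $\Delta_L / \Delta = O(q) = o(1)$.

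The main obstacle will be verifying (iii), specifically that the slack $\Delta_L^{3/4}$ remains large enough to absorb the Chernoff deviation $O(\sqrt{\Delta_L \log n})$ for $\deg_L(v)$; i.e., that $\Delta_L^{3/4} = \omega(\sqrt{\Delta_L \log n})$, or equivalently $\Delta_L = \omega(\log^2 n)$. Plugging in $\Delta_L = \Theta(q\Delta) = \Theta(\sqrt{\log n}\,\Delta^{5/6})$, this becomes $\Delta^{5/6} = \omega(\log^{3/2} n)$, i.e., $\Delta = \omega((\log n)^{9/5})$, which is comfortably implied by our hypothesis $\Delta = \omega(\log^3 n)$. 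Once this slack is checked, the remaining algebra mirrors that in \cite{ChangFGUZPODC2019} verbatim, and a final union bound over all $v \in V$ and $i \in [k]$ (which costs at most an $nk \le n\Delta^{2/3} = \operatorname{poly}(n)$ factor) establishes all four properties w.h.p.
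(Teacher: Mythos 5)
Your overall plan is exactly what the paper intends: the paper gives no detailed argument and simply asserts that the proof of Lemma~3.1 of \cite{ChangFGUZPODC2019} goes through with the new parameters, and your proposal is the natural fleshing-out of that claim (and your checks for updated(i), updated(iv), and (iii) are correct; in particular you correctly identify that $\Delta=\omega(\log^3 n)$ is forced by $\e[\deg_{B_i}(v)]\le \Delta p=\Delta^{1/3}$ needing to be $\omega(\log n)$).

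There is, however, a genuine gap in your treatment of property (ii), which with the new parameters is precisely the delicate point rather than the easy one. Your justification ``both sides scale as $\Delta^{1/3}$ but the leading constant on $g_i(v)$ is strictly larger'' is not correct: $\e[g_i(v)]\ge(\Delta+1)p$ while $\e[\deg_{B_i}(v)]\le \Delta p(1-q)$, so both expectations have leading term $\Delta/k=\Delta^{1/3}$ with the same constant $1$, and the only separation is the additive slack $q\Delta p\approx q\Delta/k$ coming from the $(1-q)$ factor, with $q=o(1)$. The proof of (ii) therefore hinges on this slack dominating the Chernoff deviations $\Theta(\sqrt{(\Delta/k)\log n})$ of both $g_i(v)$ and $\deg_{B_i}(u)$. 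In the original setting ($k=\sqrt{\Delta}$, $q=\Theta(\sqrt{\log n/\Delta^{1/4}})$) the slack exceeds the deviation by a $\mathrm{poly}(\Delta)$ factor, but with $k=\Delta^{2/3}$ and $q=\sqrt{\log n}/\Delta^{1/6}$ one gets slack $q\Delta/k=\sqrt{\log n}\,\Delta^{1/6}$ and deviation $\Theta(\Delta^{1/6}\sqrt{\log n})$ --- the same order. So (ii) is exactly the constraint that dictates the choice $q\ge c\sqrt{k\log n/\Delta}=c\sqrt{\log n}/\Delta^{1/6}$, and it only goes through if the constant hidden in $q$ (as in the $\Theta(\cdot)$ of \cite{ChangFGUZPODC2019}) is taken large enough relative to the Chernoff constants; this is the check your write-up needs to make explicit. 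By contrast, the step you single out as ``the main obstacle,'' namely (iii), is comfortably satisfied (your own computation shows $\Delta=\omega(\log^{9/5} n)$ suffices), so the emphasis of the verification is misplaced and, as written, the argument for (ii) would not survive scrutiny.
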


The following lemma is proved in \cite{PaiPPRPODC2021} and given Lemma \ref{lem:coloring}, it goes through without any changes.
\begin{lemma}
\label{lem: limited independence} 
    Properties updated(i), (ii), (iii), and updated(iv) hold w.h.p. when the partition is done with $O(\log n)$-wise independence. 
\end{lemma}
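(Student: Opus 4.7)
The plan is to revisit the proof of the updated version of Lemma 3.1 of Chang et al.~\cite{ChangFGUZPODC2019} (which establishes properties updated(i), (ii), (iii), (iv) under full independence) and verify that every probabilistic step can be carried out using only $O(\log n)$-wise independent hash functions. Concretely, I will identify each application of a concentration inequality in that proof and replace it with a bounded-independence analogue.

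First, I would isolate the random variables involved. In each property, the quantity of interest is a sum of indicator random variables: for (i), $|B_i| = \sum_{v} \mathbf{1}[h_L(v) = \mathrm{keep}] \cdot \mathbf{1}[h_{B_i}(v) = i]$ and $|E(G[B_i])|$ is a sum of edge indicators; for (ii), $g_i(v) = \sum_{c \in \Psi(v)} \mathbf{1}[h_c(c) = i]$; for (iii), $g_L(v)$ can similarly be written as a sum of indicators determined by $h_L$ and $h_c$; and for (iv), $\deg_{B_i}(v)$ and $\deg_L(v)$ are sums over the neighbors of $v$ of indicator variables that depend on $h_L$ and $h_{B_i}$. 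In each case, the expectation we need concentrated around is at least $\Omega(\log n)$ once we use the updated parameters $k = \Delta^{2/3}$, $q = \sqrt{\log n}/\Delta^{1/6}$, and $\Delta = \omega(\log^3 n)$.

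Second, for each such sum I would invoke a Chernoff-type bound for $k$-wise independent random variables, such as the Schmidt--Siegel--Srinivasan bound: if $X = \sum_i X_i$ is a sum of $k$-wise independent $\{0,1\}$-valued random variables with mean $\mu$, then for any $\epsilon \in (0,1)$,
\[
\Pr[|X - \mu| \ge \epsilon \mu] \;\le\; e^{-\Omega(\min\{k, \epsilon^2 \mu\})}.
\]
Taking $k = \Theta(\log n)$ with a sufficiently large constant, and noting that in each of (i)--(iv) the relevant mean $\mu$ is already $\omega(\log n)$ (which follows directly from the choice of parameters, e.g.~$\mathbb{E}[g_i(v)] \ge (\Delta+1)/k = \Omega(\Delta^{1/3})$ and $\mathbb{E}[\deg_L(v)] \ge \Omega(q \cdot \deg(v))$ when $\deg(v)$ is large), the right-hand side becomes at most $n^{-c}$ for any desired constant $c$. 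A union bound over the $O(n)$ vertices and $O(k) = O(\Delta^{2/3})$ parts then gives the w.h.p.~guarantee.

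The main obstacle is the edge-count concentration in property updated(i), $|E(G[B_i])| = O(|V|)$, because this is a sum of indicators $\mathbf{1}[u, v \in B_i]$ whose independence structure is more subtle: two edges sharing an endpoint are correlated through that shared vertex. Here I would follow the standard trick of splitting edges by endpoint and bounding $\sum_{v \in B_i} \deg_{B_i}(v)$ instead, which reduces the problem to a sum of per-vertex degree indicators; these are sums of indicators over the neighbors of $v$, each depending on only two hash evaluations (on $v$ and on its neighbor), so $O(\log n)$-wise independence of $h_L$ and $h_{B_i}$ translates directly into $O(\log n)$-wise independence of the summands conditional on $v$'s own hash values. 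Once this is in place, the rest of the proof mirrors the original full-independence argument step for step, and properties updated(i), (ii), (iii), and updated(iv) all hold w.h.p.
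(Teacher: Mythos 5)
Your proposal is correct and follows essentially the same route the paper relies on: the paper gives no in-line proof of this lemma, simply citing Pai et al.~\cite{PaiPPRPODC2021} and asserting that, given Lemma~\ref{lem:coloring}, their limited-independence argument goes through unchanged with the updated parameters --- and that argument is precisely the substitution of bounded-independence Chernoff bounds (Schmidt--Siegel--Srinivasan style) for the full-independence ones in the proof of Lemma~3.1 of~\cite{ChangFGUZPODC2019}, applied to the same indicator sums you identify, with the edge count in updated(i) reduced to per-vertex degree concentration as you describe. The only point to tighten is your blanket claim that the relevant means are $\omega(\log n)$ in all of (i)--(iv): for low-degree vertices $\mathbb{E}[\deg_{B_i}(v)]$ and $\mathbb{E}[\deg_L(v)]$ can be much smaller, but the $\max\{O(\log n),\cdot\}$ form of properties (iii) and updated(iv) absorbs exactly that case via an upper-tail bound at the $\Theta(\log n)$ scale, which $O(\log n)$-wise independence still supplies.
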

The following lemma is proved in \cite{ChangFGUZPODC2019} and given Lemma \ref{lem: limited independence}. It goes through if we update $\Delta=\log^{3+\alpha} n, \alpha=3+\beta$.
\begin{lemma}
    The algorithm makes $O(1)$ recursive calls w.h.p.
\end{lemma}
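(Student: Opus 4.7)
The plan is to show that the ``edge potential'' $\phi_t := |V_t| \cdot \Delta_t$, which upper-bounds $2|E(G[L_t])|$, drops below $\ot(n)$ after a constant number of recursions, at which point the stopping condition that $G[L]$ has $\ot(n)$ edges is met and the algorithm exits the recursion via Johansson's list-coloring. Here $|V_t|$ and $\Delta_t$ denote the number of vertices and the maximum degree of the graph handed to the $t$-th recursive call, with $|V_0| \le n$ and $\Delta_0 \le \Delta \le n$.

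By the updated properties (i) and (iv) from Lemmas~\ref{lem:coloring} and~\ref{lem: limited independence}, w.h.p.\ we have $|V_{t+1}| = O(q_t |V_t|)$ and $\Delta_{t+1} = O(q_t \Delta_t)$, where $q_t = \sqrt{\log n}/\Delta_t^{1/6}$. Multiplying these gives the coupled recursions $\phi_{t+1} = O(\log n / \Delta_t^{1/3}) \cdot \phi_t$ and $\Delta_{t+1} = O(\sqrt{\log n} \cdot \Delta_t^{5/6})$. Thus $\Delta_t$ shrinks by an exponent of $5/6$ per recursion, and the per-step shrinkage of $\phi_t$ becomes more aggressive while $\Delta_t$ is still large.

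I would then solve these coupled recursions in closed form. Setting $\alpha_t := \log_n \phi_t$ and $\beta_t := \log_n \Delta_t$, the recursions yield $\beta_t \le (5/6)^t \beta_0 + o(1)$ and $\alpha_{t+1} \le \alpha_t - \beta_t/3 + o(1)$. Summing the resulting geometric series gives
\[
\alpha_t \;\le\; \alpha_0 \;-\; 2\beta_0\bigl(1 - (5/6)^t\bigr) \;+\; o(1).
\]
In the worst case $\alpha_0 \le 2$ and $\beta_0 \le 1$, so $\alpha_t \le 2(5/6)^t + o(1)$, which drops below $1$ for any $t \ge \log 2 / \log(6/5) < 4$. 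Hence after at most four recursive calls $\phi_t \le \ot(n)$, so $|E(G[L_t])| = \ot(n)$ and the base case fires, terminating the recursion.

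The main obstacle I expect is ensuring that the hypothesis $\Delta_t = \omega(\log^3 n)$ required by Lemma~\ref{lem:coloring} (and hence properties (i)--(iv)) propagates at every level, since the concentration arguments break down once $\Delta_t$ falls below this threshold. Because $\beta_t$ stays a positive constant fraction of $\beta_0$ throughout the $O(1)$ recursions, as long as $\beta_0$ is bounded away from the regime where the base case already applies, we maintain $\Delta_t = \omega(\log^3 n)$ at every level; a union bound over the constant number of recursions then lifts each per-level w.h.p.\ guarantee to hold simultaneously, yielding the claimed $O(1)$ bound w.h.p.
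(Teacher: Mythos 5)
Your overall strategy---track the coupled decay of $|V_t|$ and $\Delta_t$ via the updated properties (i) and (iv), use the geometric decay of the degree exponent with ratio $5/6$, and conclude that roughly $\log_{6/5}2$ levels suffice---is the same as the paper's, which writes $\Delta=\log^{3+\alpha}n$, derives $\Delta_i=O\bigl((\log n)^{3+\alpha(5/6)^{i-1}}\bigr)$ and $|V_i|=n\cdot O\bigl((\log n)^{\alpha((5/6)^{i-1}-1)}\bigr)$, and solves $3-\alpha+2\alpha(5/6)^{i-1}\le 0$. However, your bookkeeping is at the wrong scale, and this is a genuine gap. Measuring exponents base $n$ and absorbing every polylogarithmic factor into ``$+o(1)$'' can only ever yield $\phi_t\le n^{1+o(1)}$ unless $\beta_0=\log_n\Delta=\Omega(1)$, i.e.\ unless $\Delta=n^{\Omega(1)}$. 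But the lemma must also cover $\Delta$ polylogarithmic (yet $\omega(\log^3 n)$) or sub-polynomial, e.g.\ $\Delta=\log^{100}n$ or $\Delta=2^{(\log\log n)^2}$: there $\beta_t=o(1)$, the per-step decrement $\beta_t/3$ is of the same order as your unquantified error terms, and the target $\alpha_t<1$ is unreachable, so you cannot certify that $G[L_t]$ has $\ot(n)$ edges. This is exactly why the paper tracks exponents of $\log n$ rather than of $n$, setting $\alpha=3+\beta$ and obtaining the explicit threshold $i\ge 1+\log_{6/5}\frac{2(3+\beta)}{\beta}=O(1)$ for $\beta=\Omega(1)$; note that this count is largest when $\beta$ is a small constant, i.e.\ precisely in the regime your normalization cannot resolve. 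Your closing caveat about preserving $\Delta_t=\omega(\log^3 n)$ acknowledges the neighborhood of the problem but does not repair it.

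Two smaller slips in the same computation: the guarantee $\Delta_{t+1}=O(\sqrt{\log n}\,\Delta_t^{5/6})$ is an \emph{upper} bound on the next maximum degree, so it gives $\beta_{t+1}\le\frac{5}{6}\beta_t+o(1)$, whereas your geometric-series step needs the \emph{lower} bound $\beta_s\ge(5/6)^s\beta_0$ to lower-bound the total decrement $\sum_s\beta_s/3$; that direction does not follow (the degree may drop faster). Also, substituting ``worst case $\alpha_0\le 2$, $\beta_0\le 1$'' is not a valid monotone worst case, since the bound $\alpha_0-2\beta_0(1-(5/6)^t)$ worsens as $\beta_0$ decreases with $\alpha_0$ fixed; you must invoke the coupling $\alpha_0\le 1+\beta_0$ (equivalently $\phi_0\le n\Delta$) to reach $\alpha_t\le 1+\beta_0\bigl(2(5/6)^t-1\bigr)$. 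Both of these are patchable, but only within the $\Delta=n^{\Omega(1)}$ regime; the scale issue above is the real obstruction to your approach proving the lemma as stated.
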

\begin{proof}
    Let $\Delta=\log^{3+\alpha} n, \alpha=3+\beta$, as the proof in \cite{ChangFGUZPODC2019}, we can show that \\$\Delta_i=O\left((\log n)^{3+\alpha(5/6)^{i-1}}\right)$, and $|V_i|=O(n/\Delta)\cdot {\Delta_i}=n\cdot O\left((\log n)^{\alpha((5/6)^{i-1}-1)}\right)$. Thus, given that $\alpha=\Omega(1)$ and $i=O(1)$, the condition of $\Delta_i=\omega(\log^3 n)$ for applying Lemma \ref{lem:coloring} must be met. In addition, the condition for $\Delta_i|V_i|=O(n)$ can be rewritten as
    $$3-\alpha+2\alpha\cdot (5/6)^{i-1}\le 0$$
    and
    $$i\ge 1+ \log_{6/5}{\frac{2(3+\beta)}{\beta}}$$
    Given $\alpha=\Omega(1)$, same as $\beta$, we have $1+ \log_{6/5}{\frac{2(3+\beta)}{\beta}}=O(1)$.
\end{proof}

\begin{theorem}
\label{thm:coloring}
There is an algorithm in the \kttwo \congest model that computes a $(\Delta+1)$-coloring using
$\ot(n^{1+\delta})$ messages in $\ot(D+n^{1-2\delta})$ rounds.
\end{theorem}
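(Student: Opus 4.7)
The plan is to adapt the Pai-Pandurangan-Peleg-Robinson (PPPR) $(\Delta+1)$-coloring algorithm from \cite{PaiPPRPODC2021}, replacing its \ktone{} danner/broadcast primitives (which are instantiated at $\delta = 1/2$) by our \kttwo{} counterparts from Theorems \ref{thm:danner} and \ref{thm:broadcast} at a generic $\delta \in [0,\tfrac{1}{2}]$, and tuning the Chang-Fischer-Ghaffari-Uitto-Zheng partitioning parameters to $k = \Delta^{2/3}$ and $q = \sqrt{\log n}/\Delta^{1/6}$. For the base case $\Delta = O(\log^3 n)$, a standard palette-selection procedure carried out on a spanning tree of the danner handles everything within budget, so I would focus on the main case $\Delta = \omega(\log^3 n)$.

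For the main case I would verify the six-step template of PPPR step by step. In Step (1), apply Theorem \ref{thm:danner} to construct a danner $H$ of diameter $\ot(D + n^{1-2\delta})$, and then use Theorem \ref{thm:broadcast} to elect a leader and broadcast $O(\log^2 n)$ shared random bits along $H$; the total cost is $\ot(D + n^{1-2\delta})$ rounds and $\ot(\min\{m, n^{1+\delta}\})$ messages. In Step (2), every vertex locally derives the three $O(\log n)$-wise independent hash functions from the shared random seed at zero communication cost, and by Lemma \ref{lem: limited independence} the updated properties (i), (ii), (iii), and updated(iv) all hold w.h.p. In Step (3), Johansson's algorithm is run in parallel on each $G[B_i]$; by updated property (i) the subgraphs are vertex-disjoint and their edge counts sum to $\ot(n)$, while updated(iv) gives $\Delta_i = O(\Delta^{1/3})$, so the total message cost is $\ot(n)$ and the round cost is $\ot(\log n)$ amplified by the danner diameter, i.e.\ $\ot(D + n^{1-2\delta})$. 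In Steps (4)-(5), I use the danner to convergecast the edge count $|E(G[L])|$ to the leader and to decide whether $G[L]$ is sparse; if so, Johansson is run on $G[L]$ at $\ot(n)$ message cost. In Step (6), otherwise, recurse on $G[L]$, whose maximum degree is a $\ot(\Delta^{-1/6})$-factor smaller by updated(iv).

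The main obstacle is controlling the recursion depth and confirming that each recursive layer still fits within the claimed complexity. Following the argument of the final lemma in Section \ref{section:coloring}, set $\Delta = \log^{3+\alpha} n$ with $\alpha = 3 + \beta$ and $\beta = \Omega(1)$; then each recursive call shrinks the maximum degree geometrically in the exponent, with $\Delta_i = O\!\left((\log n)^{3 + \alpha (5/6)^{i-1}}\right)$ and $|V_i| = n \cdot O\!\left((\log n)^{\alpha((5/6)^{i-1} - 1)}\right)$. Solving $\Delta_i |V_i| = O(n)$ yields $i \ge 1 + \log_{6/5}\!\bigl(2(3+\beta)/\beta\bigr) = O(1)$, so the algorithm bottoms out after a constant number of recursive levels, where the coloring can be finalized directly using $\ot(n)$ additional messages via the danner. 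Summing the cost of the $O(1)$ recursive layers, each of which spends $\ot(D + n^{1-2\delta})$ rounds and $\ot(\min\{m, n^{1+\delta}\})$ messages, gives the $\ot(D + n^{1-2\delta})$-round and $\ot(n^{1+\delta})$-message bounds asserted in Theorem \ref{thm:coloring}.
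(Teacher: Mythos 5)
Your proposal follows the paper's proof essentially step for step: the same six-step adaptation of the Pai et al.\ algorithm, with the \ktone danner/broadcast primitives replaced by Theorems \ref{thm:danner} and \ref{thm:broadcast} at generic $\delta$, the partition re-tuned to $k=\Delta^{2/3}$ and $q=\sqrt{\log n}/\Delta^{1/6}$, correctness delegated to Lemmas \ref{lem:coloring} and \ref{lem: limited independence}, and the $O(1)$ recursion-depth bound taken from the lemma preceding Theorem \ref{thm:coloring} with $\Delta=\log^{3+\alpha}n$. Your extra remarks (an explicit base case for $\Delta=O(\log^3 n)$, convergecasting $|E(G[L])|$ to the leader) are harmless additions.

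There is, however, one concrete gap, in the message accounting for Step (3). You assert that ``by updated property (i) the subgraphs are vertex-disjoint and their edge counts sum to $\ot(n)$.'' Updated property (i) only gives $|E(G[B_i])|=O(|V|)$ \emph{for each} $i\in[k]$ separately; summing it over the $k=\Delta^{2/3}$ parts yields $O(n\Delta^{2/3})$, not $\ot(n)$, so the inference is invalid — and $\ot(n)$ is in fact too strong. The paper bounds the total differently: a fixed edge $e=(u,v)$ lands in a given $G[B_i]$ with probability $\Delta^{-2/3}\cdot\Delta^{-2/3}=\Delta^{-4/3}$, so $\e[|E(G[B_i])|]\le n\Delta\cdot\Delta^{-4/3}=n/\Delta^{1/3}$, and by linearity the expected number of edges over all $k$ parts is at most $n\Delta^{1/3}$; Johansson's algorithm on the $B_i$'s is then charged $\ot(n\Delta^{1/3})$ messages (and only $O(\log n)$ rounds — no amplification by the danner diameter is needed, since the list coloring runs on $G[B_i]$ itself). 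You would need to replace your property-(i) argument with this (or an equivalent) computation, and note that fitting $\ot(n\Delta^{1/3})$ into the stated $\ot(n^{1+\delta})$ message budget is exactly the issue your $\ot(n)$ shortcut papers over, so it cannot simply be asserted away.
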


\begin{proof}
    In Step 1, we build a danner using $\ot(min\{m,n^{1+\delta}\})$ messages in $\ot(D + n^{1-2\delta})$ rounds. In addition, broadcasting $O(\log^2 n)$ random bits takes $\ot(D + n^{1-2\delta})$ rounds using$\ot(min\{m,n^{1+\delta}\})$ messsages with our fast, low-message \bc algorithm, see Theorem \ref{thm:broadcast}. Step 2 is local computation. 

    In Step 3, when we set $k=\Delta^{2/3}$, each vertex $v$ joins $B_i$ with probability $\frac{1}{\Delta^{2/3}}$, for each $i\in [k]$. Consider an edge $e= (u,v)$, we have $P[e\in G[B_i]]=\frac{1}{\Delta^{2/3}}\cdot \frac{1}{\Delta^{2/3}}=\frac{1}{\Delta^{4/3}}$. Let $G[B_i]^e$ denote the number of edges in $G[B_i]$, Thus, $\e[G[B_i]^e]\leq (n\Delta)\cdot \frac{1}{\Delta^{4/3}}=\frac{n}{\Delta^{1/3}}$ . Hence, by linearity of expectation, $\e[\sum_{i\in [k]} G[B_i]^e]=\sum_{i\in [k]}\e[ G[B_i]^e] \leq \frac{n}{\Delta^{1/3}}\cdot \Delta^{2/3}=n\Delta^{1/3}$. We run Johansson's randomized algorithm on each $G[B_i]$ in $O(\log n)$ rounds and takes $\ot(n\Delta^{1/3})$ messages. 

    Step 4 takes $\ot(min\{m,n^{1+\delta}\})$ messages in $\ot(D + n^{1-2\delta})$ rounds w.h.p. by Theorem~\ref{thm:danner}. 

    The above arguments guarantee that Steps 5 and 6 take $\ot(D + n^{1-2\delta})$ rounds and $\ot(min\{m,n^{1+\delta}\})$ messages, w.h.p. The theorem follows.
\end{proof}

This result demonstrates that our danner result has important implications, not just for global problems but for classical local problems as well.

\section{Conclusion}
\label{section:conclusion}

Our main contribution is showing that the round-message tradeoff shown by Gmyr and Pandurangan in the \ktone \congest model can be substantially improved in the \kttwo \congest model. Specifically, we show that there is a danner algorithm in the \kttwo \congest model that runs in $\ot(n^{1-2\delta})$ rounds, using $\ot(\min\{m, n^{1+\delta}\})$ messages w.h.p.
The \dn\ constructed by this algorithm has diameter $\ot(D+n^{1-2\delta})$ and $\ot(\min\{m, n^{1+\delta}\})$ edges w.h.p.
Similar to Gmyr and Pandurangan, we obtain implications of this danner construction for a variety of global problems, namely \bc, LE, ST, and MST, as well as the $(\Delta+1)$-coloring problem (which is a local problem).

Since we don't show lower bounds, it is not clear if the round-message tradeoff we show is optimal. This is open in the \ktone \congest model as well because we don't know if the tradeoff shown by Gmyr and Pandurangan is optimal.
One possible way to improve the tradeoff we show is to construct a constant-spanner, rather than a danner, which imposes a large additive factor $n^{1-2\delta}$ on the diameter of the subgraph. However, it is not clear if a constant-spanner can be constructed in the \kttwo \congest model in $\ot(n^{1-2\delta})$ rounds, using $\ot(\min\{m, n^{1+\delta}\})$ messages. This problem would be a natural follow-up to our current work.

\bibliographystyle{plain}
\thispagestyle{empty}
\bibliography{refs}

\newpage
\appendix
\section{Appendix}
\subsection{A bad example for the AGPV sparsification}
As mentioned earlier, AGPV showed an upper bound of $O(\min\{m, n^{1+c/\rho}\})$ on the message complexity of \bc on an $n$-vertex, $m$-edge graph in the \ktrho \congest model. But, this algorithm can take $\Omega(n)$ rounds in the worst case. The AGPV algorithm starts by performing a deterministic sparsification step that takes 0 rounds (i.e., only local computation is needed by this algorithm), and it reduces the number of edges in the graph to $O(\min\{m, n^{1+c/\rho}\})$.
More precisely, given an $n$-vertex, $m$-edge connected graph $G=(V, E)$ with distinct edges weights, the algorithm marks the heaviest edge in every cycle with length $2\rho$ or less for deletion and outputs a subgraph $\Bar{G} = (V,\Bar{E})$, where $\Bar{E}$ includes all the unmarked edges. Note that in the \ktrho \congest{} model, every node that belongs to a cycle $C$ of length $2\rho$ or less knows $C$, as part of its initial knowledge. So this algorithm requires no communication.
After the sparsification, the algorithm can use any \bc algorithm on $\Bar{G}$ that uses messages proportional to the number of edges in $\Bar{G}$.

In the following, for any positive integer $\rho$, we show a simple example of an unweighted $n$-vertex graph $G$ with $\Theta((n^2)$ edges and diameter $O(1)$. The vertices of the graph are assigned unique \texttt{ID}s and we assume that the weight of each edge $e = \{u, v\}$ is the tuple $(\texttt{ID}_u, \texttt{ID}_v)$, where $\texttt{ID}_u < \texttt{ID}_v$.
When the AGPV sparsification is applied to this graph, we get a graph  
$\Bar{G}$ with $\Omega(n^{1+\frac{1}{2\rho + 1}})$ edges and diameter $\Omega(n)$.  

Let $\rho$ be a positive integer and let $n$ be a multiple of 4. Construct an $n$-vertex undirected graph $G = (V, E)$ as follows. See Figure \ref{fig:badexample}.
\begin{enumerate}
    \item Partition the vertex set $V$ into four parts, $A_i$, $i = 1, 2, 3, 4$, where $$A_i = \left\{\frac{n}{4}\cdot (i-1) + 1, \frac{n}{4}\cdot (i-1) + 2, \ldots,i\cdot\frac{n}{4}\right\}$$ 
    For each vertex $i \in V$, we use $i$ as the \texttt{ID} of the vertex for the AGPV algorithm.   
    \item Add a complete bipartite graph $K_{\frac{n}{4},\frac{n}{4}}$ between $A_1$ and $A_3$.
    \item Add the path $(1, 2, 3, \ldots, \frac{n}{4})$ on the vertices in $A_1$.
    \item Add a perfect matching between vertices in $A_3$ and $A_4$ and between vertices in $A_4$ and $A_2$, as shown in Figure \ref{fig:badexample}.
    \item Add $\frac{1}{4}\cdot (\frac{n}{4})^{1+\frac{1}{2\rho+1}}$ edges between vertices in $A_2$ such that the graph $G[A_2]$ induced by $A_2$ has girth at least $2\rho+1$. Such an edge set exists via a simple probabilistic method proof (see Theorem 6.6 in \cite{mitzenmacher2017probability}).
\end{enumerate}
It is clear that $G$ has $\Theta(n^2)$ edges and diameter 6. 

Now suppose that we apply the AGPV sparsification algorithm \cite{awerbuch1990trade} to $G$, where the weight of each edge $\{u, v\}$ is the ordered pair of \texttt{ID}s of $u$ and $v$ with the lower ID appearing first. Let $\Bar{G}=(V,\Bar{E})$ be the resulting graph.
\begin{lemma}
The graph $\Bar{G}$ has $\Theta(n^{1+\frac{1}{2\rho+1}})$ edges and diameter $\Omega(n)$.
\end{lemma}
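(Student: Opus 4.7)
\medskip
\noindent\textbf{Proof proposal.} My plan is to determine exactly which edges of $G$ survive the AGPV sparsification by analyzing, for each edge type, the short cycles through it and identifying the lex-heaviest edge. Since edge weights are lexicographic pairs $(\id_u,\id_v)$ with $\id_u<\id_v$ and the ID ranges satisfy $A_1<A_2<A_3<A_4$, the first coordinate of the weight alone determines a coarse ordering: path edges and bipartite $A_1$--$A_3$ edges have first coordinate in $A_1$ (smallest), $A_2$-internal edges and $A_4$--$A_2$ matching edges have it in $A_2$, and $A_3$--$A_4$ matching edges have it in $A_3$ (largest). I will assume $\rho\ge 2$ so that triangles are detected.

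The first step is to show that almost all $K_{n/4,n/4}$ edges are killed. For any bipartite edge $\{u,v\}$ with $u\in A_1\setminus\{1\}$, $v\in A_3$, the triangle $\{u-1,u,v\}$ has length $3\le 2\rho$ and its three edge weights $(u-1,u),(u-1,v),(u,v)$ are lex-ordered with $(u,v)$ largest, so $\{u,v\}$ is marked. Conversely, every $\{1,v\}$ survives: in any cycle $C$ through $\{1,v\}$, vertex $1$ has degree $2$ in $C$, so at least $|C|-2\ge 1$ edges of $C$ are not incident to $1$, and any such edge has first coordinate $\ge 2>1$, hence weight strictly greater than $(1,v)$. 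Thus $(1,v)$ is never the heaviest edge of any cycle, and exactly $n/4$ bipartite edges (all incident to $1$) remain in $\bar G$.

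Next I verify that path edges and $A_2$-internal edges all survive. For a path edge $\{i,i+1\}$, any cycle through it must leave vertex $i+1$ along some edge whose first weight coordinate is $i+1$ (either the next path edge or a bipartite edge to $A_3$), which strictly exceeds $i$; hence $\{i,i+1\}$ is never heaviest. For an $A_2$-internal edge $\{a,b\}$, any cycle lying entirely in $G[A_2]$ has length $\ge 2\rho+1$ by construction and is not detected; any cycle exiting $A_2$ must do so through an $A_4$--$A_2$ matching edge, then an $A_3$--$A_4$ matching edge (the only non-$A_2$ edge at an $A_4$-vertex), then through the bipartite graph, and eventually return the same way, so it necessarily contains at least one $A_3$--$A_4$ matching edge. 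This matching edge has first weight coordinate in $A_3$ (at least $n/2+1$), strictly larger than that of $\{a,b\}$, so $\{a,b\}$ is never heaviest. Matching edges need only be treated coarsely: even if all $n/2$ of them survive they contribute $O(n)$ and are subdominant. Summing, $|\bar E| = n/4 + (n/4-1) + O(n) + \Theta(n^{1+1/(2\rho+1)}) = \Theta(n^{1+1/(2\rho+1)})$, as claimed.

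For the diameter bound, observe that vertex $n/4\in A_1$ is incident only to one path edge $\{n/4-1,n/4\}$ and $n/4$ bipartite edges $\{n/4,v\}$ in $G$; by the first observation all of its bipartite edges are marked, so in $\bar G$ its unique neighbor is $n/4-1$. Any path from $n/4$ to a vertex outside $A_1$ must therefore traverse the whole $A_1$-path down to vertex $1$ (the only surviving gateway out of $A_1$), which already uses $n/4-1$ edges. Hence $\mathrm{diam}(\bar G)\ge n/4-1=\Omega(n)$. The main obstacle I anticipate is verifying the $A_2$-internal-edge argument for large $\rho$, where cycles may contain several $A_2$-internal edges or multiple external excursions; the saving observation is that every exiting cycle is forced to use an $A_3$--$A_4$ matching edge (whose first weight coordinate lies in $A_3$, strictly above any $A_2$-edge's), so the argument is uniform in $\rho$.
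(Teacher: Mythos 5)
Your proof is correct and follows essentially the same route as the paper's: bipartite edges not incident to vertex $1$ are killed by the triangles $(i-1,i,j)$, while every short cycle through an $A_2$-internal edge is forced through an $A_3$--$A_4$ matching edge of strictly larger weight, so $G[A_2]$ survives and the edge count and the $\Omega(n)$ distance from vertex $n/4$ follow. The extra verifications you supply (survival of the path edges and of the edges at vertex $1$, and the $\rho\ge 2$ caveat, since for $\rho=1$ there are no cycles of length at most $2$ and the sparsification is vacuous) merely make explicit steps the paper leaves implicit.
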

\begin{proof}
Every edge $\{i, j\}$, $i \in A_1 \setminus \{1\}$, $j \in A_3$, belongs to the 3-cycle $(i-1, i, j)$ and $\{i, j\}$ is the heaviest edge in this 3-cycle. Therefore, every such edge $\{i, j\}$ is marked for deletion. In addition, every cycle $C$ with length $2\rho$ or less, including edges from $A_2$, must include edges between $A_3$ and $A_4$. For such a cycle, we will mark the edge between $A_3$ and $A_4$ with the heaviest weight for deletion. For such a cycle, this edge always exists because the IDs of the two endpoints of this edge are greater than any ID from $A_2$. In this way, we will not mark any edges in $A_2$ for deletion. $\Bar{G}$ has $\Theta(n^{1+\frac{1}{2\rho+1}})$ edges because $G[A_2]$ includes these many edges as (5). It is clear that $\Bar{G}$ has diameter $\Omega(n)$ because of the shortest path between vertex $\frac{n}{4}$ and vertex 1.     
\end{proof}

\begin{figure}[t]
    \centering
    \includegraphics[width=1\linewidth]{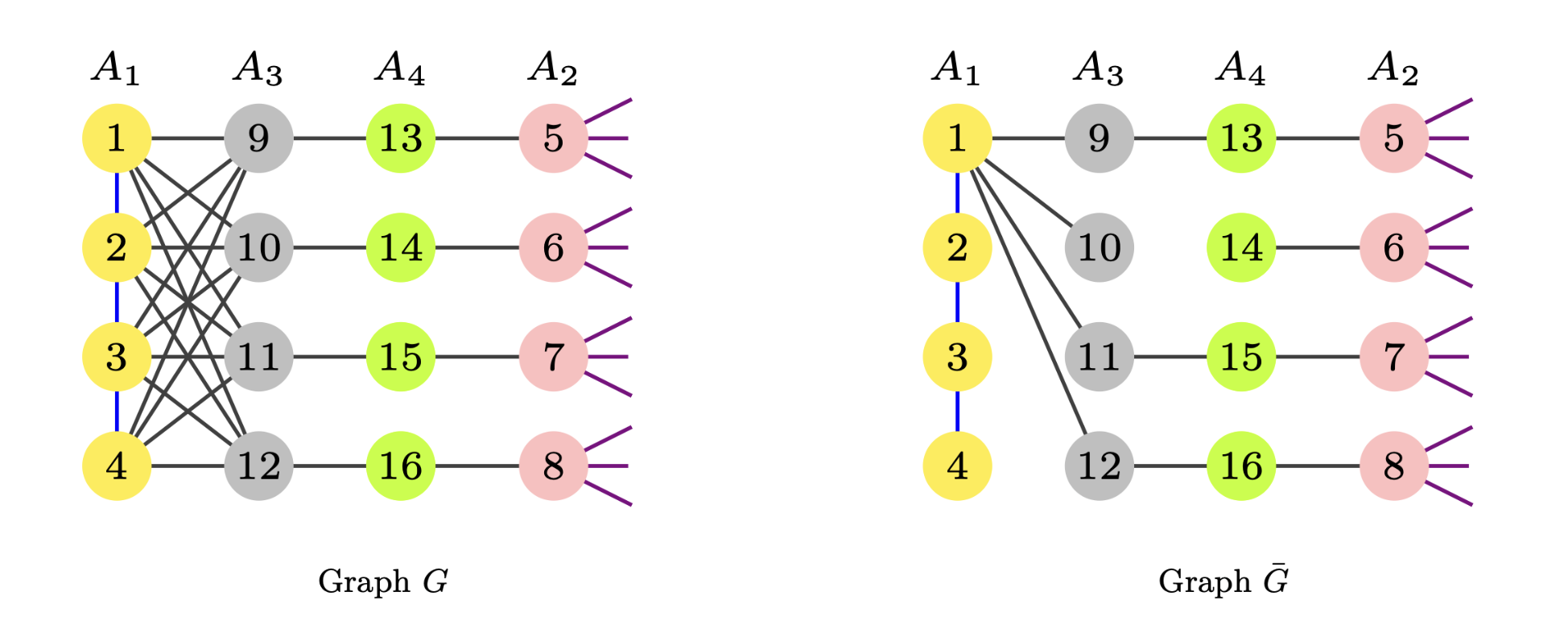}
    \caption{Given $n=16$, construct a graph $G$ with the 5 steps, where $A_1=\{1,2,3,4\},A_2=\{5,6,7,8\}, A_3=\{9,10,11,12\}\text{ and } A_4=\{13,14,15,16\}$. Given such a $G$, we have $\Bar{G}$ after applying AGPV sparsification algorithm \cite{awerbuch1990trade}.}
    \label{fig:badexample}
\end{figure}

\end{document}